\documentclass[11pt]{article}
\usepackage[T1]{fontenc}
\usepackage[utf8]{inputenc}
\usepackage{lmodern}
\usepackage[margin=1in]{geometry}
\usepackage{amsmath,amssymb,amsthm,mathtools,booktabs,graphicx}
\usepackage[authoryear,round]{natbib}
\usepackage{xcolor,comment}
\usepackage{setspace,microtype,hyperref}
\hypersetup{hidelinks,
 pdftitle={Drift Inference for Unit-Root Galton--Watson Processes with Immigration},pdfauthor={Yang Lu}}
\newtheorem{assumption}{Assumption}
\newtheorem{theorem}{Theorem}
\newtheorem{proposition}{Proposition}
\newtheorem{lemma}{Lemma}
\newtheorem{corollary}{Corollary}
\theoremstyle{remark}
\newtheorem{remark}{Remark}
\newcommand{\E}{\mathbb E}
\newcommand{\PP}{\mathbb P}
\newcommand{\1}{\mathbf 1}
\newcommand{\F}{\mathcal F}
\newcommand{\Op}{O_p}
\newcommand{\op}{o_p}
\newcommand{\R}{\mathbb R}
\DeclareMathOperator{\Var}{Var}
\title{{Drift Inference for Unit-Root Galton--Watson Processes with Immigration}}
\author{Yang Lu\\
\small Department of Mathematics and Statistics, Concordia University\\
\small Montreal, Canada\\
\small \texttt{yang.lu@concordia.ca}}
\date{}
\begin{document}
\maketitle
\begin{abstract}
We study inference on the drift of a critical Galton--Watson process
with immigration, a count time series with a unit root. Climate change
motivates such nonstationary models for weather-related disaster counts.
The drift is the expected increase in the count per period.
Ordinary least squares is inconsistent for the drift, so we study
state-weighted least squares, giving greater weight to observations
following small counts, where conditional variance is lower.
Under strict recurrence, we apply null-recurrent regenerative limit
theory to obtain a polynomial convergence rate and a standard normal
studentized limit. Our main contribution is the recurrence boundary,
where we establish a logarithmic convergence rate and a parameter-free
non-Gaussian studentized limit.
Estimating the optimal weights yields the same first-order limiting
distribution as knowing them.
Simulations show lower root mean squared error than {time-weighted
least squares with weights $1/t$}, and improved state-weighted
confidence-interval coverage when the unit root is imposed.
\end{abstract}
\noindent\textbf{Keywords:} branching process with immigration; weighted least squares;
null recurrence; Mittag--Leffler distribution; estimated offsets; count time series{; unit root; drift}.

\noindent\textbf{MSC2020:} 62M10; 60J80; 60F17.

\section{Introduction}

The frequency of weather-related disasters may change with the climate,
and nonstationary count time series models have been proposed to
forecast it \citep{PeiLu2025}. Such forecasts matter for insurance
pricing and other long-horizon decisions \citep{AnneLuYuZhou2026}. A
count model with a unit root and positive drift lets expected counts
grow and lets unexpected changes in the count persist in all future
forecasts. This paper studies inference on the drift, the expected
change in the count per period.

Nonstationary count time series have also been studied through unstable
and nearly unstable INAR processes
\citep{BarczyIspanyPap2011,BarczyIspanyPap2014}, INGARCH processes
\citep{Michel2020}, and INARCH processes \citep{BarretoSouzaChan2024}.
We study drift inference in a Galton--Watson process with immigration,
modelling the count $X_t$ as the size of generation $t$:
\[
 X_t=\sum_{i=1}^{X_{t-1}}Z_{i,t}+\epsilon_t,\qquad t\geq1,
\]
where $Z_{i,t}$ is the latent number of offspring of individual $i$
in generation $t-1$. The variables $Z_{i,t}$ are iid nonnegative
integers with mean one and variance $\sigma^2>0$. The latent
immigration counts $\epsilon_t$ are iid nonnegative integers with
mean $\mu>0$ and variance $b>0$, and are independent of the offspring
array. We assume $\PP(\epsilon_1=0)>0$, so that the process can return
to zero, and take $X_0$ to be a fixed finite nonnegative integer.
These conditions are maintained throughout. The offspring mean of one
makes the process critical: the conditional mean and variance are
\[
 \E(X_t\mid X_{t-1})=X_{t-1}+\mu,\qquad
 \Var(X_t\mid X_{t-1})=\sigma^2X_{t-1}+b,
\]
so the conditional mean equation has a unit root and drift $\mu$.

This framework includes Poisson and negative binomial count
autoregressions. Poisson offspring and immigration give
\[
 X_t\mid X_{t-1}=x\sim\operatorname{Poisson}(x+\mu),
\]
the unit-root version of the Poisson INARCH(1) model studied by
\citet{Weiss2010INARCH}. Write $\operatorname{NB}(r,p)$ for the negative
binomial distribution with shape $r$ and success probability $p$.
Geometric offspring with mean one and negative binomial immigration
with mean $\mu$, both with success probability $1/2$, give
\[
 X_t\mid X_{t-1}=x\sim\operatorname{NB}(x+\mu,1/2).
\]
This is the unit-root version of the negative binomial
autoregressive (NBAR) model of \citet{GourierouxLu2019}. Here
$\sigma^2=2$ and $b=2\mu$, so the conditional variance is twice the
conditional mean, allowing for overdispersion.

Across these models, the $h$-step conditional mean forecast is
$X_t+h\mu$. An error in estimating $\mu$ is therefore multiplied by $h$
in the forecast mean, making reliable inference on $\mu$ especially
important at longer horizons. To see why estimating $\mu$ is difficult, write
$X_t=mX_{t-1}+\mu+W_t$, where $W_t=X_t-X_{t-1}-\mu$ has conditional
mean zero given the observations through $t-1$: $\mu$ is the intercept,
and the offspring mean $m=1$ is the
slope, playing the role of an autoregressive coefficient.
Because the conditional variance $\sigma^2X_{t-1}+b$ grows with the
level, the natural estimators fail: the average increment
$(X_n-X_0)/n$ has a nondegenerate limiting distribution
\citep[Theorem~2.1]{WeiWinnicki1989}, and the {ordinary least squares}
estimator of $\mu$ is inconsistent \citep[p.~1759]{WeiWinnicki1990}.
\citet{WeiWinnicki1990} therefore proposed weighted least squares (WLS)
with weights $1/(1+X_{t-1})$, which depend on the state $X_{t-1}$; we
call this state-weighted WLS. They proved that the joint estimators of
$m$ and $\mu$ are consistent throughout the critical case.

Inference is harder, because the limiting distribution depends on the
long-run behavior of $X_t$, which is governed by $\tau:=2\mu/\sigma^2$.
Under a mild moment condition, $X_t\to\infty$ almost surely when
$\tau>1$ (transience), whereas for $0<\tau\leq1$ the process returns to
zero infinitely often, with infinite mean return time (null recurrence)
\citep[Corollary~2.11]{WeiWinnicki1989}.
{Within null recurrence, we call $0<\tau<1$ strict recurrence
and $\tau=1$ the recurrence boundary.}
{Transience, strict recurrence, and the recurrence boundary
are the three regimes.}
\citet{WeiWinnicki1990}
obtained the limiting distribution of the {drift estimator} only for
$\tau>1$; their Remark~2.6 states that it is unknown for $\tau\leq1$,
and \citet{Wei1991} also treats only the transient case.

\citet{Lu2026} established consistent and asymptotically normal drift
estimation using the deterministic weights $1/t$, which we call
time-weighted WLS. The convergence rate is $\sqrt{\log n}$ throughout
the critical case. Time weighting reflects the increase in expected
conditional variance over time. Under recurrence, however, the process
repeatedly returns to small counts, where increments have low
conditional variance. Two observations with the same preceding count
have the same conditional variance, even if one occurs much later.
State weighting gives them equal weight, whereas time weighting gives
less weight to the later observation.

%
{
We extend the state weights of \citet{WeiWinnicki1990} to
$(a+X_{t-1})^{-1}$ for any fixed offset $a>0$. The optimal offset
$a_*=b/\sigma^2$ makes the weights inversely proportional to the
conditional variance. The drift estimator converges at rate
$n^{(1-\tau)/2}$ under strict recurrence and $\log n$ at the recurrence
boundary, both faster than the $\sqrt{\log n}$ rate of time-weighted WLS.

Under strict recurrence, the chain is $\beta$-null recurrent with
$\beta=1-\tau$ in the sense of \citet{KarlsenTjostheim2001}, since its
return probabilities satisfy
$\PP_0(X_j=0)\sim p_0j^{-\tau}$. Gaussian limits with Mittag--Leffler
mixing, and the cancellation of this mixing under random normalization,
are established features of null-recurrent limit theory.
\citet{Hopfner1990} studies this mechanism for birth-and-death processes,
including a continuous-time critical branching process with immigration;
\citet{HopfnerJacodLadelli1990} develop the corresponding mixed-normal
statistical experiments. General regenerative treatments are given by
\citet{KarlsenTjostheim2001,HopfnerLocherbach2003}.
Theorem~\ref{thm:strict} applies this theory to the discrete-time
Galton--Watson model: we verify the cycle moment conditions, identify
the constants $h(a)$ and $v(a)$ and the normalization involving the
return-probability constant $p_0$ of \citet{WeiWinnicki1989}, and justify
joint slope estimation and residual studentization.

The main contribution is Theorem~\ref{thm:boundary} at $\tau=1$.
Here the positive-index, finite-cycle-variance theory used under strict
recurrence does not apply. We establish the joint limit of the weighted
innovations and accumulated weight, obtaining a non-Gaussian studentized
limit that is the same for every model and offset. Together with the
strict-recurrence application, this completes the recurrent
distributional theory left open by \citet{WeiWinnicki1990}.
We compute the boundary pivot's quantiles and show that normal critical
values would give one-sided tests of the wrong size there.
}

We also establish first-order equivalence under recurrence when the
optimal offset is estimated and when the unit root is imposed in the
fit. These equivalences leave room for substantial finite-sample
differences. Simulations under the Poisson and negative binomial models
show lower root mean squared error for state-weighted {WLS} than for
time-weighted {WLS}.
Estimating the offset improves empirical
mean squared error in most designs, while imposing the unit root
improves state-weighted interval coverage, with little change in
interval {length}.

Section~\ref{sec:main} derives the fixed-offset limits, including those
of the constrained estimator (Section~\ref{sec:maintained}), and
compares the three regimes. Section~\ref{sec:adaptive} treats estimated
offsets. Sections~\ref{sec:simulations} and~\ref{sec:application}
present simulations and empirical illustrations using global
flood-disaster counts and UK COVID-19 mortality counts. Proofs are gathered in the appendices.
\section{Fixed-offset estimators and their limits}
\label{sec:main}

Our limit theory needs slightly more than the finite variances
assumed in Section~1.

\begin{assumption}\label{ass:model}
For some $\delta>0$, $\E Z^{2+\delta}+\E\epsilon^{2+\delta}<\infty$,
where $Z$ and $\epsilon$ have the offspring and immigration
distributions.
\end{assumption}

The extra $\delta$ serves two purposes. First, it implies
$\E(Z^2\log^+Z)<\infty$, under which the chain is null recurrent for
$\tau\leq1$ and its probability of being at zero at time $n$ decays
like $n^{-\tau}$ \citep[Lemma~2.10 and Corollary~2.11]{WeiWinnicki1989}.
Second, it supplies the higher-moment bounds used in our martingale
limit theorems. \citet[Theorem~2.5]{WeiWinnicki1990} impose the same
condition for their transient limit theory, and a moment
condition of order $p>2$ also underlies the affine near-unit-root
theory of \citet[Assumption~5]{AnneLuYuZhou2026}.

Given observations $X_0,\ldots,X_n$ and a fixed offset $a>0$, let
$w_a(x)=(a+x)^{-1}$. Following
\citet{WeiWinnicki1990}, whose estimator is the case $a=1$, define
the estimators of the slope $m$ and drift $\mu$ by
\begin{equation}
 (\widehat m_n(a),\widehat\mu_n(a))
 =\operatorname*{argmin}_{(m,\nu)\in\R^2}
 \sum_{t=1}^n w_a(X_{t-1})(X_t-mX_{t-1}-\nu)^2,
 \label{eq:estimator}
\end{equation}
If the design is singular, set the slope to one and estimate the drift
by the weighted mean of the increments.

{
Write $V(x)=\sigma^2x+b$ for the conditional variance following a count
$x$. The innovation $W_t$ has conditional mean zero and variance
$V(X_{t-1})$ given the past. All results below are
driven by three sums:
\begin{equation}
 \begin{aligned}
 H_n(a)&=\sum_{t=1}^n w_a(X_{t-1}),\\
 M_n(a)&=\sum_{t=1}^n w_a(X_{t-1})W_t,\\
 Q_n(a)&=\sum_{t=1}^n w_a(X_{t-1})^2V(X_{t-1}).
 \end{aligned}
 \label{eq:scoreclock}
\end{equation}
These are the accumulated weight, the weighted sum of innovations, and
the sum of the conditional variances of its terms. Since
$\sum_{t=1}^n w_a(X_{t-1})X_{t-1}=n-aH_n(a)$, the intercept normal
equation gives
\begin{equation}
 \widehat\mu_n(a)-\mu
 =\frac{M_n(a)}{H_n(a)}
 -(\widehat m_n(a)-1)\left\{\frac{n}{H_n(a)}-a\right\}.
 \label{eq:reduction}
\end{equation}
The first term is a weighted average of the innovations. The second is
the cost of estimating the slope, which is asymptotically negligible
when $m=1$. The limit of the estimator is therefore governed by the
joint behavior of $(H_n(a),M_n(a),Q_n(a))$, which we establish for each
recurrent regime below. For inference, define the fitted residuals
$e_t(a)=X_t-\widehat m_n(a)X_{t-1}-\widehat\mu_n(a)$, suppressing their
dependence on the sample size $n$. Replace $Q_n(a)$ by its residual analogue
\begin{equation}
 \widehat Q_n(a)=\sum_{t=1}^n w_a(X_{t-1})^2e_t(a)^2.
 \label{eq:feasibleQ}
\end{equation}
The results give the limit of the studentized error
$H_n(a)\{\widehat\mu_n(a)-\mu\}/\sqrt{\widehat Q_n(a)}$.
}

\subsection{Strict recurrence}
\label{sec:strict}

Throughout this subsection, $0<\tau<1$, and we write $\alpha=1-\tau$.
For the recurrent regimes, write $\PP_x$ and $\E_x$ for probability and
expectation when $X_0=x$. Let
$p_0=\lim_{n\to\infty}n^\tau\PP_0(X_n=0)\in(0,\infty)$ be the
constant in the return-probability decay noted after
Assumption~\ref{ass:model}; the limit exists for every $0<\tau\leq1$
\citep[Lemma~2.10]{WeiWinnicki1989}.
Let $\pi$ be the invariant measure of $(X_t)$ generated by return cycles
to zero and normalized by $\pi_0=1$: thus $\pi_j$ is the expected number
of visits to state $j$ between successive visits to zero.

For each fixed $a>0$, define
\begin{equation}
 h(a)=\sum_{j\geq0}\pi_jw_a(j),\qquad
 v(a)=\sum_{j\geq0}\pi_jw_a(j)^2V(j).
 \label{eq:hv}
\end{equation}
Thus $h(a)$ is the expected accumulated weight over one return cycle to
zero, and $v(a)$ is the expected sum of the conditional variances of
the weighted innovations over that cycle. Although $\pi$ has infinite
total mass, $h(a)$ and $v(a)$ are finite and positive under
Assumption~\ref{ass:model}.
The studentized interval below requires no estimates of $p_0$, $h(a)$,
or $v(a)$.

Let $E_\alpha$ have the Mittag--Leffler distribution of order $\alpha$,
characterized by its Laplace transform
\[
 \E e^{-\lambda E_\alpha}
   =\sum_{k\geq0}\frac{(-\lambda)^k}{\Gamma(1+\alpha k)},
 \qquad\lambda\geq0.
\]

Under strict recurrence, returns to small counts determine the growth
of the accumulated weight. Their random frequency produces the random
scale in the estimator's limit. The joint limit below shows why this
scale cancels after studentization.

\begin{theorem}[{Drift estimator under strict recurrence}]\label{thm:strict}
Under Assumption~\ref{ass:model}, if $0<\tau<1$, then for every fixed $a>0$,
\begin{equation}
 n^{\alpha/2}\{\widehat\mu_n(a)-\mu\}
 \Rightarrow
 \sqrt{\frac{v(a)}{p_0\Gamma(\alpha)h(a)^2E_\alpha}}\,\mathcal Z,
 \label{eq:strict-fixed}
\end{equation}
where $\mathcal Z$ is standard normal and independent of $E_\alpha$.
Moreover,
\begin{equation}
 \frac{H_n(a)\{\widehat\mu_n(a)-\mu\}}{\sqrt{\widehat Q_n(a)}}
 \Rightarrow N(0,1).
 \label{eq:strict-pivot}
\end{equation}
\end{theorem}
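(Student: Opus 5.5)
The proof runs through regenerative decomposition at the state zero, followed by a martingale central limit theorem with a random clock.

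Let $0=\rho_0<\rho_1<\cdots$ be the successive visits to zero; if $X_0\neq0$, the initial segment is finite a.s. and negligible. Let $N_n$ be the number of completed cycles by time $n$. The return-probability asymptotics $\PP_0(X_n=0)\sim p_0n^{-\tau}$ give a renewal sequence $u_n$ with regularly varying partial sums $\sum_{j\le n}u_j\sim p_0n^{\alpha}/\alpha$. A Karamata Tauberian argument then shows that the cycle length lies in the domain of attraction of a positive $\alpha$-stable law. The classical renewal result (Feller; Karlsen--Tj{\o}stheim, $\beta=\alpha$) then yields
\[
N_n/n^{\alpha}\Rightarrow p_0\Gamma(\alpha)E_\alpha .
\]
The constant comes from $\E e^{-\lambda N_n/n^\alpha}$ together with the Mittag--Leffler Laplace transform; the normalisation should be checked against the case $\E_0 N_n\sim p_0 n^\alpha/\alpha$, and $\E E_\alpha=1/\Gamma(1+\alpha)$ gives the match.

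Next I apply the ratio-limit/occupation theorem for additive functionals of the cycles. The per-cycle sums $\xi_k=\sum_{t\in\text{cycle }k}w_a(X_{t-1})$ and $\zeta_k=\sum w_a^2V$ are iid, with means $h(a)$ and $v(a)$. Finiteness of $h,v$ follows from $w_a(j)\le C/(1+j)$ and $w_a(j)^2V(j)\le C/(1+j)$, so it suffices to show $\sum_j\pi_j/(1+j)<\infty$. For this I would use the Wei--Winnicki estimate $\pi_j\asymp j^{\tau-1}$, which is the stationary-measure tail for the critical branching with immigration chain (the diffusion approximation has speed density $x^{\tau-1}$). Then $\sum_j j^{\tau-2}<\infty$ because $\tau<1$. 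This identification of the tail of $\pi$ is the main technical step I expect: I would derive it from the generating-function representation in \citet{WeiWinnicki1989}, or more cheaply bound $\pi([0,K])$ by the expected occupation of $[0,K]$ before return. By the strong law over cycles plus the renewal limit, $H_n/n^\alpha\Rightarrow h(a)p_0\Gamma(\alpha)E_\alpha$ and $Q_n/n^\alpha\Rightarrow v(a)p_0\Gamma(\alpha)E_\alpha$, jointly with the $N_n$ convergence. These limits hold in fact in the stronger sense $H_n/N_n\to h$ a.s.

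For $M_n$ I use a martingale CLT with mixing limit, such as Hall--Heyde Theorem 3.2, or the stable convergence for regenerative martingales of \citet{HopfnerLocherbach2003}. The conditional variance $Q_n/n^\alpha$ converges to a random limit that is measurable with respect to the cycle structure. The conditional Lindeberg condition follows from Assumption~\ref{ass:model}: $\E(|W_t|^{2+\delta}\mid X_{t-1})\le C(1+X_{t-1})^{1+\delta/2}$, so $\sum w_a^{2+\delta}\E|W_t|^{2+\delta}\le C\sum(1+X_{t-1})^{-1-\delta/2}=O_p(n^\alpha)=o_p(n^{\alpha(1+\delta/2)})$. Stable convergence then gives $(n^{-\alpha/2}M_n,n^{-\alpha}H_n,n^{-\alpha}Q_n)\Rightarrow(\sqrt{v p_0\Gamma(\alpha)E_\alpha}\,\mathcal Z,\,hp_0\Gamma(\alpha)E_\alpha,\,vp_0\Gamma(\alpha)E_\alpha)$ with $\mathcal Z$ independent of $E_\alpha$. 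Dividing gives \eqref{eq:strict-fixed} for the first term of \eqref{eq:reduction}.

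Two negligibility steps remain.

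\textbf{Slope term.} The remainder in \eqref{eq:reduction} must satisfy $(\widehat m_n-1)n/H_n=o_p(n^{-\alpha/2})$. I solve the $2\times2$ normal equations: $\widehat m_n-1$ equals a weighted covariance ratio whose denominator is $\sum w_aX_{t-1}^2-(\sum w_aX_{t-1})^2/H_n\asymp\sum X_{t-1}$, which is of order $n^2$ since $X_t\approx n$ scaled by a squared Bessel process. Its numerator is $\sum w_aX_{t-1}W_t-(n/H_n)M_n$, with the first part $O_p(n)$ and the second $O_p(n^{1-\alpha/2})$. Hence $\widehat m_n-1=O_p(n^{-1})$ and the remainder is $O_p(n^{-\alpha})=o_p(n^{-\alpha/2})$.

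\textbf{Studentization.} I need $\widehat Q_n/Q_n\to_p1$. Write $e_t=W_t-(\widehat m-1)X_{t-1}-(\widehat\mu-\mu)$. The cross terms are controlled by the rates above, for example $\sum w_a^2X_{t-1}^2(\widehat m-1)^2=O_p(n\cdot n^{-2})$. For $\sum w_a^2(W_t^2-V(X_{t-1}))=o_p(n^\alpha)$ I use a martingale weak law, with truncation via the $2+\delta$ moments. Then the pivot equals $M_n/\sqrt{Q_n}+o_p(1)\Rightarrow\mathcal Z$, because the mixing variable $E_\alpha$ cancels under stable convergence.
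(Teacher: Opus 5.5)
Your overall architecture agrees with the paper's. It uses regeneration at zero and $N_n/n^\alpha\Rightarrow p_0\Gamma(\alpha)E_\alpha$ (your normalization check is right). It also uses the cycle ratio limits $H_n/N_n\to h(a)$ and $Q_n/N_n\to v(a)$, the slope bound from the normal equations, and residual studentization. Your martingale weak law for $\sum w_a^2(W_t^2-V)$ works, e.g.\ via the $L^q$ bound with $q=p/2>1$, since $\sum_t\E(1+X_{t-1})^{-q}=O(n^\alpha)$.

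\textbf{The gap.} The step that carries the theorem is missing: the joint limit of $(M_n,H_n,Q_n)$ with $\mathcal Z$ independent of $E_\alpha$.
\begin{itemize}
\item Hall--Heyde's Theorem 3.2, and its conditional-variance version, require $Q_n/n^\alpha$ to converge \emph{in probability} to a random variable on the original space. Here it converges only in law.
\item It cannot converge in probability. Such a limit of a bounded-increment additive functional would be shift-invariant. It would then be almost surely constant, because the invariant $\sigma$-field of a chain recurrent to the atom zero is trivial. That contradicts the nondegenerate Mittag--Leffler limit.
\item A conditional Lindeberg condition plus a bracket converging in law does not determine the joint law of score and clock. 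The paper's boundary theorem shows this concretely. At $\tau=1$ your Lindeberg computation still goes through, and $Q_n/(\log n)^2$ converges in law to the random variable $T$. Yet $M_n/\sqrt{Q_n}\Rightarrow(1-U)/\sqrt T$, which is not normal since $\PP(1-U<0)=e^{-1}$.
\end{itemize}
So independence of $\mathcal Z$ and $E_\alpha$ must be proved; it does not follow from a generic stable martingale CLT.

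\textbf{What is needed.} The missing argument works at the level of cycles. Let $\mathcal L_i$ be the $i$th cycle length, $\mathcal C_i$ the score accumulated over it, and $\kappa_n=p_0\Gamma(\alpha)n^\alpha$. You need:
\begin{itemize}
\item $\E\mathcal C_1=0$ and $\E\mathcal C_1^2=v(a)$. Because $\E\mathcal L_1=\infty$, this requires stopping at $\mathcal L_1\wedge j_0$ and passing to an $L^2$ limit.
\item Donsker's theorem for $\kappa_n^{-1/2}\sum_{i\le\lfloor\kappa_nu\rfloor}\mathcal C_i$, and the stable-subordinator limit for $n^{-1}\sum_{i\le\lfloor\kappa_nu\rfloor}\mathcal L_i$.
\item Resnick and Greenwood (1979, Theorem 3). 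This gives joint convergence with \emph{independent} Brownian and stable limits, even though $\mathcal L_i$ and $\mathcal C_i$ are dependent within a cycle.
\item Inversion of the subordinator to obtain $N_n/\kappa_n\Rightarrow E_\alpha$ jointly, then evaluation of the continuous Brownian limit at that random index.
\item A Doob maximal bound to discard the incomplete last cycle.
\end{itemize}
Your alternative citation of H\"opfner and L\"ocherbach points to the right result, and it is the route the paper follows. But these hypotheses are exactly what has to be verified, and your proposal does not verify them. The finite cycle variance $v(a)$ is precisely what fails at $\tau=1$, where $v(1)=\infty$.

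\textbf{Minor point.} The claim $\pi_j\asymp j^{\tau-1}$ may fail pointwise, for example under lattice restrictions, and it is more than you need. You only need $\pi(w_1)<\infty$. This follows from the harmonic bound $\E H_n(1)=O(n^\alpha)$, the ratio limit $H_n(1)/N_n\to\pi(w_1)$, and the positive limit of $N_n/n^\alpha$.
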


\begin{proof}
See Appendices~\ref{app:strict-cycle} and~\ref{app:residual-studentization}.
\end{proof}

For $0<\eta<1$, let $z_p$ denote the $p$-quantile of the standard normal
distribution. The studentized limit gives a confidence interval with
nominal coverage $1-\eta$:
\begin{equation}
 \widehat\mu_n(a)\ \pm\ z_{1-\eta/2}
       \frac{\sqrt{\widehat Q_n(a)}}{H_n(a)}.
 \label{eq:strictCI}
\end{equation}


\citet[Theorem~2.5]{WeiWinnicki1990} give
$n(\widehat m_n(1)-1)=\Op(1)$ throughout the critical case. We extend
this bound to every fixed $a>0$. Combined with the occupation-time
limits of \citet[Theorem~2.18 and Corollary~2.20]{WeiWinnicki1989},
which describe the growth of sums such as $H_n(a)$ and $Q_n(a)$, this makes
the slope contribution in \eqref{eq:reduction} $\Op(n^{-\alpha})$,
negligible on the {scale $n^{-\alpha/2}$ of the estimation error}.
The occupation
limits of \citet{WeiWinnicki1989} give the joint behavior of
$(H_n(a)/n^\alpha,Q_n(a)/n^\alpha)$ up to a positive multiplicative
constant, but not the limit of $M_n(a)/H_n(a)$, which requires the joint
law of its numerator and denominator. Following the regenerative
approach of \citet[Section~4]{HopfnerLocherbach2003}, we obtain this
joint limit from the discrete-time cycle sums using
\citet[Theorem~3]{ResnickGreenwood1979} and the model-specific
verification in Appendix~\ref{app:strict-cycle}, with
$\mathcal Z$ independent of $E_\alpha$:
\begin{equation}
 \left(\frac{H_n(a)}{n^\alpha},\frac{M_n(a)}{n^{\alpha/2}},
       \frac{Q_n(a)}{n^\alpha}\right)
 \Rightarrow
 \left(p_0\Gamma(\alpha)h(a)E_\alpha,
       \sqrt{p_0\Gamma(\alpha)v(a)E_\alpha}\,\mathcal Z,
       p_0\Gamma(\alpha)v(a)E_\alpha\right).
 \label{eq:strict-joint}
\end{equation}
Dividing the second component by the first gives
\eqref{eq:strict-fixed}. For the studentized limit, replace $Q_n(a)$ in
the third component by $\widehat Q_n(a)$, which is justified by
$\widehat Q_n(a)/Q_n(a)\to_p1$
(Appendix~\ref{app:residual-studentization}) and the negligible slope
contribution; dividing the second component by the square root of the
third then gives \eqref{eq:strict-pivot}.

The mixing variable has a direct interpretation. If $N_n$ denotes the
number of returns to zero by time $n$, then
$N_n/\{p_0\Gamma(\alpha)n^\alpha\}\Rightarrow E_\alpha$, and each return
cycle contributes on average $h(a)$ to $H_n(a)$ and $v(a)$ to $Q_n(a)$
(Appendices~\ref{app:slope-reduction} and~\ref{app:strict-cycle}). The
information about $\mu$ therefore grows with the number of returns to
zero, which is random and of order $n^\alpha$, rather than with $n$;
this explains both the rate $n^{\alpha/2}$ and the random scale of the
limit, which studentization removes.

\subsection{The recurrence boundary}
\label{sec:boundary}

At the recurrence boundary $\tau=1$, studentization still removes the
unknown model constants, but the limiting distribution remains
non-Gaussian. Its form depends on the joint behavior of the accumulated
weight and weighted innovations. To define this limit, let $B$ be
standard Brownian motion and set $R(s)=B(s)+\ell(s)$ with
$\ell(s)=-\min_{0\leq u\leq s}B(u)$. Then $R$ is reflected Brownian
motion and $\ell$ is the minimal nondecreasing correction keeping it
nonnegative. Define
\[
 T=\inf\{s:R(s)=1\},\qquad U=\ell(T).
\]

\begin{theorem}[{Drift estimator at the recurrence boundary}]\label{thm:boundary}
Under Assumption~\ref{ass:model}, if $\tau=1$, then for every fixed $a>0$,
\begin{equation}
 \log n\{\widehat\mu_n(a)-\mu\}\Rightarrow\sigma^2\frac{1-U}{T}.
 \label{eq:boundary-est}
\end{equation}
Moreover,
\begin{equation}
 \frac{H_n(a)\{\widehat\mu_n(a)-\mu\}}{\sqrt{\widehat Q_n(a)}}
 \Rightarrow\frac{1-U}{\sqrt T}.
 \label{eq:boundary-pivot}
\end{equation}
\end{theorem}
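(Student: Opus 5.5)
The plan is to start from the reduction \eqref{eq:reduction} and handle its three ingredients separately. First, the slope term is negligible: using $n(\widehat m_n(a)-1)=\Op(1)$ (the extension of \citet[Theorem~2.5]{WeiWinnicki1990} to every fixed $a$), it is of order $\Op(1)\cdot\{1/H_n(a)+a/n\}$. Second, $H_n(a)$ should grow like $\log n$, so this term is $o_p(1/\log n)$ and drops out of both \eqref{eq:boundary-est} and \eqref{eq:boundary-pivot}. Third, the heart of the matter is the joint limit
\[
 \left(\frac{H_n(a)}{\log n},\frac{M_n(a)}{\log n},\frac{Q_n(a)}{\log n}\right)
 \Rightarrow\left(\frac{T}{\sigma^2},\frac{1-U}{1},\frac{T}{\sigma^2}\cdot\sigma^2\right)
\]
or rather its correctly normalized version. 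Since $w_a(x)^2V(x)\sim\sigma^2/x$ and $w_a(x)\sim 1/x$ for large $x$, and the small states contribute $\Op(1)$ (at $\tau=1$ the expected number of returns to any fixed state by time $n$ is only of order $\log n$ with a random finite-cycle structure, so I will bound these contributions separately), I expect $Q_n(a)=\sigma^2H_n(a)+\op(\log n)$. The target is then $H_n/\log n\Rightarrow T/\sigma^2$ and $M_n/\log n\Rightarrow 1-U$, which gives
\[
 \frac{M_n}{H_n}\cdot\log n\Rightarrow \sigma^2(1-U)/T
\]
and, for the pivot, $M_n/\sqrt{Q_n}\cdot$ (the $\log n$ factors cancel only if $M_n$ is of order $\sqrt{\log n}$). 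Checking scales against \eqref{eq:boundary-pivot}: $H_n(\widehat\mu-\mu)/\sqrt{\widehat Q_n}=M_n/\sqrt{Q_n}$, so I need $M_n/\sqrt{\log n}\Rightarrow \sigma(1-U)$ and $H_n/\sqrt{\log n}$... this is inconsistent with \eqref{eq:boundary-est}. So the correct normalization must be a logarithmic time change: I will work on the scale $s=\log t/\log n$ and use the Lamperti-type fact that at $\tau=1$ the process $\log(1+X_{\lfloor n^s\rfloor})/\log n$ behaves like a reflected Brownian motion in the clock $\sum w\,V$.

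Concretely, the key step is a change of variables via Itô/Taylor expansion of $F(x)=\log(a+x)$. One has
\[
 \log(a+X_t)-\log(a+X_{t-1})=w_a(X_{t-1})(\mu+W_t)-\tfrac12 w_a(X_{t-1})^2W_t^2+\text{remainder},
\]
and summing gives
\[
 \log(a+X_n)-\log a=\mu H_n+M_n-\tfrac12 Q_n+\op(\cdot).
\]
Since $Q_n\approx\sigma^2H_n$ and $\mu=\sigma^2/2$ at $\tau=1$, the drift terms cancel to leading order, leaving $\log(a+X_n)\approx M_n$ plus the accumulated boundary correction near zero. Hence $Y_t:=\log(a+X_t)$ behaves, in the intrinsic clock $Q$, like Brownian motion with reflection at the bottom: $M$ is a martingale with quadratic variation $Q$, so by a martingale FCLT with time change $Q$ (Dambis–Dubins–Schwarz in discrete form), $M/\log n$ evaluated at intrinsic time $Q/(\log n)^2$ converges to $B$. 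The reflection appears because $Y$ stays $\geq\log a$; the small-state contributions generate the local-time term $\ell$. The sample-size scale enters via $Y_n=\log(a+X_n)\approx\log n$ (since $X_n$ is of order $n$), so the event "time $n$ reached" corresponds to $R$ first hitting level $1$ when space is rescaled by $\log n$, i.e., at intrinsic time $T$. This gives $Q_n/(\log n)^2\Rightarrow T$, $M_n/\log n\Rightarrow B(T)=R(T)-\ell(T)=1-U$, and $H_n/(\log n)^2\Rightarrow T/\sigma^2$.

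Then $\log n(\widehat\mu-\mu)\approx \log n\,M_n/H_n\Rightarrow\sigma^2(1-U)/T$, and $M_n/\sqrt{Q_n}\Rightarrow(1-U)/\sqrt T$. Finally, $\widehat Q_n/Q_n\to_p1$ is obtained as in Appendix~\ref{app:residual-studentization}, controlling $\sum w^2(W_t^2-V)$ by the $2+\delta$ moments and the estimation error in the residuals. The main obstacle is the identification of the hitting-time relation: I need to show rigorously that $\log(a+X_n)/\log n\to1$ in the sense of the time-changed functional limit, i.e., that real time $n$ corresponds to $Y$ first reaching height $\log n$ up to $\op(\log n)$, with the excursion structure (overshoot and downward fluctuations after the first hitting) negligible at this scale. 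I would attempt this via the Wei–Winnicki occupation and growth estimates, as follows. First, show $\max_{t\le n}Y_t/\log n\to$ the running level and $Y_n/\log n\to_p1$. Second, show the Skorokhod reflection map is continuous, so the local-time term equals $\op(\log n)+$ the minimum of the martingale part.
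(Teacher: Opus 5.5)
\textbf{The gap: the compensator of $\log(a+X_t)$ is never identified with $\ell$.} You treat as routine the claim that, after summing the expansion of $\log(a+X_t)$, the drift terms cancel and what remains is $M_n(a)$ plus a boundary correction that becomes $\ell$. In fact the cancellation removes only the $(\log n)^2$ part. At $\mu=\sigma^2/2$ one has exactly
\[
 \Bigl|\mu H_n(a)-\tfrac12Q_n(a)\Bigr|=\tfrac12|\sigma^2a-b|\sum_{t=1}^nw_a(X_{t-1})^2\geq\frac{|\sigma^2a-b|}{2a^2}\sum_{t=1}^n\1_{\{X_{t-1}=0\}}.
\]
The number of visits to zero is of exact order $\log n$ (in the paper's notation, $L_n/\log n\Rightarrow U\sim\mathrm{Exp}(1)$). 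Two of your claims therefore fail:
\begin{itemize}
\item $Q_n(a)=\sigma^2H_n(a)+\op(\log n)$ is false unless $a=b/\sigma^2$.
\item Small states do not contribute $\Op(1)$. By your own count of returns they contribute $\Op(\log n)$, which is the scale of $M_n(a)$.
\end{itemize}
For every $a$ the Taylor remainder is of this same order, because at $X_{t-1}=0$ the increment $w_a(0)(\mu+W_t)=\epsilon_t/a$ is not small.

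Together these terms form the compensator $A_n=\sum_{t\le n}d(X_{t-1})$ of $\log(a+X_n)$, where $d(x)=\E_x\log(a+X_1)-\log(a+x)$. This $d$ satisfies $|d(x)|\le C(1+x)^{-q}$ for some $q>1$. It is nonzero at essentially every state, with a sign depending on $a$, $b$, $\sigma^2$ and higher moments. So $A$ is neither nondecreasing nor carried by the boundary, and continuity of the Skorokhod map does not apply. Unless you show $A/\log n\to\ell$, you cannot identify $\log(a+X)/\log n$ in intrinsic time with $R=B+\ell$. You then obtain neither $Q_n(a)/(\log n)^2\Rightarrow T$ nor $M_n(a)/\log n\Rightarrow B(T)=1-U$.

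\textbf{How the paper handles it, and how your route could be repaired.} Lemma~\ref{lem:potential} replaces $\log(a+x)$ by an exact potential $g$, built from $u_j$ and $\omega_j=-\log F_j(0)$, with $Pg-g=p_0^{-1}\1_{\{x=0\}}$ and $g(x)\sim\log x$. Consequences:
\begin{itemize}
\item Its compensator $L_n$ is $p_0^{-1}$ times the number of visits to zero.
\item \eqref{eq:skorokhod-error} is a deterministic discrete reflection.
\item The WLS score is compared with $Z^g$ only through brackets, where the small-state discrepancy is $\Op(\log n)=\op((\log n)^2)$.
\end{itemize}
Your route could instead be repaired by a cycle ratio-limit argument. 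Since $(1+x)^{-q}$ with $q>1$ is $\pi$-integrable at $\tau=1$, one gets $|A_k-cN_k|\le\epsilon N_k+C_\epsilon$ for all $k$ almost surely, with $c=\pi(d)$. Then $cN_k$ plays the role of $L_k$, though you would still have to check that $c>0$.

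\textbf{The step you flag as the main obstacle is manageable as you suggest.} Once the reflected limit in intrinsic time is established, two facts suffice:
\begin{itemize}
\item $\max_{t\le n}\log(a+X_t)=\log n+\Op(1)$, from the Wei--Winnicki functional limit;
\item tightness of $Q_n(a)/(\log n)^2$, from \eqref{eq:harmonic-bound}.
\end{itemize}
Together they pin the intrinsic time at observation $n$ to $T$, because $R$ exceeds level one immediately after $T$. This is shorter than the two-sided bounds in Lemma~\ref{lem:operational-time}.
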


\begin{proof}
See Appendices~\ref{app:boundary-potential}, \ref{app:boundary-offsets}, and~\ref{app:residual-studentization}.
\end{proof}

At the boundary, \citet[Remark~2.24]{WeiWinnicki1989} conjectured a
$(\log n)^2$ order for $H_n(1)$, conditional on a harmonic-moment
conjecture of \citet[p.~15]{Pakes1975}, and noted that ``the
convergence type of the possible limiting result is still unknown.''
\citet{LiZhang2019,LiZhang2021} proved Pakes's conjecture; Theorem~1.1
of \citet{LiZhang2021} implies $\E H_n(a)\sim(\log n)^2/\sigma^2$ for
every fixed $a>0$.
Building on Li and Zhang's estimates for the probability generating
function of $X_n$, we establish the joint distributional limit
\begin{equation}
 \left(\frac{\sigma^2H_n(a)}{(\log n)^2},\frac{M_n(a)}{\log n}\right)
 \Rightarrow(T,1-U).
 \label{eq:boundary-joint}
\end{equation}
This identifies the dependence between the accumulated weight and the
weighted sum of innovations, yielding the {limit of the drift
estimator and the} feasible pivot in Theorem~\ref{thm:boundary}.


The pivot has the same limiting distribution for all model parameters
and fixed offsets, providing a common calibration for boundary inference.
Let $q_p$ denote
the $p$-quantile of $(1-U)/\sqrt T$. The corresponding interval with
nominal coverage $1-\eta$ is
\[
 \left[\widehat\mu_n(a)-q_{1-\eta/2}\frac{\sqrt{\widehat Q_n(a)}}{H_n(a)},
       \widehat\mu_n(a)-q_{\eta/2}\frac{\sqrt{\widehat Q_n(a)}}{H_n(a)}\right],
\]
For a nominal 95\% interval, numerical evaluation gives
$q_{0.025}\approx-1.703$ and $q_{0.975}\approx2.214$, compared with
$\pm1.960$ for the standard normal distribution.
The boundary interval is thus close to the normal interval shifted down
by about a quarter of a standard error. The main difference is in the
tails: with normal critical values, nominal 5\% tests against larger
and smaller values of $\mu$ have asymptotic sizes of about 9\% and
3\%, respectively.
Appendix~\ref{app:boundary-quantiles} describes the computation;
Section~\ref{sim:boundary} examines the resulting inference.

\subsection{The constrained estimator under a maintained unit root}
\label{sec:maintained}

The preceding results concern a joint fit, which estimates both the
slope and drift. A constrained fit imposes the unit root $m=1$
and estimates only $\mu$. We use these terms for both weighting
methods. The constrained state-weighted estimator is
\begin{equation}
 \widehat\mu_{n,c}(a)
 =\frac{\sum_{t=1}^n w_a(X_{t-1})(X_t-X_{t-1})}{H_n(a)},
 \qquad \widehat\mu_{n,c}(a)-\mu=\frac{M_n(a)}{H_n(a)}.
 \label{eq:constrained-estimator}
\end{equation}
Estimate $Q_n(a)$ from the constrained residuals by
\begin{equation}
 \widehat Q_{n,c}(a)=\sum_{t=1}^n w_a(X_{t-1})^2
       \{X_t-X_{t-1}-\widehat\mu_{n,c}(a)\}^2.
 \label{eq:constrained-bracket}
\end{equation}

\begin{corollary}[Maintained-unit-root inference]
\label{cor:maintained}
Under Assumption~\ref{ass:model}, for every fixed $a>0$ and
$0<\tau\leq1$, the {limits of the drift estimator and the} residual-pivot conclusions of
Theorems~\ref{thm:strict} and \ref{thm:boundary} hold with
$(\widehat\mu_n(a),\widehat Q_n(a))$ replaced by
$(\widehat\mu_{n,c}(a),\widehat Q_{n,c}(a))$. In particular,
$\widehat Q_{n,c}(a)/Q_n(a)\to_p1$.
\end{corollary}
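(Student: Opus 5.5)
The plan is to reuse the joint limits \eqref{eq:strict-joint} and \eqref{eq:boundary-joint}. These concern only $(H_n(a),M_n(a),Q_n(a))$, which are defined without any fitted slope. Since $\widehat\mu_{n,c}(a)-\mu=M_n(a)/H_n(a)$ exactly, the constrained estimator is the first term of \eqref{eq:reduction} with the slope correction removed.

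The estimator limits therefore follow directly from the continuous mapping theorem. Under strict recurrence, $n^{\alpha/2}M_n(a)/H_n(a)$ converges to the ratio of the first two components of \eqref{eq:strict-joint}, which reproduces \eqref{eq:strict-fixed}. At the boundary, $\log n\, M_n(a)/H_n(a)=\{M_n(a)/\log n\}/\{H_n(a)/(\log n)^2\}$ converges to $\sigma^2(1-U)/T$. The denominator limits $E_\alpha$ and $T$ are almost surely positive, so the map is continuous at the limit.

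For the pivots it then suffices to show $\widehat Q_{n,c}(a)/Q_n(a)\to_p1$. Given this, the ratio $M_n(a)/\sqrt{Q_n(a)}$ has the same limit as in the theorems. Under strict recurrence that limit is $N(0,1)$ by \eqref{eq:strict-joint}. At the boundary we need the joint behaviour of $Q_n(a)$ together with \eqref{eq:boundary-joint}. The relation I would use is $Q_n(a)/H_n(a)\to_p\sigma^2$: since $w_a(x)^2V(x)=\sigma^2w_a(x)+(b-a\sigma^2)w_a(x)^2$ and $\sum_t w_a(X_{t-1})^2=O_p(\log n)$ (Appendix \ref{app:boundary-potential}), this is $o_p(H_n(a))$. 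This relation is implicit in the proof of Theorem~\ref{thm:boundary}. It gives $M_n(a)/\sqrt{Q_n(a)}\Rightarrow(1-U)/\sqrt T$.

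To prove the ratio, write the constrained residual as $W_t-\Delta_n$ with $\Delta_n=M_n(a)/H_n(a)$. Expanding the square,
\[
\widehat Q_{n,c}(a)=\sum_t w_a^2W_t^2-2\Delta_n\sum_t w_a^2W_t+\Delta_n^2\sum_t w_a^2 ,
\]
with $w_a=w_a(X_{t-1})$. I handle the three terms in turn.

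\begin{itemize}
\item \textbf{First term.} The claim $\sum_t w_a^2W_t^2/Q_n(a)\to_p1$ is exactly what Appendix~\ref{app:residual-studentization} establishes for the joint fit, via martingale differences $w_a^2(W_t^2-V)$ controlled by the $2+\delta$ moments. I would cite it directly; alternatively, repeat the argument with a truncation and Lenglart-type bound relative to $Q_n(a)\to\infty$.
\item \textbf{Third term.} Because $w_a\le 1/a$, we have $\sum_t w_a^2\le H_n(a)/a$. The third term is therefore at most $M_n(a)^2/(aH_n(a))=O_p(Q_n(a)/H_n(a))$. In both regimes $Q_n(a)$ and $H_n(a)$ are of the same order, so this is $O_p(1)=o_p(Q_n(a))$, as $Q_n(a)\to\infty$.
\item \textbf{Cross term.} By Cauchy--Schwarz it is at most $2\sqrt{\text{first term}}\cdot\sqrt{\text{third term}}=O_p(\sqrt{Q_n(a)})=o_p(Q_n(a))$.
\end{itemize}

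The main obstacle is the boundary regime. There the proof relies on $Q_n(a)/H_n(a)\to_p\sigma^2$ and on the residual-studentization lemma being stated for the raw innovation sum rather than only for joint-fit residuals. If the appendix proves only the joint-fit version, I would re-derive the innovation part, which is in fact the simpler case since no slope error appears.
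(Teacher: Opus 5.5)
Your proposal is correct and follows essentially the same route as the paper: the exact identity $\widehat\mu_{n,c}(a)-\mu=M_n(a)/H_n(a)$ with the joint limits gives the estimator laws, and $\widetilde Q_n(a)/Q_n(a)\to_p1$ combined with the bound $\Delta_n^2\sum_t w_a(X_{t-1})^2\le M_n(a)^2/\{aH_n(a)\}=\Op(1)$ and Cauchy--Schwarz gives $\widehat Q_{n,c}(a)/Q_n(a)\to_p1$. Your fallback is unnecessary, because the appendix states the innovation-level ratio $\widetilde Q_n(a)/Q_n(a)\to_p1$ explicitly in both recurrent regimes (the first part of \eqref{eq:fixed-score-consistency}); also, the bound $\sum_t w_a(X_{t-1})^2=\Op(\log n)$ that you use at the boundary comes from \eqref{eq:harmonic-bound}.
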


\begin{proof}
See Appendix~\ref{app:residual-studentization}.
\end{proof}

Fixing the slope removes the slope-estimation term in
\eqref{eq:reduction} and improves finite-sample coverage in the
simulations of Section~\ref{sim:maintained}.

\subsection{Comparison and inference across the three regimes}
\label{sec:weight-comparison}

To compare these results with the transient theory and the
{time-weighted WLS} of \citet{Lu2026}, recall that for
$a=1$ and $\tau>1$,
\citet[Theorem~2.5]{WeiWinnicki1990} and
\citet[Theorem~1.2]{Wei1991} give
\begin{equation}
 \sqrt{\log n}\{\widehat\mu_n(1)-\mu\}
 \Rightarrow N\!\left(0,\sigma^2\left(\mu-\frac{\sigma^2}{2}\right)\right).
 \label{eq:transient}
\end{equation}

For comparison, write $\widetilde\mu_n$ for the joint time-weighted
WLS {drift estimator}, with weights $1/t$. Under Assumption~\ref{ass:model},
\citet[Theorem~2]{Lu2026} gives
\begin{equation}
 \sqrt{\log n}(\widetilde\mu_n-\mu)
 \Rightarrow N(0,\sigma^2\mu)
 \label{eq:Lu}
\end{equation}
for every fixed $\tau>0$. This limit provides a common Gaussian inference
formula across the three regimes. State weighting is asymptotically
more precise in all three regimes. In both recurrent regimes it
converges faster, at
$n^{(1-\tau)/2}$ in strict recurrence and $\log n$ at the boundary,
against $\sqrt{\log n}$ for time-weighted WLS. In transience the two share
the $\sqrt{\log n}$ rate, but state weighting has the smaller asymptotic
variance, $\sigma^2(\mu-\sigma^2/2)<\sigma^2\mu$.

Both schemes can be read as approximations to
inverse-conditional-variance weighting. The state weight $w_a$ decreases
as the conditional variance along the realized path increases. The time
weight $1/t$ instead approximates the inverse expected conditional
variance, since $\E[V(X_{t-1})]\sim\sigma^2\mu t$.
Section~\ref{sec:adaptive} studies the choice of $a$ that makes the
state weights exactly proportional to the inverse conditional variance.

%
{
For state weights, the sum of the one-step conditional variances,
$Q_n(a)$, is of the same order as the accumulated weight $H_n(a)$
for every fixed $a>0$:
since $V(x)/(a+x)$ lies between $\min(\sigma^2,b/a)$ and
$\max(\sigma^2,b/a)$, we have $Q_n(a)\asymp H_n(a)$, and the estimation
error is of order $H_n(a)^{-1/2}$. The accumulated weight is of order
$n^{1-\tau}$ under strict recurrence and $(\log n)^2$ at the boundary.
Under transience it is of order $\log n$:
\citet[Lemma~2.4]{WeiWinnicki1990} give
$H_n(1)/\log n\to_p(\mu-\sigma^2/2)^{-1}$, which yields the variance in
\eqref{eq:transient}. Time weights accumulate
$\sum_{t=1}^n1/t\sim\log n$ in every regime. This explains the rates
above, and why the two schemes share the $\sqrt{\log n}$ rate only
under transience.

Corollary~\ref{cor:all-regimes} collects the studentized limits for
$a=1$, the offset covered by the transient theory \eqref{eq:transient}.
The same statistic applies in every regime; only its critical values
change, at $\tau=1$.
}

\begin{corollary}[Residual pivot across the three regimes]
\label{cor:all-regimes}
Under Assumption~\ref{ass:model}, consider the state-weighted
WLS estimator with $a=1$. Then
\begin{equation}
 \frac{H_n(1)\{\widehat\mu_n(1)-\mu\}}{\sqrt{\widehat Q_n(1)}}
 \Rightarrow
 \begin{cases}
  \mathcal Z, & 0<\tau<1,\\[2mm]
  (1-U)/\sqrt T, & \tau=1,\\[2mm]
  \mathcal Z, & \tau>1,
 \end{cases}
 \qquad \mathcal Z\sim N(0,1).
 \label{eq:transient-pivot}
\end{equation}
Here $T$ and $U$ are defined in Section~\ref{sec:boundary}.
\end{corollary}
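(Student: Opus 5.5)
The corollary splits by regime. Two of the three cases are already proved, so only the transient case $\tau>1$ needs new work.

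For $0<\tau<1$ and $\tau=1$, the statement is the special case $a=1$ of \eqref{eq:strict-pivot} in Theorem~\ref{thm:strict} and of \eqref{eq:boundary-pivot} in Theorem~\ref{thm:boundary}. Nothing further is required there.

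For $\tau>1$, I would write the pivot using the reduction \eqref{eq:reduction} as
\[
 \frac{H_n(1)\{\widehat\mu_n(1)-\mu\}}{\sqrt{\widehat Q_n(1)}}
 =\frac{M_n(1)}{\sqrt{Q_n(1)}}\sqrt{\frac{Q_n(1)}{\widehat Q_n(1)}}
 -\frac{(\widehat m_n(1)-1)\{n-H_n(1)\}}{\sqrt{\widehat Q_n(1)}}.
\]
Three ingredients are then needed.

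\emph{Main term.} I would show $M_n(1)/\sqrt{Q_n(1)}\Rightarrow\mathcal Z$. By \citet[Lemma~2.4]{WeiWinnicki1990}, $H_n(1)/\log n\to_p(\mu-\sigma^2/2)^{-1}$. In the transient regime $X_{t-1}\to\infty$ almost surely, so $V(X_{t-1})/(1+X_{t-1})\to\sigma^2$. A Cesàro-type argument, valid because $H_n(1)\to\infty$, then gives $Q_n(1)/\log n\to_p\sigma^2(\mu-\sigma^2/2)^{-1}$, a deterministic limit. The martingale CLT with this deterministic bracket limit gives $M_n(1)/\sqrt{\log n}\Rightarrow N(0,\sigma^2(\mu-\sigma^2/2)^{-1})$, and hence $M_n(1)/\sqrt{Q_n(1)}\Rightarrow\mathcal Z$. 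The Lindeberg condition comes from Assumption~\ref{ass:model}: each summand $w_1(X_{t-1})W_t$ has conditional $(2+\delta)$-moment $\lesssim (1+X_{t-1})^{-1-\delta/2}$, whose sum is $o((\log n)^{1+\delta/2})$. Combining the main term with the slope term below reproduces \eqref{eq:transient}, which is consistent with the cited theorem.

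\emph{Slope term.} Here $n(\widehat m_n(1)-1)=\Op(1)$ by \citet[Theorem~2.5]{WeiWinnicki1990}. Since $n-H_n(1)\le n$, the numerator of the slope term is $\Op(1)$. Its denominator is of order $\sqrt{\log n}$ once $\widehat Q_n(1)/Q_n(1)\to_p1$ is established. The slope term is therefore $\Op((\log n)^{-1/2})=\op(1)$.

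\emph{Residual bracket.} The remaining step is $\widehat Q_n(1)/Q_n(1)\to_p1$. Write $e_t=W_t-(\widehat m_n-1)X_{t-1}-(\widehat\mu_n-\mu)$ and expand $e_t^2$.

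\begin{itemize}
\item \emph{Cross terms.} Cauchy--Schwarz reduces them to the pure terms below.
\item \emph{Slope piece.} It is bounded by $(\widehat m_n-1)^2\sum_t X_{t-1}^2/(1+X_{t-1})^2\le n(\widehat m_n-1)^2=\Op(1/n)$, which is negligible against $\log n$.
\item \emph{Drift piece.} It is $(\widehat\mu_n-\mu)^2\sum_t w_1^2\le(\widehat\mu_n-\mu)^2H_n(1)=\Op(1)$, since $\widehat\mu_n-\mu=\Op((\log n)^{-1/2})$. It is also negligible.
\item \emph{Main piece.} It remains to show $\sum_t w_1(X_{t-1})^2(W_t^2-V(X_{t-1}))=\op(\log n)$. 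The summands form a martingale difference sequence. Apply a von Bahr--Esseen bound with exponent $1+\delta/2$: its moment sum is $\sum_t\E (1+X_{t-1})^{-2-\delta}(1+X_{t-1})^{1+\delta/2}=\E\sum_t w_1(X_{t-1})^{1+\delta/2}$. Transience with linear growth, $X_t\asymp t$, makes this summable or at most $O(\log n)$. The martingale is then $\Op((\log n)^{1/(1+\delta/2)})=\op(\log n)$.
\end{itemize}

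A fully rigorous version uses the analogous bound already used in Appendix~\ref{app:residual-studentization}. That appendix is written for the recurrent regimes, so the version needed here would be argued via localization on $H_n(1)$ instead. I expect this bracket-consistency step to be the main obstacle, because the cited transient results give the limit of the estimator but not consistency of the residual-based $\widehat Q_n(1)$.
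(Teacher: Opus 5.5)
Your proof is correct, and its skeleton matches the paper's. The recurrent cases are read off Theorems~\ref{thm:strict} and~\ref{thm:boundary} at $a=1$. For $\tau>1$ everything reduces to a Gaussian limit for the estimator together with $\widehat Q_n(1)/Q_n(1)\to_p1$, obtained from the residual perturbation and Cauchy--Schwarz. The two routes differ in two places.

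First, you rederive the transient limit through the martingale CLT for $M_n(1)/\sqrt{Q_n(1)}$ with a conditional Lyapunov condition. The paper simply cites \citet[Theorem~2.5]{WeiWinnicki1990} for $\sqrt{H_n}(\widehat\mu_n-\mu)\Rightarrow N(0,\sigma^2)$. Your version is self-contained but redundant.

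Second, and more substantively, you handle the squared-innovation martingale $\sum_t\Delta_t^Q$ with von Bahr--Esseen. That requires unconditional moments $\E\sum_t(1+X_{t-1})^{-1-\delta/2}$. Your justification ``$X_t\asymp t$'' is only heuristic, since the harmonic bound \eqref{eq:harmonic-bound} is established only for $\tau\le1$. So the localization you mention is genuinely needed, and it does work:
\begin{itemize}
\item The indicator $\1\{H_t(1)\le K\log n\}$ is $\F_{t-1}$-measurable.
\item Stopping there bounds the $r$th moment of the stopped sum by $CK\log n$.
\item Since $H_n(1)/\log n\to_p(\mu-\sigma^2/2)^{-1}$, the stopping probability is small for large $K$.
\item This yields $\sum_t\Delta_t^Q=\Op((\log n)^{1/r})$.
\end{itemize}

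The paper avoids expectations altogether. In transience, $\sum_t(1+X_{t-1})^{-q}<\infty$ almost surely for every $q>1$ \citep[Lemma~2.4]{WeiWinnicki1990}. Hence the conditional $q$th moments of $\Delta_t^Q$, for $q=1+\min\{\delta/2,1/2\}$, are almost surely summable, and the martingale converges almost surely. This gives $\widetilde Q_n(1)-Q_n(1)=O(1)$ almost surely, which is both shorter and stronger than your $\op(\log n)$. So the step you flagged as the main obstacle dissolves once you use pathwise summability rather than moment bounds.
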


\begin{proof}
See Appendix~\ref{app:residual-studentization}.
\end{proof}
\section{Estimating the optimal offset}
\label{sec:adaptive}

{
Under strict recurrence, the offset $a$ affects the precision of the
drift estimator: by Theorem~\ref{thm:strict}, it enters the limit only
through the scale factor $v(a)/h(a)^2$, and a smaller factor gives a more
concentrated limit. By the Cauchy--Schwarz inequality,
\[
 \frac{v(a)}{h(a)^2}\geq\Bigl(\sum_{j\geq0}\frac{\pi_j}{V(j)}\Bigr)^{-1},
\]
where the sum is finite under strict recurrence. Equality holds
precisely at $a=a_*:=b/\sigma^2$, which gives inverse-conditional-variance weights,
as in the optimal estimating functions of
\citet[Theorem~3.1]{BaiDurairajan1996}, an approach suggested by
\citet[Section~4.5]{WeiWinnicki1990}. At the boundary, by contrast, the
offset does not affect the first-order limit
(Theorem~\ref{thm:boundary}): the accumulated weight is then dominated
by large counts, where $(a+x)^{-1}\approx x^{-1}$ for every $a$.

The optimal offset depends on the unknown variances $\sigma^2$ and $b$,
so we estimate it in three steps: (i) fit the slope and drift jointly by
state-weighted WLS with $a=1$;
(ii) estimate $(\sigma^2,b)$ by a weighted regression of the squared
residuals on $X_{t-1}$ and an intercept; and (iii) refit using the
estimated $b$ divided by the estimated $\sigma^2$ as the offset, with a
fallback to $a=1$ if either estimate is nonpositive.
Theorem~\ref{ad:limit} shows that any positive offset estimator
converging in probability to a fixed positive value gives the same
first-order limit as using that value. We then specify the variance
regression and offset estimator.
}

\begin{theorem}[Estimated offsets]
\label{ad:limit}
Under Assumption~\ref{ass:model}, suppose $0<\tau\leq1$ and let the
positive statistic $\widehat a_n\to_p a_0$ for a deterministic
$a_0\in(0,\infty)$.
Set $r_n=n^{(1-\tau)/2}$ if $0<\tau<1$ and $r_n=\log n$ if $\tau=1$.
Then
\begin{equation}
 r_n\{\widehat\mu_n(\widehat a_n)-\widehat\mu_n(a_0)\}\to_p0.
 \label{ad:oracleequiv}
\end{equation}
Consequently, the {limits of the drift estimator} in
Theorems~\ref{thm:strict} and~\ref{thm:boundary} hold with
$\widehat\mu_n(\widehat a_n)$ in place of $\widehat\mu_n(a_0)$.
With the residual estimate $\widehat Q_n$ of $Q_n$ defined in
\eqref{eq:feasibleQ}, the studentized estimation error satisfies
\begin{equation}
 \frac{H_n(\widehat a_n)
       \{\widehat\mu_n(\widehat a_n)-\mu\}}
      {\sqrt{\widehat Q_n(\widehat a_n)}}
 \Rightarrow
 \begin{cases}
  \mathcal Z,&0<\tau<1,\\[2pt]
  (1-U)/\sqrt T,&\tau=1,
 \end{cases}
 \qquad \mathcal Z\sim N(0,1).
 \label{ad:feasiblepivot}
\end{equation}
Furthermore, the same conclusions hold for the constrained estimator
$\widehat\mu_{n,c}$ and its residual estimate $\widehat Q_{n,c}$ of $Q_n$ in
Corollary~\ref{cor:maintained}.
\end{theorem}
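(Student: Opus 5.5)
The plan is to control the whole family $a\mapsto(H_n(a),M_n(a),Q_n(a),\widehat Q_n(a),\widehat m_n(a))$ uniformly over a compact neighbourhood $K=[a_0/2,2a_0]$ of $a_0$. Since $\PP(\widehat a_n\in K)\to1$, it suffices to show
\[
 \sup_{a\in K}r_n|\widehat\mu_n(a)-\widehat\mu_n(a_0)|\to_p0 .
\]
The same bound, applied to the ingredients of the pivot, gives the studentized statement, because Theorems~\ref{thm:strict} and~\ref{thm:boundary} and Corollary~\ref{cor:maintained} already give the limits at the fixed value $a_0$.

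\textbf{Step 1: equicontinuity of $H_n$ and $Q_n$.} The key deterministic tool is the bound
\[
 |w_a(x)-w_{a_0}(x)|\le|a-a_0|\,w_{a_0/2}(x)^2\le C|a-a_0|\,w_{a_0/2}(x)\,(1+x)^{-1}.
\]
This gives $|H_n(a)-H_n(a_0)|\le C|a-a_0|\,H_n(a_0/2)$. Comparability gives $H_n(a_0/2)\asymp H_n(a_0)$ (indeed $w_{a}/w_{a'}$ is bounded on $K$). Hence
\[
 \sup_{a\in K}|H_n(a)/H_n(a_0)-1|\le C\sup_{a\in K}|a-a_0|,
\]
and this goes to zero when evaluated at $\widehat a_n$. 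The same argument applies to $Q_n(a)$, since $w_a^2V$ has derivative in $a$ bounded by $C\,w_{a_0/2}(x)^2V(x)$. The slope term in \eqref{eq:reduction} is handled by the extension of \citet{WeiWinnicki1990}'s bound, $n(\widehat m_n(a)-1)=\Op(1)$. I would prove this uniformly over $K$ by writing $\widehat m_n(a)-1$ as a ratio of weighted sums, each Lipschitz in $a$ by the same inequality. Combined with the occupation-time orders of $H_n$, this makes the slope term $\Op(n^{-\alpha})$ or $\Op((\log n)^{-2})$ uniformly in $a$, which is $o_p(r_n^{-1})$.

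\textbf{Step 2: the score difference (main obstacle).} The main obstacle is showing
\[
 \sup_{a\in K}|M_n(a)-M_n(a_0)|=o_p(r_n).
\]
The Lipschitz bound alone is not enough, because $\widehat a_n$ is data dependent and is not predictable. I would treat $D_n(a)=M_n(a)-M_n(a_0)$ as a martingale indexed by $a$, with bracket
\[
 \sum_t (w_a-w_{a_0})^2(X_{t-1})\,V(X_{t-1})\le C|a-a_0|^2\,Q_n(a_0/2).
\]
The same bound holds for increments $D_n(a)-D_n(a')$, with $|a-a'|^2$ in place of $|a-a_0|^2$. This gives
\[
 \E|D_n(a)-D_n(a')|^2\le C|a-a'|^2\,\E Q_n(a_0/2).
\]
One cannot divide by a random $Q_n$ inside an expectation, so I would localize: stop at the first time $Q_n(a_0/2)$ exceeds $Cr_n^2$. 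This event has probability close to one by the tightness of $Q_n/r_n^2$ established in the theorems. Kolmogorov's continuity criterion in the single parameter $a$ (the exponent $2>1$ suffices) then gives tightness of $r_n^{-1}D_n(\cdot)$ in $C(K)$, with modulus vanishing at $a_0$. Consequently $r_n^{-1}D_n(\widehat a_n)\to_p0$.

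\textbf{Step 3: combining the pieces.} In the pivot, $H_n(\widehat a_n)/H_n(a_0)\to_p1$ by Step 1. For the residual variance I would show $\widehat Q_n(\widehat a_n)/Q_n(\widehat a_n)\to_p1$ by repeating the residual argument of Appendix~\ref{app:residual-studentization} uniformly over $K$. The residuals $e_t(a)$ differ from $W_t$ by $(\widehat\mu_n(a)-\mu)+(\widehat m_n(a)-1)X_{t-1}$, and both pieces are uniformly small by the steps above. Dividing the limits then gives \eqref{ad:oracleequiv} and \eqref{ad:feasiblepivot}. The constrained case is the same argument with the slope term absent.
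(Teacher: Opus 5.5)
Your proposal is correct in substance and follows essentially the paper's route. Like the paper, you confine $\widehat a_n$ to a compact interval, compare clocks and brackets across offsets by Lipschitz bounds in $a$, and control the score uniformly through martingale $L^2$ bounds in the parameter. Your Kolmogorov--Chentsov tightness of $r_n^{-1}D_n$ plays the role of the paper's Sobolev inequality applied to $M_n$ and $M_n'$ followed by the mean value theorem. The uniform slope bound and the Cauchy--Schwarz residual step match the paper. Your localization is unnecessary, since \eqref{eq:harmonic-bound} already gives $\E Q_n(a_0/2)\le C\,\E H_n(1)=O(r_n^2)$.

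One correction: your opening reduction to $\sup_{a\in K}r_n|\widehat\mu_n(a)-\widehat\mu_n(a_0)|\to_p0$ over the fixed interval $K$ is false under strict recurrence. For fixed $a\neq a_0$, this difference has a nondegenerate mixed-normal limit, with mixing variance proportional to $\sum_j\pi_jV(j)\{w_a(j)/h(a)-w_{a_0}(j)/h(a_0)\}^2>0$. What you need, and what your Step~2 actually delivers, is asymptotic equicontinuity at $a_0$, evaluated at $\widehat a_n$.
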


\begin{proof}
See Appendix~\ref{ad:appendix}.
\end{proof}

%
%
{We now construct a particular offset estimator. For variance
estimation, we strengthen the moment condition in
Assumption~\ref{ass:model} to finite fourth moments.

\begin{assumption}[Finite fourth moments]\label{ad:fourthmoments}
$\E Z^4+\E\epsilon^4<\infty$.
\end{assumption}
}

{
To estimate $(\sigma^2,b)$, we use the residual variance regression of
\citet[equations~(1.5)--(1.6)]{Winnicki1991}.\footnote{Winnicki writes
$b^2$ for our $b$.} Let $e_t:=e_t(1)$ denote the residuals from the
initial joint fit at $a=1$. Since
$\E(W_t^2\mid X_{t-1})=\sigma^2X_{t-1}+b$, we regress $e_t^2$ on
$X_{t-1}$ and an intercept. Under the fourth-moment
assumption, the conditional variance of $W_t^2$ is of order
$1+X_{t-1}^2$, which motivates weights $(1+X_{t-1})^{-2}$.
The slope and intercept of this regression
estimate $\sigma^2$ and $b$; we denote them by
$(\widehat{\sigma^2}_n,\widehat b_n)$, with the closed form in
Appendix~\ref{ad:appendix}. The offset estimator is
\begin{equation}
 \widehat a_n=
 \begin{cases}
  \widehat b_n/\widehat{\sigma^2}_n,
    &\widehat b_n>0,\ \widehat{\sigma^2}_n>0,\\
  1,&\text{otherwise},
 \end{cases}
 \label{ad:offsetpilot}
\end{equation}
where $(\widehat{\sigma^2}_n,\widehat b_n)=(0,0)$ if the regression design
is singular.
Refit with $\widehat a_n$ held fixed, using either the joint estimator
\eqref{eq:estimator} or the constrained estimator
\eqref{eq:constrained-estimator}. The same initial joint fit and variance
regression are used for both refits.
}

{
Under Assumption~\ref{ad:fourthmoments}, when
$0<\tau\leq1$, \citet[Theorems~3.10 and 3.12]{Winnicki1991} give
\begin{equation}
 (\widehat{\sigma^2}_n,\widehat b_n)\to_p(\sigma^2,b),
 \label{ad:pilotconsistency}
\end{equation}
so $\widehat a_n\to_p a_*$. Applying Theorem~\ref{ad:limit} with
$a_0=a_*$ shows that the estimated-offset fit has the same first-order
drift limit as the fit using the known optimal offset, and that the
residual pivot \eqref{ad:feasiblepivot} remains valid. These conclusions
hold for both the joint and constrained estimators.
}

When $\tau>1$, the variance regression does not estimate $b$
consistently \citep[Remark~3.13]{Winnicki1991}.
Appendix~\ref{ad:transient-proofs} establishes that $\widehat b_n$
converges to a random limit and that $\widehat a_n$ converges to a
positive, finite random offset. At the large counts reached under
transience, the weights are insensitive to changes in the offset.
Consequently, the offset need not converge to $a_*$ for the drift
estimator to retain its first-order limit. The following proposition
shows that residual studentization yields asymptotically normal
inference.

\begin{proposition}[Estimated offset in the transient regime]
\label{ad:transientoffset}
Under Assumption~\ref{ad:fourthmoments}, suppose
$\tau>1$, and let $\widehat a_n$ be defined by the offset rule
\eqref{ad:offsetpilot}. Then
\begin{equation}
 \frac{H_n(\widehat a_n)
       \{\widehat\mu_n(\widehat a_n)-\mu\}}
      {\sqrt{\widehat Q_n(\widehat a_n)}}
 \Rightarrow N(0,1).
 \label{ad:transientstudentized}
\end{equation}
The same convergence holds with $(\widehat\mu_n,\widehat Q_n)$
replaced by $(\widehat\mu_{n,c},\widehat Q_{n,c})$, using the same
offset estimate $\widehat a_n$.
\end{proposition}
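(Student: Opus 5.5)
We reduce the estimated-offset statistic under transience to the fixed-offset transient theory. The working fact is that under transience $X_t\to\infty$ almost surely, and the weights become insensitive to the offset uniformly over compact sets of offsets. The proposition assumes Assumption~\ref{ad:fourthmoments}, which implies Assumption~\ref{ass:model}. By the result in Appendix~\ref{ad:transient-proofs}, $\widehat a_n$ converges to a positive finite random limit, so it is tight and bounded away from zero. Fix $\varepsilon>0$ and choose $0<c<C<\infty$ with $\PP(\widehat a_n\in[c,C])\geq1-\varepsilon$ for all large $n$. It then suffices to prove the needed bounds uniformly over $a\in[c,C]$. This uniformity matters because $\widehat a_n$ is built from the same data, so we cannot condition on it.

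\textbf{Step 1 (uniform comparison of sums).} For $a,a'\in[c,C]$ we have
\[
|w_a(x)-w_{a'}(x)|\leq |a-a'|(c+x)^{-2}.
\]
Since $X_t\to\infty$ a.s., and in fact grows linearly in $t$ a.s. on the transient scale, we get $\sum_t (c+X_{t-1})^{-2}<\infty$ a.s. Hence $\sup_{a\in[c,C]}|H_n(a)-H_n(1)|=O_p(1)$, which is $o_p(\log n)$.

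For the martingale $M_n(a)-M_n(1)=\sum_t\{w_a-w_1\}(X_{t-1})W_t$, write
\[
w_a(x)-w_1(x)=(1-a)\,w_a(x)w_1(x).
\]
This gives $M_n(a)-M_n(1)=(1-a)\sum_t w_a w_1 W_t$. The conditional variance of this sum is bounded by $\sum_t V(X_{t-1})(c+X_{t-1})^{-4}$, which is a.s. finite. The $a$-dependence of $w_a$ is smooth, so we control the supremum by a one-term Taylor expansion in $a$ together with the same summability bound. Both the residual sum and $\partial_a$ of the martingale are $O_p(1)$ uniformly. The same argument gives $\sup_a|Q_n(a)-Q_n(1)|=O_p(1)$.

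\textbf{Step 2 (slope term and estimator).} The slope estimator also depends on $a$. Here we use the bound $n(\widehat m_n(a)-1)=O_p(1)$ uniformly on $[c,C]$. This extends the paper's extension of \citet[Theorem~2.5]{WeiWinnicki1990}, and it follows because the normal-equation components are perturbed by $O_p(1)$ quantities relative to their growth orders. In the reduction \eqref{eq:reduction}, $n/H_n(a)\asymp n/\log n$, so the slope term becomes $\Op(1/\log n)$, which is negligible against $\sqrt{\log n}$ after multiplication. (This is the same negligibility used for $a=1$ in \eqref{eq:transient}.) Consequently
\[
\sqrt{\log n}\{\widehat\mu_n(\widehat a_n)-\widehat\mu_n(1)\}\to_p0,\qquad H_n(\widehat a_n)/H_n(1)\to_p1.
\]

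\textbf{Step 3 (studentization).} We show $\widehat Q_n(\widehat a_n)/Q_n(1)\to_p1$. Expand
\[
e_t(a)=W_t-(\widehat m_n(a)-1)X_{t-1}-(\widehat\mu_n(a)-\mu).
\]
The cross and square terms, weighted by $w_a^2$, are bounded as follows:
\begin{itemize}
\item $(\widehat m_n-1)^2\sum w_a^2X_{t-1}^2\leq n^{-2}\cdot n\,O_p(1)\to0$;
\item $(\widehat\mu_n-\mu)^2\sum w_a^2=o_p(1)$;
\item the cross terms are handled by Cauchy--Schwarz against $Q_n(a)\asymp\log n$.
\end{itemize}
We also have $\sum w_a^2(W_t^2-V)=o_p(\log n)$ by a martingale weak law, using the fourth moments. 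Together with Step 1, the studentized statistic equals the $a=1$ studentized statistic plus $o_p(1)$. The fixed-offset transient pivot, i.e. Corollary~\ref{cor:all-regimes} for $\tau>1$, then gives $N(0,1)$.

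The constrained version is simpler: $\widehat\mu_{n,c}(a)-\mu=M_n(a)/H_n(a)$, so Step 1 and the analogous residual expansion without the slope term suffice.

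\textbf{Main obstacle.} We expect the main obstacle to be the uniform-in-$a$ control of the martingale difference $M_n(a)-M_n(1)$ and of the slope estimator. The random offset rules out direct conditioning. We would first try the factorization $w_a-w_1=(1-a)w_aw_1$ combined with monotonicity of $w_a$ in $a$, bounding the supremum by chaining over a finite grid of $[c,C]$ with Lipschitz remainders. If that fails, the fallback is a bracketing argument on the grid using a.s. summability of $(c+X_{t-1})^{-2}$.
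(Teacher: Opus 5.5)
Your Steps 1--3 follow the paper's Steps 2--4 closely: comparison on a compact offset interval, an $\Op(1/\log n)$ slope term, residual perturbation with Cauchy--Schwarz, and reduction to the $a=1$ transient pivot. The gap is in your first sentence. You take tightness of $\widehat a_n$ and $\widehat a_n^{-1}$ from ``the result in Appendix~\ref{ad:transient-proofs}''. That result is not an independent input: it is Step~1 of the proof of this very proposition, and it is where most of the work lies. Two facts must be proved.

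First, you need $\widehat b_n\to_p B_\infty=b+L_\infty^{\mathrm{var}}/H_{\infty,2}$. Because $b$ is not consistently estimated when $\tau>1$, this random limit has to be derived. The paper gets it from the variance-regression normal equation $(\widehat b_n-b)H_{n,2}=L_n^{\mathrm{var}}+\Delta_n^{\mathrm{res}}-(\widehat{\sigma^2}_n-\sigma^2)(H_n-H_{n,2})$, showing $\Delta_n^{\mathrm{res}}=\op(1)$ and that the last term is $\Op(n^{-1/2}\log n)$.

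Second, the rule \eqref{ad:offsetpilot} is discontinuous at $\widehat b_n=0$, so you need $\PP(B_\infty=0)=0$. Without it, on $\{B_\infty=0\}$ the estimate $\widehat b_n$ can approach zero through positive values. Then $\widehat a_n\to0$ and $\widehat a_n^{-1}$ is not tight, and your comparison constants, which depend on $c$ through $(c+x)^{-2}$, blow up. Observations following a zero count occur with positive probability under transience, and always at $t=1$ when $X_0=0$; they would receive weight $1/\widehat a_n$. The paper proves the no-atom property in Lemma~\ref{ad:variance-no-atoms} by an anti-concentration argument. It conditions on the embedded jump chain, so that holding counts become independent geometric variables with holding probabilities $\vartheta_x\asymp(1+x)^{-1/2}$ (lattice local CLT). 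It then shows $\sum_j\vartheta_{\Xi_j}=\infty$ and applies the Kolmogorov--Rogozin inequality. Your proposal supplies none of this.

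Two smaller points. First, $X_t$ does not grow linearly almost surely; only $X_{\lfloor nt\rfloor}/n\Rightarrow Y(t)$ holds in distribution. The summability you need, $\sum_t(1+X_{t-1})^{-q}<\infty$ almost surely for $q>1$, comes from Lemma~2.4 of Wei and Winnicki (1990).

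Second, your ``main obstacle'' dissolves without chaining. Since $\E(|W_t|\mid\F_{t-1})\leq C(1+X_{t-1})^{1/2}$ and $\sum_t(1+X_{t-1})^{-3/2}<\infty$ almost surely, the nonnegative series $\sum_t|W_t|(1+X_{t-1})^{-2}$ converges almost surely. This bounds $\sup_{a\in[c,C]}\sup_n|M_n(a)-M_n(1)|$ directly. Likewise, $\sum_tW_t^2(1+X_{t-1})^{-3}<\infty$ controls $\widetilde Q_n(a)-\widetilde Q_n(1)$ uniformly in $a$. Your Step~3 needs that control, while your Step~1 compares only the conditional-variance sums $Q_n$.
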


\begin{proof}
See Appendix~\ref{ad:appendix}.
\end{proof}

Combining Theorem~\ref{ad:limit}, consistency in
\eqref{ad:pilotconsistency}, and Proposition~\ref{ad:transientoffset}
gives a common residual pivot under Assumption~\ref{ad:fourthmoments},
with $\widehat a_n$ defined by \eqref{ad:offsetpilot}:
\[
 \frac{H_n(\widehat a_n)
       \{\widehat\mu_n(\widehat a_n)-\mu\}}
      {\sqrt{\widehat Q_n(\widehat a_n)}}
 \Rightarrow
 \begin{cases}
  \mathcal Z,&0<\tau<1\ \text{or}\ \tau>1,\\[2pt]
  (1-U)/\sqrt T,&\tau=1,
 \end{cases}
 \qquad \mathcal Z\sim N(0,1).
\]
The same formula holds for the constrained fit, with
$(\widehat\mu_n,\widehat Q_n)$ replaced by
$(\widehat\mu_{n,c},\widehat Q_{n,c})$.

\begin{remark}[Subcritical processes]
\label{rem:subcritical}
The same three steps apply when $m<1$. Under finite fourth moments,
\citet[Theorem~3.4]{Winnicki1991} gives $\widehat a_n\to_p a_*$, and
\citet[Theorem~2.2]{AknoucheFrancq2023} show that refitting with
estimated conditional-variance weights attains the first-order
distribution obtained with the optimal weights known, at rate
$\sqrt n$. Under stationarity, the {standard error of $\widehat\mu_n$} must use
the full joint covariance matrix, because slope estimation affects its
first-order uncertainty.
\end{remark}

The same three-step fit therefore applies for $m<1$ and in all three
regimes at $m=1$, without first classifying the regime. The critical
values still depend on the regime: they are standard normal except at
$\tau=1$ (Section~\ref{sec:boundary}).
\section{Simulation study}
\label{sec:simulations}

\subsection{Simulation design}

We compare the accuracy of state- and time-weighted drift estimators
and the coverage and length of their confidence intervals. We also
examine the effects of estimating the offset and imposing the unit
root, including inference at the recurrence boundary.

We simulate the Poisson INARCH(1) and NBAR models introduced in
Section~1, starting from $X_0=0$. For the Poisson model,
$(\sigma^2,b)=(1,\mu)$ and $\mu=\tau/2$. For the NBAR model,
$(\sigma^2,b)=(2,2\mu)$ and $\mu=\tau$, so the conditional variance
is twice the conditional mean. The optimal offset is $a_*=\mu$ in
both models.

Under strict recurrence, we use $\tau\in\{0.25,0.5,0.8\}$ and sample
sizes $n=100$ and $1{,}000$. The offset and maintained-unit-root
comparisons are reported at $n=1{,}000$. Each design uses $5{,}000$
independent paths, with competing estimators evaluated on the same
paths. Oracle calibration uses a separate sample of $5{,}000$ paths.

Unless stated otherwise, we estimate the slope and drift jointly and
use $a=1$ for state-weighted WLS. Its confidence interval is given in
\eqref{eq:strictCI}. For the time-weighted WLS estimator
$\widetilde\mu_n$, we use
\begin{equation}
 \widetilde\mu_n\ \pm\ z_{0.975}
 \sqrt{\frac{\widehat{\sigma^2}_{n,\mathrm{av}}\widetilde\mu_n}{\log n}},
 \qquad
 \widehat{\sigma^2}_{n,\mathrm{av}}
 =\frac1n\sum_{t=1}^n
  \frac{e_t(1)^2}
      {1+X_{t-1}}.
 \label{sim:eq:luinterval}
\end{equation}
This residual average differs from the variance-regression estimate
$\widehat{\sigma^2}_n$ used to estimate the offset.
Replications with a zero or undefined estimated standard error count
as noncoverage. Mean interval lengths are calculated over the remaining
replications. Monte Carlo uncertainty is calculated across independent
paths, accounting for the pairing of estimators evaluated on the same
path. Reported Monte Carlo intervals are pointwise 95\% intervals.

\subsection{Point estimation and interval performance}
\label{sim:oracle}

State-weighted WLS has lower root mean squared error (RMSE) for drift
estimation than time-weighted WLS in every design shown in
Figure~\ref{sim:fig:rmse}. The relative improvement is greater at
smaller $\tau$ and at $n=1{,}000$ than at $n=100$. All 95\% Monte
Carlo intervals for the RMSE ratio lie below one.

\begin{figure}[!t]
 \centering
 \includegraphics[width=\textwidth]{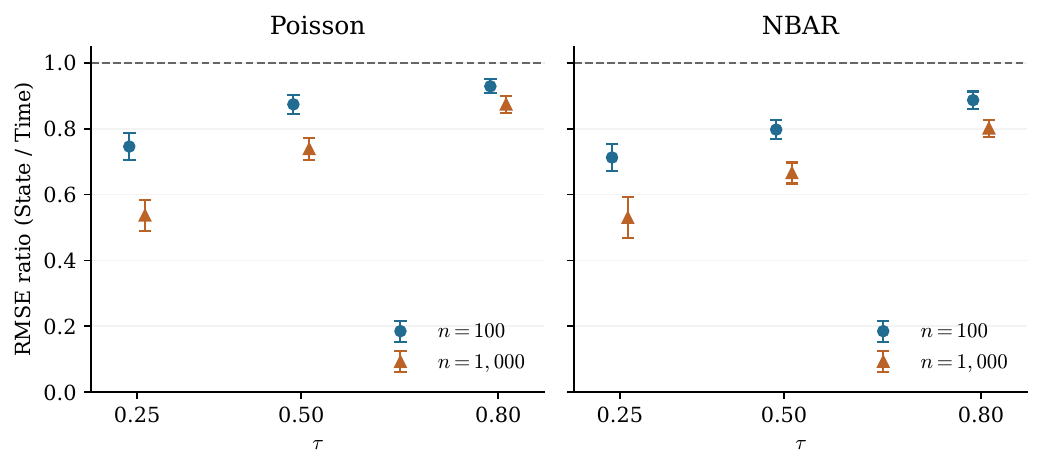}
 \caption{Relative RMSE of drift estimation in the two strictly recurrent models.
 State and Time denote state-weighted and time-weighted WLS,
 respectively. Points show the State/Time RMSE ratio at $n=100$ and
 $n=1{,}000$. Values below
 the dashed line favor state-weighted WLS. Vertical bars are pointwise
 95\% Monte Carlo intervals from the paired delta method.}
 \label{sim:fig:rmse}
\end{figure}

The methods attain different uncalibrated coverage probabilities. To
compare interval lengths at a common 95\% coverage target, write
$\widehat\mu$ for either drift estimator and $\widehat{\mathrm{se}}$ for
its estimated standard error. Let $\widehat q_p$ denote the empirical
$p$-quantile of $(\widehat\mu-\mu)/\widehat{\mathrm{se}}$ in the independent
calibration sample. Using the 2.5\% and 97.5\% quantiles, the evaluation
sample assesses
\[
 [\widehat\mu-\widehat q_{0.975}\widehat{\mathrm{se}},\,
  \widehat\mu-\widehat q_{0.025}\widehat{\mathrm{se}}].
\]
This oracle calibration uses the true model parameters.
The calibration target is adjusted for the frequency of undefined
intervals in the training sample.
Monte Carlo intervals for the length ratios use 999 paired bootstrap
resamples of the calibration and evaluation sets.

\begin{table}[tbp]
\centering
\caption{Coverage with normal critical values and oracle-calibrated relative mean interval length, $n=1{,}000$.}
\label{sim:tab:matched}
\footnotesize
\setlength{\tabcolsep}{3.5pt}
\begin{tabular}{llrrrrr}
\toprule
Model & $\tau$ & \multicolumn{2}{c}{\shortstack{Coverage with normal\\critical values (\%)}} & \multicolumn{2}{c}{Calibrated coverage (\%)} & Mean length ratio (\%) \\ \cmidrule(lr){3-4}\cmidrule(lr){5-6} & & State & Time & State & Time & State/Time [95\% MC interval] \\
\midrule
Poisson & 0.25 & {92.72} & {96.92} & {95.56} & {95.46} & {31.38 [30.09, 32.85]} \\
Poisson & 0.50 & {88.46} & {96.50} & {94.56} & {94.82} & {46.91 [45.31, 48.53]} \\
Poisson & 0.80 & {85.92} & {95.74} & {94.68} & {93.96} & {65.89 [63.57, 68.83]} \\
NBAR & 0.25 & {92.40} & {97.80} & {94.42} & {95.20} & {29.16 [28.01, 30.70]} \\
NBAR & 0.50 & {89.86} & {97.30} & {94.72} & {94.86} & {45.83 [44.20, 47.71]} \\
NBAR & 0.80 & {87.94} & {96.10} & {94.90} & {94.86} & {64.28 [61.95, 66.80]} \\
\bottomrule
\end{tabular}
\par\smallskip
\begin{minipage}{\textwidth}\footnotesize State and Time denote WLS with weights $(1+X_{t-1})^{-1}$ and $1/t$, respectively. All fits are joint. Time-weighted WLS uses plug-in standard errors. Equal-tailed pivot quantiles are estimated from an independent calibration sample. Length-ratio intervals use paired bootstrap resamples of both samples.\end{minipage}
\end{table}

Table~\ref{sim:tab:matched} shows that joint state-weighted WLS intervals
undercover near the recurrence boundary, whereas the plug-in intervals from time-weighted WLS
are conservative. After calibration, state weighting gives substantially
shorter intervals at comparable coverage.

\subsection{Estimating the offset}
\label{sim:offset}

We next examine how much of the improvement available from the
optimal offset is retained when the offset must be estimated. We
compare $a=1$, the known optimal value $a_*$, and the estimate
$\widehat a_n$ from Section~\ref{sec:adaptive}. Confidence intervals
use residual standard errors calculated at the corresponding offset.

\begin{table}[tbp]
\centering
\caption{Drift RMSE and interval coverage for fixed, optimal, and estimated offsets, $n=1{,}000$.}
\label{sim:tab:adaptive}
\footnotesize
\setlength{\tabcolsep}{3.5pt}
\begin{tabular}{llrrrrrr}
\toprule
Model & $\tau$ & \multicolumn{3}{c}{{RMSE ($\times100$)}} & MSE reduction & \multicolumn{2}{c}{Coverage (\%)} \\ \cmidrule(lr){3-5}\cmidrule(lr){7-8} & & $a=1$ & $a=a_*$ & $a=\widehat a_n$ & (\%) [95\% MC interval] & $a=1$ & $a=\widehat a_n$ \\
\midrule
Poisson & 0.25 & {9.48} & {5.70} & {7.43} & {$38.57\ [31.97,\ 45.18]$} & {92.72} & {93.16} \\
Poisson & 0.50 & {18.56} & {15.14} & {16.64} & {$19.66\ [16.35,\ 22.98]$} & {88.46} & {89.16} \\
Poisson & 0.80 & {27.40} & {24.79} & {26.38} & {$7.29\ [5.29,\ 9.29]$} & {85.92} & {85.32} \\
NBAR & 0.25 & {19.07} & {14.86} & {17.85} & {$12.40\ [4.08,\ 20.72]$} & {92.40} & {92.08} \\
NBAR & 0.50 & {31.90} & {29.19} & {31.10} & {$4.91\ [0.85,\ 8.97]$} & {89.86} & {88.90} \\
NBAR & 0.80 & {49.21} & {48.56} & {48.60} & {$2.46\ [-0.62,\ 5.55]$} & {87.94} & {85.16} \\
\bottomrule
\end{tabular}
\par\smallskip
\begin{minipage}{\textwidth}\footnotesize {RMSE entries are multiplied by 100.} Here $\mathrm{MSE}(a)$ is the empirical mean squared error of the drift estimator fitted with offset $a$. MSE reduction is $100\{1-\mathrm{MSE}(\widehat a_n)/\mathrm{MSE}(1)\}$, with a paired delta-method 95\% Monte Carlo interval. The known optimal offset is $a_*=b/\sigma^2$. All fits use joint state-weighted WLS on the same evaluation paths as Figure~\ref{sim:fig:rmse}.\end{minipage}
\end{table}

Table~\ref{sim:tab:adaptive} shows that estimating the offset improves
empirical MSE, especially at smaller $\tau$. The gain is less
clear in the NBAR model near the boundary. Better point estimation
does not translate into systematically better interval coverage.

\subsection{Inference under the maintained unit root}
\label{sim:maintained}

Although joint and constrained estimation have the same first-order
limits, estimating the slope can affect finite-sample coverage. We
compare confidence intervals obtained by estimating $m$ jointly with
$\mu$ and by fixing $m$ at its true value of one. The comparison
includes state-weighted WLS with $a=1$ or an estimated offset, and
time-weighted WLS.

With $m=1$, the time-weighted drift estimator is
\[
 \widetilde\mu_{n,c}
 =\frac{\sum_{t=1}^n(X_t-X_{t-1})/t}{\sum_{t=1}^n1/t}.
\]
It has the same first-order Gaussian limit as the joint estimator.
Its confidence interval replaces $\widetilde\mu_n$ by
$\widetilde\mu_{n,c}$ in \eqref{sim:eq:luinterval}, using the same
estimate of $\sigma^2$.

\begin{table}[tbp]
\centering
\caption{State-weighted WLS: normal-interval coverage under the maintained unit root, $n=1{,}000$.}
\label{sim:tab:maintained}
\footnotesize
\setlength{\tabcolsep}{3.5pt}
\begin{tabular}{llrrrr}
\toprule
Model & $\tau$ & \multicolumn{2}{c}{$a=1$} & \multicolumn{2}{c}{$a=\widehat a_n$} \\ \cmidrule(lr){3-4}\cmidrule(lr){5-6} & & Joint & Constrained & Joint & Constrained \\
\midrule
Poisson & 0.25 & {92.72} & {92.84} & {93.16} & {93.36} \\
Poisson & 0.50 & {88.46} & {93.34} & {89.16} & {92.54} \\
Poisson & 0.80 & {85.92} & {94.00} & {85.32} & {92.14} \\
NBAR & 0.25 & {92.40} & {93.10} & {92.08} & {92.86} \\
NBAR & 0.50 & {89.86} & {93.78} & {88.90} & {91.88} \\
NBAR & 0.80 & {87.94} & {94.48} & {85.16} & {91.46} \\
\bottomrule
\end{tabular}
\par\smallskip
\begin{minipage}{\textwidth}\footnotesize Entries are coverage percentages for nominal 95\% intervals. State-weighted WLS intervals with $a=1$ or the estimated offset use fitted residuals. Joint fits estimate the slope and drift; constrained fits impose $m=1$. Estimated offsets use the same initial joint state-weighted WLS fit and variance regression in both columns.\end{minipage}
\end{table}

Table~\ref{sim:tab:maintained} shows that imposing the unit root markedly
improves state-weighted interval coverage near the boundary, with little
change in length. More undercoverage remains when the offset is estimated.
An additional comparison on the same designs shows that imposing the
unit root in time-weighted WLS shortens intervals but lowers coverage,
especially at smaller $\tau$.

We also consider a transient design with
$\tau=1+1/\log(1000)\approx1.145$ at $n=1{,}000$. With $a=1$,
imposing the unit root raises nominal 95\% normal-interval coverage
from 85.04\% to 94.30\% in the Poisson model and from 87.26\% to
94.36\% in the NBAR model. Joint time-weighted intervals attain
95.30\% and 95.60\% coverage, respectively.

\subsection{Boundary inference}
\label{sim:boundary}

At $\tau=1$, we construct state-weighted intervals with $a=1$ using
the parameter-free boundary quantiles from
Section~\ref{sec:boundary}. Table~\ref{sim:tab:boundary}
compares joint and constrained fits with normal intervals from
constrained time-weighted WLS.

\begin{table}[tbp]
\centering
\caption{Coverage and relative mean interval length at the recurrence boundary, $\tau=1$.}
\label{sim:tab:boundary}
\footnotesize
\setlength{\tabcolsep}{3.5pt}
\begin{tabular}{llrrrr}
\toprule
Model & $n$ & \shortstack{State\\Joint} & \shortstack{State\\Constrained} & \shortstack{Time\\Constrained} & \shortstack{Mean length ratio (\%)\\(95\% MC interval)} \\
\midrule
Poisson & 100 & {85.38} & {93.68} & {90.86} & 75.67 [75.23, 76.11] \\
Poisson & 1,000 & {88.60} & {94.86} & {93.96} & 63.19 [62.74, 63.64] \\
NBAR & 100 & {86.66} & {92.78} & {92.94} & 71.21 [70.64, 71.79] \\
NBAR & 1,000 & {89.74} & {94.58} & {94.34} & 62.16 [61.66, 62.67] \\
\bottomrule
\end{tabular}
\par\smallskip
\begin{minipage}{\textwidth}\footnotesize Coverage entries are percentages for nominal 95\% intervals. State and Time denote state-weighted and time-weighted WLS, respectively. State-weighted WLS uses $a=1$ and the boundary quantiles from Section~\ref{sec:boundary}; time-weighted WLS uses plug-in normal intervals. The length ratio is State/Time for constrained fits, in percent, with a paired delta-method 95\% Monte Carlo interval.\end{minipage}
\end{table}

Imposing the unit root brings boundary-interval coverage closer to
nominal, although undercoverage remains in short series. Constrained
state-weighted intervals are substantially shorter than constrained
time-weighted intervals. Joint fits continue to undercover.

\section{Empirical illustrations}
\label{sec:application}

We use annual global flood-disaster counts and daily UK COVID-19
mortality counts to examine how weighting and the unit-root restriction
affect drift estimation. We compare joint and constrained fits, then
examine the concentration of estimation weights, sensitivity to the
starting period, and residual dependence. These checks help assess
whether the fitted drift has a stable interpretation over the
observation period.
For reported estimates in this section, we omit sample-size subscripts
and suppress the offset argument when its value is specified in the text
or table row.

\subsection{Global flood-disaster counts}

We construct annual counts from the EM-DAT archive for 1900--2024,
giving 125 observations and $n=124$ fitted transitions.\footnote{Source:
\href{https://doi.org/10.14428/DVN/I0LTPH}{The EM-DAT Emergency Events Database Archive},
snapshot released April 30, 2026, downloaded September 29, 2026.
We select records with disaster type Flood and identify events by the
first nine characters of \texttt{DisNo.}. An event affecting several
countries is counted once, in its earliest recorded starting year.
Years without a recorded flood are assigned zero.}
Figure~\ref{fig:applications}A shows small recorded counts early in
the sample and larger, more variable counts in later decades. The
joint state-weighted fit with $a=1$ gives $\widehat m=1.008$.
The variance regression gives $\widehat{\sigma^2}=4.724$ and
$\widehat b=0.564$, hence $\widehat a=0.119$.
Using the constrained drift estimate with $a=1$ gives
$2\widehat\mu_c(1)/\widehat{\sigma^2}=0.301$.

Historical reporting coverage affects this application because state
weighting concentrates on the early, small recorded counts.
EM-DAT recommends using data from
2000 onward for trend analysis.\footnote{See the EM-DAT documentation on
\href{https://doc.emdat.be/docs/known-issues-and-limitations/specific-biases/}{temporal reporting biases}.}
We therefore examine the sensitivity of the estimates to the starting year.

\subsection{UK COVID-19 mortality counts}

\citet{BarretoSouzaChan2024} analyze 492 daily UK COVID-19 death counts
from January 30, 2020 to June 4, 2021 with a nearly unstable Poisson
INARCH model. They report a slope estimate of $0.997$ and a unit-root
test $p$-value of $0.704$. Their application focuses on the slope and
treats the additive constant as a nuisance parameter. Under the unit
root this constant is the drift $\mu$. We revisit this calendar window
to examine drift inference, using the UKHSA
archive of national-statistics deaths by date of death.\footnote{Source:
\href{https://ukhsa-dashboard.data.gov.uk/covid-19-archive-data-download}{UKHSA coronavirus dashboard archive},
downloaded September 29, 2026. We select the United Kingdom series
\texttt{newDailyNsoDeathsByDeathDate}. The archive incorporates later
revisions; the dashboard metric and original extract used by
\citet{BarretoSouzaChan2024} are unspecified.}
All 492 dates are present, giving $n=491$ transitions. The first count
is one, and the series contains 31 zero counts.

The joint state-weighted estimate with $a=1$ is $\widehat m=0.999$.
The variance estimates are $\widehat{\sigma^2}=3.005$ and
$\widehat b=0.233$, giving $\widehat a=0.077$ and
$2\widehat\mu_c(1)/\widehat{\sigma^2}=0.265$.
The variance estimates allow dispersion to differ from the Poisson
specification used by \citet{BarretoSouzaChan2024}.

\begin{figure}[tbp]
\centering
\includegraphics[width=\textwidth]{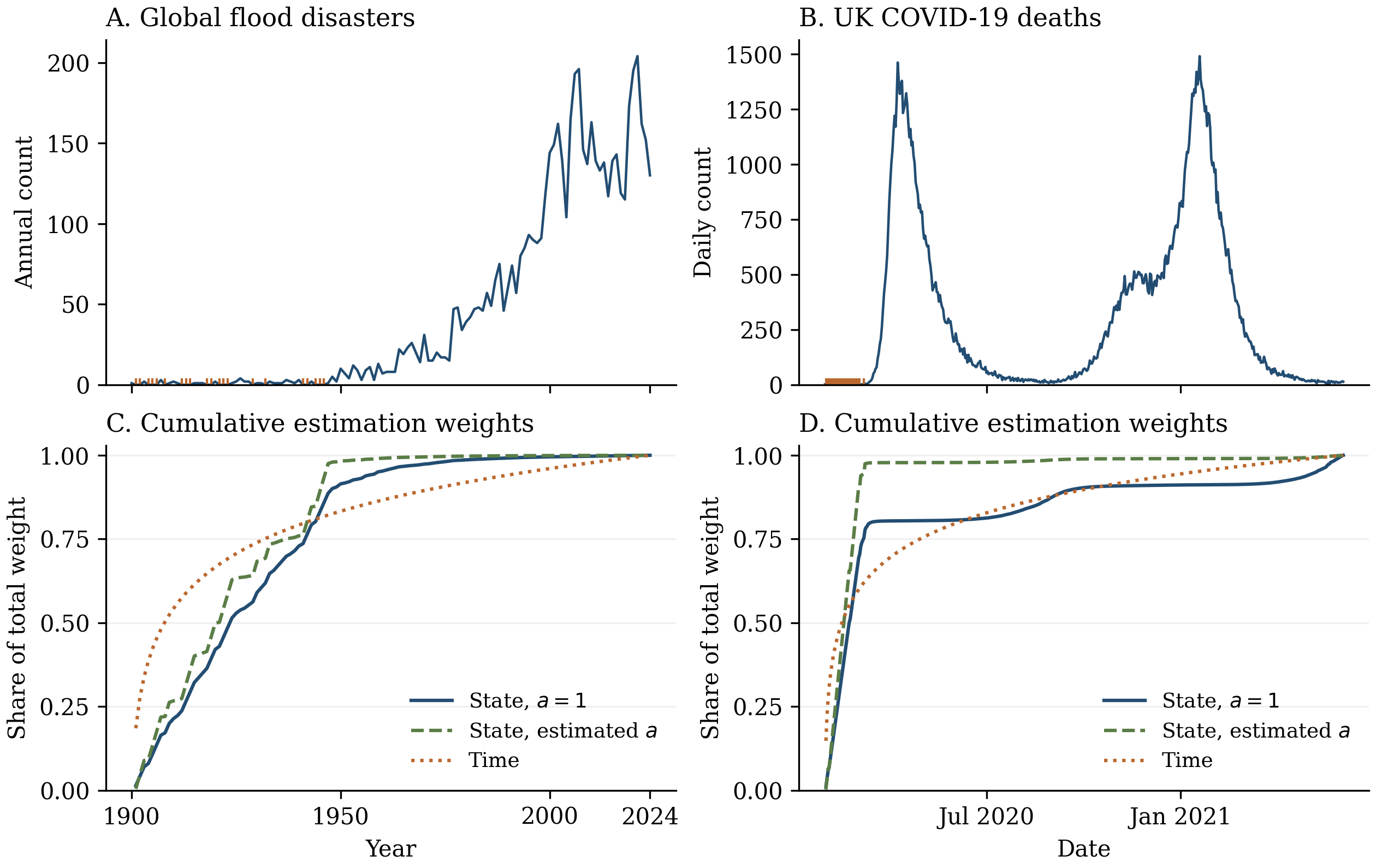}
\caption{Count series and cumulative normalized estimation weights.
Panels A and B show annual global flood-disaster counts and daily UK
COVID-19 death counts; orange ticks mark zero counts. Panels C and D
show state weights with $a=1$ or the estimated offset, and time weights
$1/t$. Each sequence of weights sums to one.}
\label{fig:applications}
\end{figure}

\subsection{Drift estimates, intervals, and sensitivity}

Table~\ref{tab:applications} compares joint and constrained fits with
the three weighting schemes. The estimated offset uses the initial
joint fit with $a=1$ in both columns. State-weighted intervals use
the fitted-residual scale $\sqrt{\widehat Q_n(a)}/H_n(a)$ and normal
critical values, as in \eqref{eq:strictCI}, with $\widehat Q_{n,c}(a)$
for constrained fits. Time-weighted intervals use
\eqref{sim:eq:luinterval}. We report nominal intervals under the
maintained unit root, using normal critical values for fixed
$\tau\ne1$.

\begin{table}[tbp]
\centering\small
\setlength{\tabcolsep}{4pt}
\caption{Joint and constrained estimates with nominal 95\% intervals for $\mu$.}
\label{tab:applications}
\begin{tabular}{lrrrrr}
\toprule
 & \multicolumn{3}{c}{Joint fit} & \multicolumn{2}{c}{Constrained fit} \\
\cmidrule(lr){2-4}\cmidrule(lr){5-6}
Weighting & $\widehat m$ & $\widehat\mu$ & 95\% interval
 & $\widehat\mu_c$ & 95\% interval \\
\midrule
\multicolumn{6}{l}{Global floods, 1900--2024 ($n=124$)} \\
State, $a=1$ & $1.008$ & $0.692$ & $[0.269,\ 1.115]$ & $0.711$ & $[0.288,\ 1.134]$ \\
State, estimated $a$ & $1.007$ & $0.737$ & $[0.376,\ 1.097]$ & $0.740$ & $[0.379,\ 1.101]$ \\
Time & $0.993$ & $0.243$ & $[-0.584,\ 1.070]$ & $0.170$ & $[-0.522,\ 0.861]$ \\
\addlinespace
\multicolumn{6}{l}{UK deaths, January 30, 2020--June 4, 2021 ($n=491$)} \\
State, $a=1$ & $0.999$ & $0.411$ & $[0.128,\ 0.693]$ & $0.398$ & $[0.117,\ 0.680]$ \\
State, estimated $a$ & $0.999$ & $0.216$ & $[0.054,\ 0.378]$ & $0.215$ & $[0.053,\ 0.377]$ \\
Time & $1.001$ & $1.142$ & $[-0.255,\ 2.540]$ & $1.350$ & $[-0.169,\ 2.868]$ \\
\bottomrule
\end{tabular}
\vspace{4pt}
\begin{minipage}{\textwidth}\footnotesize
Constrained fits impose $m=1$. State weights are $(a+X_{t-1})^{-1}$;
estimated offsets are $0.119$ for floods and $0.077$ for deaths.
In this table, hats denote estimates from the weighting scheme in each
row, including the time-weighted estimates denoted by tildes elsewhere.
Time weights are $1/t$, with $t=1$ at the first fitted transition.
All interval endpoints are reported without truncation at zero.
\end{minipage}
\end{table}

Imposing the unit root changes the state-weighted drift estimates
little in either dataset. Within the constrained fits, estimating the
offset changes the flood estimate only slightly but nearly halves the
UK death estimate. The estimated-offset intervals are the shortest
among the three weighting schemes in both datasets.

For floods, the constrained estimate with the estimated offset is
$0.740$, with interval $[0.379,1.101]$. Under the model, this corresponds
to an expected increase of $7.40$ in the annual count over ten years,
with an interval $[3.79,11.01]$ for that expected increase. Multiplying
the drift interval by the horizon accounts for estimation uncertainty
in the mean change; a prediction interval for a future count must also
include future process variation.

The differences between weighting schemes reflect which increments
contribute most to the fitted drift. Under the maintained model,
emphasizing observations following small counts improves precision
because their conditional variance is lower. In these records,
however, small counts are concentrated near the beginning of the
sample. The interpretation of the resulting estimates therefore also
depends on the comparability of early and later observations.

Figure~\ref{fig:applications} shows the weight profiles. For floods, observations following
a zero count receive $59.2\%$ of the total state weight with $a=1$;
the first 50 transitions receive $91.5\%$. For UK deaths, observations
following a zero receive $70.5\%$ of that weight. The smaller estimated
offsets increase the relative weight on zero counts further.

\begin{table}[tbp]
\centering\small
\caption{Sensitivity to the starting period: unrestricted slopes and
constrained drift estimates.}
\label{tab:applications-start}
\begin{tabular}{lrrrr}
\toprule
 & & \multicolumn{3}{c}{Constrained drift estimate} \\
\cmidrule(lr){3-5}
Sample period & $\widehat m(1)$ & State, $a=1$ & State, estimated $a$ & Time \\
\midrule
\multicolumn{5}{l}{Global floods} \\
1900--2024 & $1.008$ & $0.711$ & $0.740$ & $0.170$ \\
1950--2024 & $0.972$ & $3.046$ & $2.798$ & $0.195$ \\
1980--2024 & $0.906$ & $4.233$ & $4.233^{\dagger}$ & $2.942$ \\
2000--2024 & $0.390$ & $2.110$ & $2.110^{\dagger}$ & $1.413$ \\
\addlinespace
\multicolumn{5}{l}{UK deaths (endpoint June 4, 2021)} \\
From January 30, 2020 & $0.999$ & $0.398$ & $0.215$ & $1.350$ \\
From March 1, 2020 & $0.996$ & $1.175$ & $1.175^{\dagger}$ & $4.718$ \\
\bottomrule
\end{tabular}
\vspace{4pt}
\begin{minipage}{\textwidth}\footnotesize
The slope is from the joint state-weighted fit with $a=1$.
Drift estimates impose $m=1$ in every window; {time weights restart}
at $t=1$. $\dagger$ denotes the fallback $a=1$ in
\eqref{ad:offsetpilot}, because at least one variance coefficient is
nonpositive.
\end{minipage}
\end{table}

Table~\ref{tab:applications-start} shows substantial changes in drift
estimates when early observations are removed. The flood slope also
moves away from one in the recent windows, reaching $0.390$ over
2000--2024. The near-unit-root fit for the full historical record
is therefore not reproduced in that recent window. For UK deaths, the
slope remains near one after March 1, 2020, while all three constrained
drift estimates increase substantially. Near-unit-root slope estimates
can thus coexist with appreciable sensitivity in drift estimation.

Finally, we examine residual autocorrelation in the constrained fit
with $a=1$. The residuals divided by $\sqrt{1+X_{t-1}}$ have lag-one
autocorrelation $-0.204$ for floods. For UK deaths, autocorrelations at
lags two to five range from $0.173$ to $0.268$. {These exploratory
diagnostics suggest that the conditional mean may need additional lags.}

\section{Concluding remarks}
\label{sec:conclusion}

\citet{Lu2026} established consistent drift estimation across recurrent
and transient regimes using {time-weighted WLS}.
We show that state
weighting achieves faster convergence under recurrence, resolving the
missing distributional theory for the estimator of
\citet{WeiWinnicki1990}. Together with the known transient limit, this
establishes an asymptotic precision advantage over {time-weighted WLS} in
every regime. Residual studentization makes this advantage usable for
inference, with standard normal critical values except at the recurrence
boundary.

Several questions remain open. The studentized statistic has different
limits at and away from $\tau=1$, so inference with uniform coverage
near this boundary remains to be developed. The drift limit theory also
assumes $m=1$ exactly. Extending it to $m$ approaching one would clarify
the transition between stationary and unit-root behaviour;
\citet{AnneLuYuZhou2026} for affine processes and
\citet{BarretoSouzaChan2024} for nearly unstable INARCH processes provide
starting points. How preliminary unit-root testing affects the
state-weighted drift inference developed here also remains to be
established. {Finally, the residual autocorrelation in both
applications suggests models with additional lags.
\citet{LuIspany2026} develop inference for unit-root INAR(2) models with
time-weighted WLS; extending state-weighted WLS to such models remains open.}

\paragraph{Data availability.}
The global flood-disaster records are available from the EM-DAT archive,
and the UK COVID-19 death counts from the UKHSA coronavirus dashboard
archive. Section~\ref{sec:application} gives the source links, download
date, sample periods, and construction rules.

\paragraph{Use of generative AI.}
OpenAI Codex was used to assist with mathematical
exploration, manuscript revision, and simulation programming.

\setlength{\bibsep}{2pt}
\bibliographystyle{abbrvnat}
\begingroup
\interlinepenalty=10000
\bibliography{references}
\endgroup

\clearpage
\appendix
\setcounter{equation}{0}
\renewcommand{\theequation}{a\arabic{equation}}
\renewcommand{\theHequation}{appendix.\arabic{equation}}
\section{Proofs for fixed-offset estimators}
\label{app:original}

Let $\F_t=\sigma(X_0,\ldots,X_t)$ be the observation filtration.
The innovations $(W_t)$ form a martingale difference sequence with
respect to $(\F_t)$. The weighted sum $M_n(a)$ is a martingale with
predictable quadratic variation $\langle M(a)\rangle_n=Q_n(a)$.
In the proofs, we call $M_n(a)$ the weighted innovation score,
$H_n(a)$ the accumulated-weight clock, and $Q_n(a)$ the score's bracket,
or predictable quadratic variation. A cycle reward is the sum of the
specified contributions over one return cycle to zero.

We first collect the common bounds and control the effect of estimating
the slope. We then prove the joint score and clock limits under strict
recurrence and at the boundary, extend the boundary limit to every fixed
offset, and justify residual studentization. The final subsection
derives the boundary transform and describes the computation of its
quantiles.

For the slope reduction and the strict-recurrence proof, fix $a>0$ and
abbreviate $H_n=H_n(a)$, $M_n=M_n(a)$, and
$Q_n=Q_n(a)$. We likewise suppress the offset
in the fitted coefficients.
The letter $C$ denotes a finite positive constant that may change from
line to line. Constants in these arguments may depend on the fixed offset.

\subsection{Common inputs and the slope reduction}
\label{app:slope-reduction}

\paragraph{Innovation moments.}
Under Assumption~\ref{ass:model}, for every
$2<p\leq\min\{2+\delta,3\}$,
\begin{equation}
 \E\bigl(|W_t|^p\mid\F_{t-1}\bigr)
 \leq C(1+X_{t-1})^{p/2}.
 \label{eq:innovation-moment}
\end{equation}
This bound holds in all three regimes. Conditional on $X_{t-1}=x$,
$W_t=\sum_{i=1}^x(Z_{i,t}-1)+(\epsilon_t-\mu)$.
The moment inequality for sums of independent centered variables bounds
the $p$th moment of the offspring sum by $C(x+x^{p/2})$;
the centered immigration term has a finite $p$th moment.

\paragraph{Functional limit.}
Write $S_n=\sum_{t=1}^nX_{t-1}$. For every $\tau>0$, the critical
functional limit of \citet[Theorem~2.1]{WeiWinnicki1989} implies
\begin{equation}
 X_n-X_0-n\mu=\sum_{t=1}^nW_t=\Op(n),\qquad
 S_n/n^2\Rightarrow J=\int_0^1Y(s)\,ds.
 \label{eq:critical-functional-inputs}
\end{equation}
Here $J>0$ almost surely, and $Y$ solves
$dY(s)=\mu\,ds+\sigma\sqrt{Y(s)}\,dB_Y(s)$ with $Y(0)=0$,
where $B_Y$ is standard Brownian motion.

\paragraph{Recurrent occupation bounds.}
For $0<\tau\leq1$, by
\citet[Lemmas~2.6 and 2.8 and Corollary~2.11]{WeiWinnicki1989},
we may work on the recurrent communicating class containing zero and
start the process at zero; any fixed finite initial population gives
the same limits by their coupling argument. Assumption~\ref{ass:model}
implies $\E[Z^2\log^+Z]<\infty$, so their Lemma~2.10 gives
$u_j:=\PP_0(X_j=0)\sim p_0j^{-\tau}$. We use the return-cycle
invariant measure normalized by $\pi_0=1$.
For the process started at zero, let
$\mathcal S_0=0<\mathcal S_1<\cdots$ be its successive return times and set
\begin{equation}
 N_n=\max\{i:\mathcal S_i\leq n\}
     =\sum_{t=1}^n\1_{\{X_t=0\}}.
 \label{eq:return-count}
\end{equation}
For a nonnegative function $\varphi$ and a set of states $A$, write
$\pi(\varphi)=\sum_{j\geq0}\pi_j\varphi(j)$ and
$\pi(A)=\sum_{j\in A}\pi_j$.

Under Assumption~\ref{ass:model}, the positive-real generating-function
bound of \citet[Proposition~4.1]{LiZhang2019}, restated in
\citet[Lemma~2.2, equation~(2.3)]{LiZhang2021}, gives
$\E_0e^{-sX_t}\leq C(1+ts)^{-\tau}$ for $0<s\leq1$.
For $q\geq1$, integrate this bound using
\[
 \Gamma(q)\E_0(1+X_t)^{-q}
 =\int_0^\infty s^{q-1}e^{-s}\E_0e^{-sX_t}\,ds.
\]
Splitting at $(t+1)^{-1}$ and $1$, and then summing over $t$, gives
\begin{equation}
 \E\sum_{t=1}^n(1+X_{t-1})^{-q}
 \leq C_q
 \begin{cases}
 n^{1-\tau},&0<\tau<1,\quad q\geq1,\\
 (\log n)^2,&\tau=1,\quad q=1,\\
 \log n,&\tau=1,\quad q>1,
 \end{cases}
 \qquad n\geq2.
 \label{eq:harmonic-bound}
\end{equation}
Monotone coupling with a process started at zero extends the upper bound
to every nonnegative initial population, so \eqref{eq:harmonic-bound}
holds for the model considered here.

\paragraph{Slope reduction.}
In recurrence $H_n\to\infty$ almost surely and $Q_n\leq C_aH_n$.
The martingale strong-law argument in the proof of
\citet[Theorem~2.5]{WeiWinnicki1990} therefore applies to $w_a$ and
gives $M_n/H_n\to0$ almost surely. Solving the normal equations gives
\begin{equation}
 \widehat m_n-1
 =\frac{X_n-X_0-n\mu-nM_n/H_n}{S_n+an-n^2/H_n}.
 \label{eq:slope-normal}
\end{equation}
The denominator divided by $n^2$ converges weakly to $J>0$, so
$n(\widehat m_n-1)=\Op(1)$. The accumulated-weight limits proved below give
$H_n^{-1}=\Op(n^{-(1-\tau)})$ under strict recurrence and
$H_n^{-1}=\Op((\log n)^{-2})$ at the boundary. Together, these bounds
make the slope term in the exact identity \eqref{eq:reduction}
negligible on the respective estimation-error scales.

\subsection{The cycle limit under strict recurrence}
\label{app:strict-cycle}

Throughout this subsection, $0<\tau<1$ and $\alpha=1-\tau$.

The proof of
\citet[Theorem~2.18]{WeiWinnicki1989} gives $\pi(w_1)<\infty$.
For every fixed $a>0$, $w_a\asymp w_1$ and
$w_a^2V\leq C_aw_1$, hence $0<h(a),v(a)<\infty$.
For the return count $N_n$ in \eqref{eq:return-count}, their
Corollary~2.20 compares accumulated rewards with visits to zero and gives
\begin{equation}
 \left(\frac{H_n(a)}{N_n},\frac{Q_n(a)}{N_n}\right)
 \longrightarrow(h(a),v(a))\quad\text{almost surely}.
 \label{eq:occupation-ratios}
\end{equation}

\begin{proof}[Proof of Theorem~\ref{thm:strict}, joint limit]
We verify the cycle moments and normalization needed to apply
\citet[Theorem~3]{ResnickGreenwood1979} to the discrete-time cycle sums.
Work on the recurrent class containing the atom zero, as in
Appendix~\ref{app:slope-reduction}, and define
\[
 \mathcal L_i=\mathcal S_i-\mathcal S_{i-1},\qquad
 \mathcal C_i=\sum_{t=\mathcal S_{i-1}+1}^{\mathcal S_i}w_a(X_{t-1})W_t.
\]
The strong Markov property makes the cycle vectors
$(\mathcal L_i,\mathcal C_i)$ iid. The score is an additive martingale
with additive predictable bracket $Q_n(a)$, whose expected increment
over a cycle is $v(a)<\infty$. Stopping first at
$\mathcal L_i\wedge j_0$ and then taking the $L^2$ limit gives
\begin{equation}
 \E\mathcal C_i=0,\qquad \E\mathcal C_i^2=v(a).
 \label{eq:cycle-isometry}
\end{equation}

The renewal identity and $u_j\sim p_0j^{\alpha-1}$ give
\[
 1-\E e^{-s\mathcal L_1}
 =\left(\sum_{j\geq0}u_je^{-sj}\right)^{-1}
 \sim\frac{s^\alpha}{p_0\Gamma(\alpha)}\qquad(s\downarrow0).
\]
The power-series and Laplace Tauberian theorems
\citep[Sections~1.7 and~8.1]{BinghamGoldieTeugels1987} therefore yield
\[
 \PP(\mathcal L_1>n)
 \sim\frac{n^{-\alpha}}{p_0\Gamma(\alpha)\Gamma(1-\alpha)},
 \qquad
 \kappa_n:=p_0\Gamma(\alpha)n^\alpha
 \sim\frac{1}{\Gamma(1-\alpha)\PP(\mathcal L_1>n)}.
\]
Thus the chain is $\beta$-null recurrent with $\beta=\alpha$, and
$\kappa_n$ is the return-count normalization, in the terminology of
\citet{KarlsenTjostheim2001}.

Define the cycle-sum processes
\[
 \bar{\mathcal L}_n(u)=\frac{1}{n}\sum_{i\leq\lfloor\kappa_nu\rfloor}\mathcal L_i,
 \qquad
 \bar{\mathcal C}_n(u)=\frac{1}{\sqrt{\kappa_n}}
        \sum_{i\leq\lfloor\kappa_nu\rfloor}\mathcal C_i.
\]
The length-tail asymptotics give $\bar{\mathcal L}_n\Rightarrow D_\alpha$, where
$D_\alpha$ is the stable subordinator normalized by
$\E e^{-sD_\alpha(u)}=e^{-us^\alpha}$.
By Donsker's theorem and \eqref{eq:cycle-isometry},
$\bar{\mathcal C}_n\Rightarrow\sqrt{v(a)}B_\alpha$, where $B_\alpha$ is standard
Brownian motion. For iid bivariate sums, \citet[Theorem~3]{ResnickGreenwood1979}
shows that these marginal limits imply joint convergence in
$D([0,\infty))^2$ with the product Skorokhod $J_1$ topology, and that
$D_\alpha$ and $B_\alpha$ are independent. No independence of
$\mathcal L_i$ and $\mathcal C_i$ within a cycle is required.

The stable subordinator is almost surely strictly increasing and
unbounded, so inversion at level one is continuous at its paths.
Since
\[
 \inf\{u:\bar{\mathcal L}_n(u)>1\}=\frac{N_n+1}{\kappa_n},
\]
we obtain $N_n/\kappa_n\Rightarrow
E_\alpha:=\inf\{u:D_\alpha(u)>1\}$, with the Mittag--Leffler law
specified in Section~\ref{sec:strict}.
The Brownian limit is continuous, so joint convergence and evaluation
of $\bar{\mathcal C}_n$ at $N_n/\kappa_n$ give
\[
 \left(\frac{N_n}{\kappa_n},
       \frac{\sum_{i\leq N_n}\mathcal C_i}{\sqrt{\kappa_n}}\right)
 \Rightarrow
 \left(E_\alpha,\sqrt{v(a)}B_\alpha(E_\alpha)\right).
\]
Here $B_\alpha$ is independent of $E_\alpha$, since the latter is a
function of $D_\alpha$.

To pass from completed cycles to $M_n(a)$, let $\mathcal C_i^*$ be the
maximum absolute partial score within cycle $i$. Doob's inequality
and \eqref{eq:cycle-isometry} give $\E(\mathcal C_i^*)^2\leq4v(a)$.
For fixed $C,\eta>0$,
\[
 \PP\left(\max_{i\leq\lfloor C\kappa_n\rfloor+1}
          \mathcal C_i^*>\eta\sqrt{\kappa_n}\right)
 \leq(\lfloor C\kappa_n\rfloor+1)
       \PP(\mathcal C_1^*>\eta\sqrt{\kappa_n})\longrightarrow0.
\]
Together with tightness of $N_n/\kappa_n$, this makes the unfinished
cycle negligible. Equation~\eqref{eq:occupation-ratios} now gives
\eqref{eq:strict-joint}, since
$(B_\alpha(E_\alpha),E_\alpha)\stackrel d=
(\sqrt{E_\alpha}\mathcal Z,E_\alpha)$ with $\mathcal Z$ independent of
$E_\alpha$. The slope bound in Appendix~\ref{app:slope-reduction} and
\eqref{eq:reduction} imply \eqref{eq:strict-fixed}; residual
studentization is justified in Appendix~\ref{app:residual-studentization}.
\end{proof}

\subsection{The joint limit at the recurrence boundary}
\label{app:boundary-potential}

Throughout this subsection, $\tau=1$. We prove the joint limit
\eqref{eq:boundary-joint} first for $a=1$. The argument has three
parts. Lemma~\ref{lem:potential} constructs a function $g$ that grows
like $\log x$ and has drift only at zero. Its martingale decomposition
therefore gives a reflected process. Lemma~\ref{lem:operational-time}
proves its Brownian limit using quadratic variation as time, and
relates that time to observation time. We then recover the limit at
observation $n$ and transfer it to the WLS score and accumulated weight.

In this construction only, write $H_n=H_n(1)$,
$M_n=M_n(1)$, and $Q_n=Q_n(1)$. Appendix~\ref{app:boundary-offsets} extends the
result to every fixed $a>0$.

\paragraph{The logarithmic potential.}

Start the process at zero.
Write $F$ and $G$ for the offspring and immigration probability generating
functions, and $F_j$ for the $j$th iterate of $F$, with $F_0(s)=s$.
The return probabilities satisfy
\[
 u_j=\prod_{i=0}^{j-1}G(F_i(0)),\qquad u_0=1.
\]
At $\tau=1$, \citet[Lemma~2.10]{WeiWinnicki1989} and the critical
offspring extinction estimate give
\begin{equation}
 u_j\sim\frac{p_0}j,\qquad
 \omega_j:=-\log F_j(0)\sim\frac{2}{\sigma^2j},\qquad p_0>0.
 \label{eq:local-inputs}
\end{equation}
The second relation also follows directly by expanding
$F(1-z)=1-z+\sigma^2z^2/2+o(z^2)$ and iterating.

Define on the nonnegative integers
\begin{equation}
 g(0)=0,\qquad
 g(x)=\frac1{p_0}\left[1+\sum_{j=1}^\infty u_j(1-e^{-\omega_jx})\right]
 \quad(x>0).
 \label{eq:potential}
\end{equation}
The $j=0$ contribution equals one for $x>0$ and zero at $x=0$.
The series is finite for every finite $x$, since its tail is $O(x/j^2)$.

Write $Pg(x)=\E_xg(X_1)$ for the transition operator applied to $g$.

\begin{lemma}[Potential and local variance]
\label{lem:potential}
The function in \eqref{eq:potential} is nonnegative and strictly increasing
on the integers, and
\begin{align}
 Pg(x)-g(x)&=p_0^{-1}\1_{\{x=0\}},\label{eq:poisson}\\
 g(x)&\sim\log x,\quad
 g'(x)\sim x^{-1},\quad g''(x)\sim-x^{-2}
 \quad(x\to\infty).\label{eq:g-asymp}
\end{align}
Derivatives refer to the smooth extension of \eqref{eq:potential} on
$(0,\infty)$. Let
\[
 d_g(x)=\E_x[g(X_1)-g(x)],\quad
 \zeta=g(X_1)-g(x)-d_g(x),\quad v_g(x)=\E_x\zeta^2.
\]
Then
\begin{align}
 v_g(x)&\sim\sigma^2/x,\label{eq:vg}\\
 \E_x\left[\zeta-\frac{X_1-x-\mu}{1+x}\right]^2&=o(x^{-1}).
 \label{eq:score-approx}
\end{align}
For some $p$ in the range of \eqref{eq:innovation-moment} and positive
finite constants $c_g,C_g,C$, the following bounds hold uniformly over
states in the communicating class containing zero:
\begin{equation}
 \frac{c_g}{1+x}\leq v_g(x)\leq\frac{C_g}{1+x},\qquad
 \E_x|\zeta|^p\leq C v_g(x).
 \label{eq:global-bounds}
\end{equation}
\end{lemma}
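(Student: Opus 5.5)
The plan is to work with the generating-function form of the return probabilities. The key tool is the identity $\E_x s^{X_1}=F(s)^xG(s)$. Writing $e^{-\omega_j x}=F_j(0)^x$, we get
\[
\E_x F_j(0)^{X_1}=F_{j+1}(0)^xG(F_j(0)).
\]
Since $u_{j+1}=u_jG(F_j(0))$, for $x>0$ this gives
\[
p_0\,Pg(x)=\E_x\bigl[\1_{\{X_1>0\}}\bigr]+\sum_{j\ge1}\bigl(u_j-u_{j+1}F_{j+1}(0)^x\bigr).
\]
The indicator term is handled by noting that $\PP_x(X_1=0)=F(0)^xG(0)=u_1F_1(0)^x$. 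Combining it with the series and telescoping the $u_j$ terms, with $u_j\to0$, should give exactly $p_0\,g(x)$. At $x=0$ the same computation leaves $p_0Pg(0)=1-u_1+\sum_{j\ge1}(u_j-u_{j+1})=1$. This yields \eqref{eq:poisson}. Absolute convergence justifies all rearrangements, since the terms are $O(x/j^2)$. Nonnegativity and strict monotonicity are immediate, because each summand $u_j(1-e^{-\omega_jx})$ is increasing in $x$ and $u_j>0$.

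For \eqref{eq:g-asymp}, I would insert $u_j\sim p_0/j$ and $\omega_j\sim 2/(\sigma^2 j)$ from \eqref{eq:local-inputs}. The series then behaves like $p_0\sum_j j^{-1}(1-e^{-cx/j})$ with $c=2/\sigma^2$. Splitting at $j\asymp x$, the part $j\lesssim x$ contributes $\sim p_0\log x$ and the remainder contributes $O(p_0)$, so $g(x)\sim\log x$. The derivative series are handled the same way. The first derivative is $\sum_j u_j\omega_je^{-\omega_jx}$, which behaves like $\sum_j(c/j^2)e^{-cx/j}\sim\int_0^\infty c\,y^{-2}e^{-cx/y}\,dy=1/x$. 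The second derivative similarly gives $-x^{-2}$. These are Riemann-sum/Karamata arguments in which the $o(1)$ errors in the regularly varying sequences are absorbed.

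For the variance statements, I would Taylor expand $g(X_1)-g(x)$ around $x$ as $g'(x)(X_1-x)+\tfrac12g''(\xi)(X_1-x)^2$. Under $\PP_x$, $X_1-x$ has mean $\mu$ and variance $\sigma^2x+b$. Hence
\[
\zeta=g'(x)(X_1-x-\mu)+R,
\]
where $\E_x R^2=o(x^{-1})$. Since $g'(x)\sim 1/(1+x)$ and $\E_x(X_1-x-\mu)^2\sim\sigma^2x$, this gives both \eqref{eq:vg} and \eqref{eq:score-approx}. The remainder control is the main obstacle. I would split on $|X_1-x|\le x/2$, where $g''(\xi)=O(x^{-2})$ makes the remainder $O(x^{-2}(X_1-x)^2)$, with second moment $O(x^{-4}\E(X_1-x)^4)$. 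On the complement I would use the crude bound $|g(y)-g(x)|\le C(1+|\log(1+y)|+\log(1+x))$ together with Chebyshev/Markov with the $p$-th moment from \eqref{eq:innovation-moment}. The fourth moment is not available under Assumption~\ref{ass:model}, so I would instead truncate. I would use $|R|\le C\min\{x^{-2}(X_1-x)^2,\ x^{-1}|X_1-x|\}$, valid because $g'$ is decreasing and of order $1/x$ on the relevant range. Then $\E R^2\le Cx^{-2-p/2}\E|X_1-x|^{p}\cdot x^{-(2-p/2)}$-type interpolation gives $o(x^{-1})$ using $p>2$.

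Finally, for \eqref{eq:global-bounds}, the asymptotics cover large $x$. For the finitely many small states in the communicating class, $v_g(x)>0$ because $X_1$ is nondegenerate and $g$ is strictly increasing, and both $v_g$ and $\E_x|\zeta|^p$ are finite. For large $x$, I would use $|\zeta|\le C x^{-1}|X_1-x-\mu|+|R'|$ with the same truncation. Together with \eqref{eq:innovation-moment}, this gives $\E_x|\zeta|^p\le Cx^{-p}x^{p/2}\le Cx^{-1}$ for $p\ge2$, hence $\E_x|\zeta|^p\le C\,v_g(x)$.
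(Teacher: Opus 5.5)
Your proposal is correct and follows essentially the same route as the paper: telescoping $u_j\E_xF_j(0)^{X_1}=u_{j+1}F_{j+1}(0)^x$ to get the Poisson equation, Riemann sums with $u_j\sim p_0/j$ and $\omega_j\sim 2/(\sigma^2 j)$ for the asymptotics of $g,g',g''$, linearization controlled by the $p$th innovation moment, and a separate treatment of the finitely many small states. In place of the paper's $\eta$-window mean-value argument and Chernoff lower-tail bound, you use the bound $|R|\le C\min\{x^{-2}\Delta^2,x^{-1}|\Delta|\}$ with $\Delta=X_1-x$ together with Markov's inequality on $\{X_1<x/2\}$, which is a valid substitute; note that the correct interpolation is $\E_xR^2\le Cx^{-p}\E_x|\Delta|^p\le Cx^{-p/2}$ (your displayed exponent is off), and the Markov lower tail adds a factor $\log^2(1+x)$ in $\E_x\zeta^2$ and $\log^p(1+x)$ in $\E_x|\zeta|^p$, which is absorbed into $o(x^{-1})$ and $O(x^{-1})$ only because $p>2$.
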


\begin{proof}
For the partial potential sum through $j=N$ use
$\PP_x(X_j=0)=u_jF_j(0)^x$. Its one-step drift telescopes to
$p_0^{-1}[\1_{\{x=0\}}-\PP_x(X_{N+1}=0)]$.
Monotone convergence for the potential and finiteness of $\E_xg(X_1)$
allow passage to the limit and prove \eqref{eq:poisson}.
Finiteness follows, for example, from $g(y)\leq C[1+\log(1+y)]$,
which the series and \eqref{eq:local-inputs} imply.

Differentiation for $x>0$ gives
\[
 g'(x)=p_0^{-1}\sum_{j\geq1}u_j\omega_je^{-\omega_jx},\qquad
 -g''(x)=p_0^{-1}\sum_{j\geq1}u_j\omega_j^2e^{-\omega_jx}.
\]
Splitting the sums at $j=\eta x$ and $j=x/\eta$ and taking Riemann
sums on the middle interval gives respectively
\[
 \begin{aligned}
 xg'(x)&\to\int_0^\infty \frac{2}{\sigma^2z^2}e^{-2/(\sigma^2z)}\,dz=1,\\
 -x^2g''(x)&\to\int_0^\infty \frac{4}{\sigma^4z^3}e^{-2/(\sigma^2z)}\,dz=1.
 \end{aligned}
\]
The small-$j$ tails are exponentially suppressed and the large-$j$ tails
are bounded by summable inverse powers. Integrating the derivative
asymptotic gives $g(x)/\log x\to1$. These bounds also give
$g'(x)\leq C/(1+x)$ for $x\geq1$.

We next establish the probabilistic bounds needed for linearization.
Choose $p$ as in \eqref{eq:innovation-moment} and put $\Delta=X_1-x$.
Since $\Delta=W_1+\mu$ under $\PP_x$, that bound gives
$\E_x|\Delta|^p\leq C(1+x^{p/2})$.
For each fixed $\eta\in(0,1)$,
$\PP_x(X_1<(1-\eta)x)\leq e^{-c_\eta x}$ for sufficiently large $x$:
immigration is nonnegative and the offspring lower tail has a Chernoff
bound using $\E e^{-tZ}<\infty$ and $\E Z=1$.
Define $\rho_x=g(X_1)-g(x)-\Delta/x$. On
$|\Delta|\leq\eta x$, with $0<\eta<1/2$, \eqref{eq:g-asymp}
and the mean-value theorem give
\[
 g(X_1)-g(x)=\Delta/x+\rho_x,\qquad
 \E_x[\rho_x^2;|\Delta|\leq\eta x]
 \leq\frac{o(1)+O(\eta^2)}{x}.
\]
Indeed, uniformly for $\xi\in[(1-\eta)x,(1+\eta)x]$,
$xg'(\xi)=1+O(\eta)+o(1)$, and $\E_x\Delta^2=O(x)$.
On the upper complement the derivative bound
gives $|g(X_1)-g(x)|\leq C\Delta/x$; its second moment multiplied by
$x$ vanishes by the $p$th-moment bound. On the lower complement use
$|g(X_1)-g(x)|\leq C\log(1+x)$ and the exponential bound.
More explicitly, the scaled upper-tail bound is
\[
 x\E_x[(\Delta/x)^2;\Delta>\eta x]
 \leq C\eta^{2-p}x^{1-p/2}\longrightarrow0,
\]
whereas the lower-tail bound is at most
$Cx\log^2(1+x)e^{-c_\eta x}$.
Replacing $\Delta/x$ by $(\Delta-\mu)/(1+x)$ has mean-square cost
$O(x^{-2})$, and $d_g(x)=0$ for $x>0$ by \eqref{eq:poisson}.
Let $x\to\infty$ and then $\eta\downarrow0$ to get
\eqref{eq:score-approx}. Since
$\E_x[(X_1-x-\mu)/(1+x)]^2\sim\sigma^2/x$, it also gives
\eqref{eq:vg}.

For the $p$th moment, split at $X_1=x/2$.
On the upper part the derivative bound controls the increment by
$C|\Delta|/x$; on the lower part use the exponential bound.
Thus $\E_x|\zeta|^p\leq Cx^{-p/2}$ for large $x$.
The finitely many remaining states in that class have finite moments and
$v_g(x)>0$ because the transition is nondegenerate and $g$ is strictly
increasing. Enlarging constants proves \eqref{eq:global-bounds}.
\end{proof}

\paragraph{Reflection and the change of time.}
Define the exact martingale and its bracket by
\begin{equation}
 \begin{aligned}
 Z_n^g&=g(X_n)-L_n,\qquad
 L_n=p_0^{-1}\sum_{k=0}^{n-1}\1_{\{X_k=0\}},\\
 \langle Z^g\rangle_n&=\sum_{k=0}^{n-1}v_g(X_k).
 \end{aligned}
 \label{eq:decomposition}
\end{equation}
The increasing process $L_n$ grows by $p_0^{-1}$ only if $g(X_{n-1})=0$.
Nonnegativity and this support property imply the deterministic bound
\begin{equation}
 0\leq L_n-\max_{k\leq n}(-Z_k^g)_+\leq p_0^{-1}.
 \label{eq:skorokhod-error}
\end{equation}
Thus the decomposition is a discrete reflection with a uniformly bounded
error. We call $\langle Z^g\rangle_n$, the predictable quadratic variation of the martingale
$Z^g$, its intrinsic time, and $n$ physical time.

\begin{lemma}[Intrinsic time and physical time]
\label{lem:operational-time}
Let $r\to\infty$ and $k_r(s)=\inf\{k:\langle Z^g\rangle_k>r^2s\}$.
Use the standard Brownian motion $B$, its regulator $\ell$, and the
reflected process $R$ from Section~\ref{sec:boundary}.
Then on compact intrinsic-time intervals,
\begin{equation}
 \left(\frac{Z^g_{k_r(s)}}r,\frac{L_{k_r(s)}}r,
                  \frac{g(X_{k_r(s)})}r\right)
 \Rightarrow (B(s),\ell(s),R(s)).
 \label{eq:operational}
\end{equation}
Jointly, for every fixed $s>0$,
\begin{equation}
 \frac{\log(1+k_r(s))}{r}
 \Rightarrow \sup_{u\leq s}R(u).
 \label{eq:physical}
\end{equation}
The convergence holds jointly for finitely many such times, with the
same limiting path as in \eqref{eq:operational}.
\end{lemma}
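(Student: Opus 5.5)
The plan has two parts: an invariance principle for $Z^g$ in its own intrinsic time, which gives \eqref{eq:operational}, and then a pathwise comparison of physical and intrinsic time through the running maximum of the process, which gives \eqref{eq:physical}.

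\textbf{Intrinsic-time invariance principle.} By \eqref{eq:global-bounds}, $v_g(X_k)\geq c_g/(1+X_k)$. The accumulated weight $H_n$ diverges almost surely under recurrence, so $\langle Z^g\rangle_n\to\infty$ almost surely and $k_r(s)<\infty$ for every $s$. At the stopping time $k_r(s)$ the bracket equals $r^2s$ up to an overshoot of at most $\sup_x v_g(x)\leq C_g$, so $\langle Z^g\rangle_{k_r(s)}/r^2\to s$ uniformly on compacts. The Lyapunov condition follows from the second bound in \eqref{eq:global-bounds}:
\[
 r^{-p}\sum_{k<k_r(s)}\E\bigl(|\zeta_{k+1}|^p\mid\F_k\bigr)\leq Cr^{-p}\langle Z^g\rangle_{k_r(s)}\leq C(s+1)r^{2-p}\to0 .
\]
The martingale functional central limit theorem for time-changed martingales (Rebolledo, or Helland's stopping-time version) then gives $Z^g_{k_r(\cdot)}/r\Rightarrow B$ in $D[0,S]$.

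\textbf{Reflection.} By \eqref{eq:skorokhod-error}, $L_{k_r(s)}/r$ equals $\max_{u\leq s}\{-Z^g_{k_r(u)}/r\}_+$ up to a uniform error of $p_0^{-1}/r$. Since $g(X_n)=Z^g_n+L_n$, continuity of the Skorokhod reflection map $x\mapsto(x,\ -\min x,\ x-\min x)$ at continuous paths gives the joint limit $(B,\ell,R)$ in \eqref{eq:operational}.

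\textbf{Physical time.} Write $\bar X_k=\max_{j\leq k}X_j$. By \eqref{eq:g-asymp} and monotonicity of $g$, $\max_{j\leq k}g(X_j)=\log\bar X_k+o(\log\bar X_k)$ whenever $\bar X_k\to\infty$. From \eqref{eq:operational} and continuity of $\sup$, $\log(1+\bar X_{k_r(s)})/r\Rightarrow\sup_{u\leq s}R(u)$, jointly with \eqref{eq:operational}. It therefore suffices to prove
\[
 \bigl|\log(1+k_r(s))-\log(1+\bar X_{k_r(s)})\bigr|=o_p(r).
\]

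\emph{Upper bound on $k_r(s)$.} Each step before $k_r(s)$ adds at least $c_g/(1+\bar X_{k_r(s)})$ to the bracket, and the bracket stays at most $r^2s$ there. Hence
\[
 k_r(s)\leq C\,r^2s\,(1+\bar X_{k_r(s)}),
\]
so $\log k_r(s)\leq\log(1+\bar X_{k_r(s)})+O(\log r)$, and $O(\log r)=o(r)$.

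\emph{Lower bound on $k_r(s)$.} I will show the deterministic-index bound $\bar X_k\leq k^{1+\varepsilon}$ eventually, almost surely. Write $X_j=X_0+j\mu+\sum_{i\leq j}W_i$ and apply Doob's $L^2$ inequality with $\E\sum_{i\leq k}V(X_{i-1})\leq Ck^2$. This gives $\E\bar X_{2^m}\leq C2^m$, so $\PP(\bar X_{2^m}>2^{m(1+\varepsilon)})\leq C2^{-m\varepsilon}$. Borel--Cantelli over dyadic blocks and monotonicity of $\bar X$ then give the claim. Because this holds for all large $k$ simultaneously, it applies at the random index $k_r(s)\to\infty$. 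The result is $\log(1+\bar X_{k_r(s)})\leq(1+\varepsilon)\log k_r(s)+O(1)$.

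\textbf{Combining.} Put $Y_r=\log(1+\bar X_{k_r(s)})/r$ and $K_r=\log(1+k_r(s))/r$. The two bounds give
\[
 Y_r-o_p(1)\leq K_r\qquad\text{and}\qquad(1+\varepsilon)^{-1}Y_r-o_p(1)\leq K_r\leq Y_r+o_p(1).
\]
The limit $\sup_{u\leq s}R(u)$ is tight, so letting $\varepsilon\downarrow0$ yields \eqref{eq:physical}. For finitely many times $s_1,\dots,s_m$ every estimate is a functional of the same path, which gives the joint statement.

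\textbf{Main obstacle.} I expect the lower bound on $k_r(s)$ to be the main difficulty. A crude $O_p(k)$ bound on $\bar X_k$ at a fixed $k$ cannot be applied at the random, path-dependent index $k_r(s)$. The almost-sure dyadic Borel--Cantelli bound is my first attempt to get uniformity over the random index. If it fails, I would instead bound $\PP(\bar X_{k}>e^{r y})$ uniformly for $k\leq e^{r(y-\eta)}$ by a maximal inequality.
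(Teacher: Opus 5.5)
Your proof is correct. The intrinsic-time invariance principle, the reflection step and the upper bound on $k_r(s)$ from the lower variance bound are the same as in the paper. The lower bound on physical time follows a genuinely different route.

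\textbf{The paper's lower bound.} It argues pathwise on a Skorokhod representation. It selects an intrinsic-time window $[s_1,s_2]$ of positive length on which $R$ stays within $\eta$ of $\sup_{u\le s}R(u)$. It deduces that $X_j\ge\exp\{(r\Lambda_\eta-C_\eta)/(1+\eta)\}$ throughout the corresponding physical window. It then uses the \emph{upper} variance bound $v_g(x)\le C_g/(1+x)$: accumulating bracket $r^2(s_2-s_1)$ at that height requires exponentially many steps.

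\textbf{Your lower bound.} You instead impose an a priori growth bound, $\bar X_k\le(2k)^{1+\varepsilon}$ eventually almost surely, obtained from $\E\bar X_k=O(k)$ (Doob), Markov's inequality and dyadic Borel--Cantelli. Reaching log-height $r\sup_{u\le s}R(u)$ then forces $\log k_r(s)\ge r\sup_{u\le s}R(u)/(1+\varepsilon)-o(r)$.

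\textbf{What each buys.} Your route is shorter and needs no window selection. Its almost-sure, eventually-in-$k$ form handles the random index $k_r(s)$ cleanly. The paper's route uses only the two-sided bounds in \eqref{eq:global-bounds} and the path limit, with no moment information on $X$ itself. It would therefore carry over to any chain admitting such a potential, whereas yours imports the model's linear mean growth.

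\textbf{Minor points.}
\begin{itemize}
\item The first inequality in your ``Combining'' display, $Y_r-o_p(1)\le K_r$, is not available at that stage. Only the chain $(1+\varepsilon)^{-1}Y_r-o_p(1)\le K_r\le Y_r+o_p(1)$ is, and the first inequality follows only after $\varepsilon\downarrow0$.
\item The identification $\max_{j\le k_r(s)}g(X_j)=\max\{g(X_0),\sup_{u\le s}g(X_{k_r(u)})\}$ needs that every integer $1\le j\le k_r(s)$ equals some $k_r(u)$ with $u\le s$. This holds because $\langle Z^g\rangle$ is strictly increasing, and it should be stated.
\item Passing from $g$ to $\log(1+\cdot)$ is cleanest via the uniform bounds $(1-\eta)\log(1+x)-C_\eta\le g(x)\le(1+\eta)\log(1+x)+C_\eta$, together with tightness. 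Your phrase ``whenever $\bar X_k\to\infty$'' is less precise than this.
\end{itemize}
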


\begin{proof}
\textit{Step 1: The reflected Brownian limit in intrinsic time.}
The chain is recurrent and $v_g(0)>0$, so $\langle Z^g\rangle_n\to\infty$.
The inverse is an integer-valued stopping time: for $m\geq1$,
$\{k_r(s)\leq m\}=\{\langle Z^g\rangle_m>r^2s\}\in\F_{m-1}$.
Since the strictly positive increments of $\langle Z^g\rangle$ are bounded,
\[
 r^2s<\langle Z^g\rangle_{k_r(s)}\leq r^2s+C.
\]
On each fixed intrinsic-time horizon $s_0>0$, optional sampling is justified in
$L^2$ by this bound on the expected stopped bracket. With filtration
$\F_{k_r(s)}$, the centered intrinsic-time process
$(Z^g_{k_r(s)}-Z^g_1)/r$ starts at zero and is a square-integrable
martingale. Its bracket is $(\langle Z^g\rangle_{k_r(s)}-\langle Z^g\rangle_1)/r^2$: its jump times
$\langle Z^g\rangle_{j-1}/r^2$, $j\geq2$, and conditional jump variances are predictable
in this filtration. This bracket converges uniformly to $s$.
The conditional Lindeberg sum up to $k_r(s_0)$ is bounded in expectation by
\[
 \frac{C}{\eta^{p-2}r^p}\E \langle Z^g\rangle_{k_r(s_0)}=O(r^{2-p}).
\]
Writing $\Delta_r^g=\max_{j\leq k_r(s_0)}|Z_j^g-Z_{j-1}^g|$, the same bound gives
\begin{equation}
 \E (\Delta_r^g)^p\leq C\E \langle Z^g\rangle_{k_r(s_0)}=O(r^2),\qquad
 \E(\Delta_r^g/r)^2=O(r^{4/p-2})\longrightarrow0.
 \label{eq:max-jump}
\end{equation}
The predictable bracket jumps are bounded by $C/r^2$, so the martingale
functional CLT \citep[Theorem~2.1(ii)]{Whitt2007} applies. The original
intrinsic-time process has initial value $Z_1^g/r\to0$ in $L^2$, since
$k_r(0)=1$. Restoring this value gives the Brownian coordinate.
The Lipschitz reflection map and \eqref{eq:skorokhod-error} give the other
two coordinates. By \eqref{eq:max-jump}, one-observation errors in
$Z^g/r$ vanish on localized intrinsic-time horizons; increments of $L/r$
are at most $1/(p_0r)$ and the same conclusion holds for $g(X)/r$.
All these limits have continuous paths, so a Skorokhod representation
gives uniform convergence on compact intervals.

\textit{Step 2: An upper bound on physical time.}
We verify \eqref{eq:physical} on that representation. Fix $s>0$ and
write $\overline R_s=\sup_{0\leq u\leq s}R(u)$, which is positive almost surely.
Fix $0<\eta<1/2$. Choose $C_\eta<\infty$ so that, for all states,
\[
 (1-\eta)\log(1+x)-C_\eta\leq g(x)
 \leq(1+\eta)\log(1+x)+C_\eta.
\]
From the lower bound in \eqref{eq:global-bounds},
\[
 k_r(s)\leq c_g^{-1}(r^2s+C)
                \max_{j<k_r(s)}(1+X_j).
\]
Every integer $1\leq j\leq k_r(s)$ occurs as $k_r(u)$ for some
$0\leq u\leq s$, since $\langle Z^g\rangle$ has strictly positive increments.
Uniform convergence in \eqref{eq:operational}, including the fixed
initial state separately, implies
$\max_{j<k_r(s)}g(X_j)/r\leq \overline R_s+\eta$ for all sufficiently large $r$.
Consequently,
\[
 \frac{\log(1+k_r(s))}{r}
 \leq\frac{\overline R_s+\eta}{1-\eta}
   +\frac{C_\eta}{r(1-\eta)}
   +\frac{\log\{1+c_g^{-1}(r^2s+C)\}}{r}.
\]
This gives $\limsup_r\log(1+k_r(s))/r\leq \overline R_s$ by first taking
$r\to\infty$ and then $\eta\downarrow0$.

\textit{Step 3: A lower bound on physical time.}
Also take $\eta<\overline R_s$. By continuity choose
$0\leq s_1<s_2\leq s$ such that
$R(u)>\overline R_s-\eta/2$ throughout $[s_1,s_2]$; a maximum at $s$ is
handled by a left interval. On the representation, uniform error
less than $\eta/2$ gives
$g(X_{k_r(u)})\geq r\Lambda_\eta$ on this interval for all sufficiently
large $r$, where $\Lambda_\eta=\overline R_s-\eta>0$.
For each integer $k_r(s_1)\leq j<k_r(s_2)$ there is
$u\in[s_1,s_2]$ with $k_r(u)=j$: use $u=s_1$ for the first integer
and $u=\langle Z^g\rangle_{j-1}/r^2$ for the others. Thus the global upper bound for $g$
gives the explicit population estimate
\[
 \min_{k_r(s_1)\leq j<k_r(s_2)}(1+X_j)
 \geq\exp\left\{\frac{r\Lambda_\eta-C_\eta}{1+\eta}\right\}.
\]
The bracket overshoot bound and the upper variance bound now imply
\[
 r^2(s_2-s_1)-2C
 \leq \sum_{j=k_r(s_1)}^{k_r(s_2)-1}v_g(X_j)
 \leq C_g[k_r(s_2)-k_r(s_1)]
       \exp\left\{-\frac{r\Lambda_\eta-C_\eta}{1+\eta}\right\}.
\]
For $r$ large enough that the left side is positive, taking logarithms yields
\[
 \frac{\log(1+k_r(s))}{r}
 \geq\frac{\Lambda_\eta}{1+\eta}
  +\frac{\log\{r^2(s_2-s_1)-2C\}-\log C_g-C_\eta/(1+\eta)}{r}.
\]
For each fixed $\eta$ and represented path, $s_2-s_1>0$ is fixed,
so the second term tends to zero. Letting $\eta\downarrow0$ gives
$\liminf_r\log(1+k_r(s))/r\geq \overline R_s$.
Applying this argument at any finite set of times proves the
stated joint convergence. Monotonicity supplies the bounds needed for
inversion at a fixed level.
\end{proof}

\begin{proof}[Proof of Theorem~\ref{thm:boundary}, score and clock for $a=1$]
\textit{Step 1: Returning to observation time.}
Apply \eqref{eq:physical} with $r=\log n$. A reflected Brownian path
first reaches level one at the finite time $T$ and exceeds one
arbitrarily soon afterwards almost surely. Before $T$ its running
maximum is less than one. The monotone inverse bounds in
\eqref{eq:physical} therefore give, jointly with the intrinsic-time path,
\[
 \langle Z^g\rangle_n/(\log n)^2\Rightarrow T,\qquad
 (Z_n^g,L_n)/\log n\Rightarrow (B(T),\ell(T)).
\]
For precision, this follows by bracketing $\langle Z^g\rangle_n/r^2$ between rational
intrinsic times whose running maxima are below and above one, then
shrinking the bracket. Continuous-path convergence in
\eqref{eq:operational} permits evaluation at the limiting inverse.
There is a one-observation indexing offset:
$k_r(\langle Z^g\rangle_n/r^2)=n+1$, since $\langle Z^g\rangle$ is strictly increasing. Localize to
$\{\langle Z^g\rangle_n/r^2\leq s_0\}$ and apply \eqref{eq:max-jump} to replace
$Z^g_{n+1}/r$ by $Z^g_n/r$, and similarly for $L$ and $g(X)$.
Tightness of $\langle Z^g\rangle_n/r^2$ removes the localization by letting $s_0\to\infty$.
Since $R(T)=1$, $B(T)=1-U$.

\textit{Step 2: Transferring the limit to the WLS score and clock.}
By \citet[Remark~2.23, equation~(2.43)]{WeiWinnicki1989},
$\pi(w_1)=\infty$ at the boundary, whereas $\pi$ is finite on finite
sets. With $N_n$ counting completed cycles, truncation and the cycle
strong law give $H_n/N_n\to\infty$ almost surely. For any finite set
$A$, the same strong law gives
$\sum_{t=1}^n\1_A(X_{t-1})/N_n\to\pi(A)$ almost surely; the
unfinished integrable cycle reward is $o(N_n)$ almost surely.
Thus every finite-set occupation is $o(H_n)$ almost surely.
Equations \eqref{eq:vg} and
\eqref{eq:score-approx}, first outside a finite set and then inside,
imply
\[
 \begin{gathered}
 \langle Z^g\rangle_n/H_n\to\sigma^2,\qquad Q_n/H_n\to\sigma^2,\\
 \langle Z^g-M\rangle_n/H_n\to0
 \quad\text{almost surely}.
 \end{gathered}
\]
The first limit and the tight limit for $\langle Z^g\rangle_n/(\log n)^2$ show that
the bracket of $Z^g-M$, divided by $(\log n)^2$, tends to zero
in probability. The stopped martingale maximal inequality then gives
$(Z_n^g-M_n)/\log n\to_p0$. This proves
the score and clock assertions in \eqref{eq:boundary-joint} for $a=1$.
Identity \eqref{eq:reduction} gives \eqref{eq:boundary-est} for this
offset. The fitted-residual assertion and pivot are proved below.
\end{proof}

\begin{remark}[Expected accumulated weight]
\label{rem:boundary-mean}
The sharp expectation $\E H_n(a)\sim(\log n)^2/\sigma^2$ follows
from the positive-real estimate of
\citet[Lemma~2.2, equation~(2.4)]{LiZhang2021}. Their lemma is stated
under an additional offspring lattice-span condition. For the
positive-real estimate needed here, the following argument avoids
that condition.

For any fixed $a>0$, start at zero and choose $j=j(s)$ so that
$F_j(0)\leq e^{-s}\leq F_{j+1}(0)$. Monotonicity of the generating
functions gives
\[
 \frac{u_{n+j}}{u_j}
 \leq\E_0e^{-sX_n}
 \leq\frac{u_{n+j+1}}{u_{j+1}}.
\]
As $s\downarrow0$, \eqref{eq:local-inputs} gives
$j(s)\sim2/(\sigma^2s)$. Both ratios are therefore
$(1+\sigma^2ns/2)^{-1}\{1+o(1)\}$, uniformly for
$0<s\leq\eta_n$ whenever $\eta_n\downarrow0$: the smallest possible
$j(s)$ tends to infinity, so the asymptotics for $u_j$ and $F_j(0)$
hold uniformly on this range. Taking $\eta_n=(\log n)^{-1/2}$,
integrating against $e^{-as}$ over $(0,\eta_n)$, and
bounding the remaining integral by
$a^{-1}\E_0e^{-\eta_nX_n}=O((n\eta_n)^{-1})$ gives
\[
 \E_0(a+X_n)^{-1}
 \sim\frac{2\log(1+\sigma^2n\eta_n/2)}{\sigma^2n}
 \sim\frac{2\log n}{\sigma^2n}.
\]
An independent, almost surely finite initial population multiplies
the Laplace transform by $\E[F_n(e^{-s})^{X_0}]$.
This factor tends to one uniformly in $s\geq0$, since $F_n(0)\uparrow1$.
Summing over generations yields
$\E H_n(a)\sim(\log n)^2/\sigma^2$.
\end{remark}

\subsection{Transferring the boundary limit to fixed offsets}
\label{app:boundary-offsets}

For fixed $a>0$,
$|w_a-w_1|+|w_a^2V-\sigma^2w_1|\leq C_aw_1^2$ and
$(w_a-w_1)^2V\leq C_aw_1^3$.
Equation~\eqref{eq:harmonic-bound}, Markov's inequality and the
martingale isometry therefore give
\begin{align}
 H_n(a)-H_n(1)&=\Op(\log n),\qquad
 Q_n(a)-\sigma^2H_n(1)=\Op(\log n),
 \label{eq:boundary-offset-clocks}
 \\
 M_n(a)-M_n(1)&=\Op(\sqrt{\log n}).
 \label{eq:boundary-offset-scores}
\end{align}
Together with the $a=1$ clock limit, these bounds transfer the joint
score and clock limit to every fixed offset.

Finally, the slope contribution in \eqref{eq:reduction} is
$\Op((\log n)^{-2})$. The identity
\[
 \frac{M_n(a)}{H_n(a)}-\frac{M_n(1)}{H_n(1)}
 =\frac{M_n(a)-M_n(1)}{H_n(a)}
 +\frac{M_n(1)}{H_n(1)}
       \left\{\frac{H_n(1)}{H_n(a)}-1\right\}
\]
is $\op((\log n)^{-1})$ by the clock and score comparisons and
positivity of $T$. This proves \eqref{eq:boundary-est} and
\begin{equation}
 \log n\{\widehat\mu_n(a)-\widehat\mu_n(1)\}\to_p0.
 \label{eq:boundary-offset-equivalence}
\end{equation}
Residual studentization is established in Appendix~\ref{app:residual-studentization}.

\subsection{Residual studentization}
\label{app:residual-studentization}

Fix $a>0$ and abbreviate $H_n=H_n(a)$ and $Q_n=Q_n(a)$,
suppressing the offset in the fitted coefficients and the residual
estimate $\widehat Q_n(a)$.
Define the squared-score sum
\begin{equation}
 \widetilde Q_n(a)=\sum_{t=1}^n w_a(X_{t-1})^2W_t^2.
 \label{eq:squared-score}
\end{equation}
Also define the martingale differences
$\Delta_t^Q=w_a(X_{t-1})^2\{W_t^2-V(X_{t-1})\}$, so that
$\widetilde Q_n(a)-Q_n(a)=\sum_{t=1}^n\Delta_t^Q$.
For both recurrent regimes, we will establish
\begin{equation}
 \frac{\widetilde Q_n(a)}{Q_n(a)}\to_p1,\qquad
 \frac{\widehat Q_n(a)}{Q_n(a)}\to_p1.
 \label{eq:fixed-score-consistency}
\end{equation}
The first assertion concerns the innovations; the second replaces
them by fitted residuals.

\paragraph{Strict recurrence.}
First suppose $0<\tau<1$.
Let
\[
 \mathcal U_i=\sum_{t=\mathcal S_{i-1}+1}^{\mathcal S_i}w_a(X_{t-1})^2W_t^2
\]
be the squared-score reward over cycle $i$. Conditional expectation and
Tonelli's theorem give
\[
 \E\mathcal U_i
 =\E\sum_{t=\mathcal S_{i-1}+1}^{\mathcal S_i}w_a(X_{t-1})^2V(X_{t-1})
 =v(a)<\infty.
\]
The cycle strong law applies, and integrability gives
$\mathcal U_i/i\to0$ almost surely. The unfinished reward is bounded by
$\mathcal U_{N_n+1}=o(N_n)$. Since $H_n/N_n\to h(a)$, the total
squared-score reward divided by $H_n$ tends to $v(a)/h(a)$.
Since $Q_n/H_n\to v(a)/h(a)>0$, this proves $\widetilde Q_n(a)/Q_n(a)\to1$
almost surely.

\paragraph{The recurrence boundary.}
Suppose $\tau=1$. Choose $p$ as in \eqref{eq:innovation-moment}
and put $q=p/2>1$.
The moment bound gives
$\E(|\Delta_t^Q|^q\mid\F_{t-1})\leq C_a(1+X_{t-1})^{-q}$.
Since $1<q\leq3/2$, the martingale $L^q$ inequality and
\eqref{eq:harmonic-bound} give directly
\[
 \E\left|\sum_{t=1}^n\Delta_t^Q\right|^q
 \leq C_a\sum_{t=1}^n\E(1+X_{t-1})^{-q}=O(\log n).
\]
Hence $(\log n)^{-2}\sum_{t=1}^n\Delta_t^Q\to_p0$.
The positive limit of $H_n/(\log n)^2$ permits division by $H_n$.
Since $Q_n/H_n\to\sigma^2$, this proves $\widetilde Q_n(a)/Q_n(a)\to_p1$.

\paragraph{Replacing innovations by fitted residuals.}
In either recurrent regime, the coefficient rates already proved give
\begin{equation}
 \begin{aligned}
 \mathcal E_n(a)
 &:=\sum_{t=1}^n w_a(X_{t-1})^2
       \{(\widehat m_n-1)X_{t-1}+\widehat\mu_n-\mu\}^2\\
 &\leq2n(\widehat m_n-1)^2
       +\frac2a(\widehat\mu_n-\mu)^2H_n
 =\Op(1)=\op(H_n).
 \end{aligned}
 \label{eq:residual-perturbation}
\end{equation}
By Cauchy--Schwarz,
\[
 |\widehat Q_n-\widetilde Q_n(a)|
 \leq2\sqrt{\widetilde Q_n(a)\mathcal E_n(a)}+\mathcal E_n(a)=\op(H_n).
\]
Here $\widetilde Q_n(a)/H_n=\Op(1)$ by the preceding arguments, and
$Q_n\asymp H_n$. This proves the second assertion of
\eqref{eq:fixed-score-consistency}. Combining it with the joint limits
and the negligible slope contribution proves \eqref{eq:strict-pivot}
and \eqref{eq:boundary-pivot}.

\begin{proof}[Proof of Corollary~\ref{cor:maintained}]
The exact score ratio in \eqref{eq:constrained-estimator} and the
joint limits give the drift-estimator distributions. Put
$d_c=\widehat\mu_{n,c}(a)-\mu$. The score and clock scales imply
$d_c^2H_n(a)=\Op(1)$ in both recurrent regimes. Since $w_a^2\leq w_a/a$,
the squared residual perturbation is at most $d_c^2H_n(a)/a=\Op(1)$.
Equation~\eqref{eq:fixed-score-consistency} and Cauchy--Schwarz
therefore give $\widehat Q_{n,c}(a)/Q_n(a)\to_p1$ and both pivots.
\end{proof}

\begin{proof}[Proof of Corollary~\ref{cor:all-regimes}]
The recurrent conclusions follow from Theorems~\ref{thm:strict}
and~\ref{thm:boundary} with $a=1$.
For transience, take $a=1$ and $\tau>1$.
By \citet[Lemma~2.4]{WeiWinnicki1990}, $H_n\to\infty$ and
$\sum_{t\geq1}(1+X_{t-1})^{-q}<\infty$ almost surely for every $q>1$.
Thus $Q_n=\sigma^2H_n+(b-\sigma^2)\sum_{t=1}^n(1+X_{t-1})^{-2}$
gives $Q_n/H_n\to\sigma^2$ almost surely.
Choose $q=1+\min\{\delta/2,1/2\}$, where $\delta$ is from
Assumption~\ref{ass:model}. Equation~\eqref{eq:innovation-moment},
with $p=2q$, yields
$\sum_{t\geq1}\E(|\Delta_t^Q|^q\mid\F_{t-1})<\infty$ almost surely,
so the martingale $\sum_{t=1}^n\Delta_t^Q$ converges almost surely.
Consequently,
$H_n^{-1}\sum_{t=1}^n w_a(X_{t-1})^2W_t^2\to\sigma^2$ almost surely.
The slope bound $n(\widehat m_n-1)=\Op(1)$ and
\citet[Theorem~2.5]{WeiWinnicki1990}, which gives
$\sqrt{H_n}(\widehat\mu_n-\mu)\Rightarrow N(0,\sigma^2)$,
make the bound in \eqref{eq:residual-perturbation} $\Op(1)=\op(H_n)$.
Cauchy--Schwarz therefore gives $\widehat Q_n/H_n\to_p\sigma^2$,
and Slutsky's theorem proves \eqref{eq:transient-pivot}.
\end{proof}

\subsection{Boundary distribution and numerical quantiles}
\label{app:boundary-quantiles}

Throughout this subsection, $\tau=1$, and $(T,U)$ is the stopped
Brownian pair defined in Section~\ref{sec:boundary}.

\paragraph{Joint transform.}
The hitting time satisfies $0<T<\infty$ almost surely. Write $\E_x^{R}$
for expectation under reflected Brownian motion started at $x$.
To compute the joint Laplace transform of $(T,U)$, define
$\psi(x)=\E_x^{R} e^{-sT-\theta\ell(T)}$ for $x\in[0,1]$. This function solves
$\psi''/2=s\psi$, $\psi(1)=1$, and $\psi'(0)=\theta\psi(0)$.
Solving this boundary-value problem at zero gives
\begin{equation}
 \E e^{-sT-\theta U}
 =\left[\cosh\sqrt{2s}
  +\frac{\theta}{\sqrt{2s}}\sinh\sqrt{2s}\right]^{-1},
 \quad s,\theta\geq0,
 \label{eq:joint-transform}
\end{equation}
with the continuous interpretation at $s=0$ and $\ell$ the Skorokhod
regulator. Setting $s=0$ gives
\[
 \E e^{-\theta U}=\frac1{1+\theta},\qquad \theta\geq0,
\]
so $U\sim\operatorname{Exp}(1)$, hence $U>0$ almost surely.
The transform also gives $\E T=1$ and $\PP(1-U<0)=e^{-1}$.
This sign probability alone excludes a standard normal law for the
studentized limit.

\paragraph{Numerical quantiles.}
Inverting \eqref{eq:joint-transform} in $\theta$, with $k=\sqrt{2s}$,
gives
\[
 \E(e^{-sT}\mid U=u)
 =\frac{k}{\sinh k}\exp\{-u(k\coth k-1)\},\qquad U\sim\mathrm{Exp}(1).
\]
The product for $\sinh$ and the partial-fraction expansion for $\coth$
therefore yield, for $\lambda_j=\pi^2j^2/2$,
\[
 T\mid U=u\ \overset d=\sum_{j\geq1}G_j,\qquad
 \mathcal N_j\mid U=u\sim\mathrm{Poisson}(2u),\qquad
 G_j\mid(U,\mathcal N_j)\sim\mathrm{Gamma}(1+\mathcal N_j,\text{rate }\lambda_j),
\]
with conditional independence across $j$.
After $K$ terms we replace the remaining sum by its conditional mean
$(1+2u)\sum_{j>K}\lambda_j^{-1}$. The unconditional mean-square
error in $T$ is $5\sum_{j>K}\lambda_j^{-2}\leq20/(3\pi^4K^3)$.
We compute the quantiles from one million independent draws using $K=128$.
The Monte Carlo standard errors of the estimated $q_{0.025}$ and
$q_{0.975}$ are $0.0031$ and $0.0024$, respectively, estimated from
100 disjoint batches.

\section{Proofs for estimated offsets}
\label{ad:appendix}

For the recurrent results, we first bound
the score and clock uniformly over compact offset intervals.
Consistency places the estimated offset in such an interval with
probability tending to one. After proving Theorem~\ref{ad:limit},
we give the closed form of the variance regression. For transience, we first
isolate the absence of atoms in the variance-pilot limit, then prove
Proposition~\ref{ad:transientoffset} by comparing offsets on compact
intervals.

\subsection{Uniform bounds and recurrent estimated offsets}
\label{ad:recurrent-proofs}

Write $x_t=X_{t-1}$ and retain $S_n$ from
\eqref{eq:critical-functional-inputs}. Unless specified otherwise, sums over
$t$ in this appendix run from $1$ to $n$. Let $r_n$ be the normalization
in Theorem~\ref{ad:limit}. The stochastic orders and normal equations
established in Appendix~\ref{app:original} apply to every deterministic offset.

\begin{lemma}[Uniform bounds on compact offset intervals]
\label{ad:uniformlemma}
Let $I=[a_-,a_+]\subset(0,\infty)$ be fixed. Under
Assumption~\ref{ass:model} and $0<\tau\leq1$,
\begin{align}
 \sup_{a\in I}|M_n(a)|+\sup_{a\in I}|M_n'(a)|
     &=\Op(r_n),\label{ad:uniformscore}\\
 \sup_{a\in I}H_n(a)+\sup_{a\in I}|H_n'(a)|
     &=\Op(r_n^2),\label{ad:uniformclock}\\
 \left\{r_n^{-2}\inf_{a\in I}H_n(a)\right\}^{-1}
     &=\Op(1),\label{ad:lowerclock}\\
 \sup_{a\in I}|\widehat m_n(a)-1|&=\Op(n^{-1}),\label{ad:uniformslope}\\
 \sup_{a\in I}|\widehat\mu_n(a)-\mu|
     &=\Op(r_n^{-1}).\label{ad:uniformintercept}
\end{align}
\end{lemma}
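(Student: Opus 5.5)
The plan is to exploit the explicit smooth dependence of all quantities on the offset. We have $\partial_a w_a(x)=-w_a(x)^2$, so
\[
 H_n'(a)=-\sum_t w_a(x_t)^2,\qquad M_n'(a)=-\sum_t w_a(x_t)^2W_t .
\]
Both $w_a$ and $w_a^2$ are monotone in $a$, and on $I$ they are comparable to $w_{a_-}$ and $w_{a_+}$ up to constants depending only on $I$. The clock bounds \eqref{ad:uniformclock} and \eqref{ad:lowerclock} are then immediate. We have
\[
 \sup_{a\in I}H_n(a)=H_n(a_-),\qquad \inf_{a\in I}H_n(a)=H_n(a_+),\qquad \sup_{a\in I}|H_n'(a)|\le \sum_t w_{a_-}(x_t)^2\le a_-^{-1}H_n(a_-).
\]
The deterministic-offset limits (\eqref{eq:strict-joint} under strict recurrence and \eqref{eq:boundary-joint} at the boundary) give $H_n(a_\pm)/r_n^2\Rightarrow$ a limit that is almost surely positive and finite. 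That limit is $p_0\Gamma(\alpha)h(a)E_\alpha$ or $T/\sigma^2$, respectively.

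For the score bound \eqref{ad:uniformscore} I would avoid chaining and use the fundamental theorem of calculus in $a$:
\[
 \sup_{a\in I}|M_n(a)|\le |M_n(a_-)|+\int_{a_-}^{a_+}|M_n'(u)|\,du .
\]
This reduces everything to bounding $\sup_I|M_n'|$. To handle $M_n'$, differentiate once more: $M_n''(a)=2\sum_t w_a^3W_t$, so
\[
 \sup_I|M_n'|\le |M_n'(a_-)|+(a_+-a_-)\sup_I|M_n''| .
\]
The Sobolev-type estimate $\E\sup_I|f|^2\le C\{\E f(a_-)^2+\int_I\E f'(u)^2du\}$ then applies with $f=M_n'$. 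It requires only the martingale isometry:
\[
 \E M_n'(u)^2=\E\sum_t w_u^4V\le C\,\E\sum_t(1+x_t)^{-3},\qquad \E M_n''(u)^2\le C\,\E\sum_t(1+x_t)^{-5}.
\]
By \eqref{eq:harmonic-bound} with $q>1$, both are $O(r_n^2)$: this is $n^{1-\tau}$ in strict recurrence and $\log n\le(\log n)^2$ at the boundary. The same argument applied to $M_n(a_-)$ gives $\E M_n(a_-)^2\le C\E\sum_t(1+x_t)^{-1}=O(r_n^2)$, again by \eqref{eq:harmonic-bound}. Chebyshev's inequality then yields \eqref{ad:uniformscore}.

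For \eqref{ad:uniformslope} I would use the closed form \eqref{eq:slope-normal}, which holds for each $a$:
\[
 \widehat m_n(a)-1=\frac{X_n-X_0-n\mu-nM_n(a)/H_n(a)}{S_n+an-n^2/H_n(a)} .
\]
The numerator is bounded uniformly by $\Op(n)+n\sup_I|M_n|/\inf_IH_n=\Op(n)$. The denominator needs a uniform lower bound of order $n^2$. Since $S_n/n^2\Rightarrow J>0$ and the remaining terms are $\Op(n)+\Op(n^2/r_n^2)=\op(n^2)$ uniformly on $I$ by \eqref{ad:lowerclock}, the denominator is at least $\tfrac12 S_n$ with probability tending to one. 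This also rules out a singular design uniformly.

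Finally, \eqref{ad:uniformintercept} follows from the identity \eqref{eq:reduction}, which holds pathwise for every $a$:
\[
 \sup_I|\widehat\mu_n(a)-\mu|\le \frac{\sup_I|M_n|}{\inf_IH_n}+\sup_I|\widehat m_n-1|\Bigl(\frac{n}{\inf_IH_n}+a_+\Bigr)=\Op(r_n^{-1})+\Op(r_n^{-2}).
\]

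The main obstacle is the uniform score bound, specifically the need for integrable envelopes for the $a$-derivatives. This is why I would work with the second derivative and higher harmonic moments, which \eqref{eq:harmonic-bound} controls even at the boundary. One point needs care: \eqref{eq:harmonic-bound} is stated for $q\ge1$ and covers $q=3,5$, but one must check that it holds for the actual initial state. The paper's monotone coupling remark provides this.
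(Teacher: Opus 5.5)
Your proposal is correct and follows essentially the same route as the paper. Both arguments use the comparability of the weights on $I$ together with the fixed-offset clock limits to get \eqref{ad:uniformclock}--\eqref{ad:lowerclock}; the martingale isometry combined with $\sup_I|\varphi|^2\leq 2|\varphi(a_-)|^2+2|I|\int_I|\varphi'|^2$ for the score and its derivative; and then the normal equations \eqref{eq:slope-normal} and \eqref{eq:reduction} for the coefficients. The differences are cosmetic: you use monotonicity of $H_n(a)$ in $a$ and the sharper $q=3,5$ harmonic moments, whereas the paper bounds every isometry term by $C_I\,\E H_n(1)$.
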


\begin{proof}
The functions $w_a(x)$ are uniformly comparable with $(1+x)^{-1}$ on
$I$. Moreover $w_a(x)^2\leq C_I(1+x)^{-1}$. The positive occupation
limits in \eqref{eq:strict-joint} and \eqref{eq:boundary-joint}
therefore prove \eqref{ad:uniformclock}--\eqref{ad:lowerclock}.

For $k=0,1,2$, the score derivatives are
\[
 M^{(k)}_n(a)=(-1)^k k!\sum_{t=1}^n
       \frac{W_t}{(a+x_t)^{k+1}}.
\]
The martingale isometry, $V(x)/(a+x)^{2k+2}\leq C_I/(1+x)$,
and \eqref{eq:harmonic-bound} give
\begin{equation}
 \sup_{a\in I}\E\{M^{(k)}_n(a)^2\}
      \leq C_I\E H_n(1)=O(r_n^2),\qquad k=0,1,2.
 \label{ad:scoreisometry}
\end{equation}
For any continuously differentiable $\varphi$ on $I$,
\[
 \sup_I|\varphi|^2\leq2|\varphi(a_-)|^2+
                2|I|\int_I|\varphi'(a)|^2\,da.
\]
Applying it to $M_n$ and $M'_n$, then taking expectations and using
\eqref{ad:scoreisometry}, proves \eqref{ad:uniformscore}.

The normal equations \eqref{eq:slope-normal} and \eqref{eq:reduction}
yield the uniform coefficient bounds as follows.
Uniformly on $I$, $n^2/H_n(a)=\Op(n^2/r_n^2)=\op(n^2)$
and $nM_n(a)/H_n(a)=\Op(n/r_n)=\op(n)$, since $r_n\to\infty$.
Together with \eqref{eq:critical-functional-inputs}, these bounds show
that the weighted design is nonsingular uniformly on $I$ with probability
tending to one and prove \eqref{ad:uniformslope}. Substitution into
\eqref{eq:reduction} proves \eqref{ad:uniformintercept}.
\end{proof}

\begin{proof}[Proof of Theorem~\ref{ad:limit}]
\textit{Step 1: Equivalence with the fixed-offset estimator.}
Choose a compact interval $I$ whose interior contains $a_0$.
Consistency gives $\PP(\widehat a_n\in I)\to1$.
The mean value theorem and Lemma~\ref{ad:uniformlemma} give
\begin{align*}
 M_n(\widehat a_n)-M_n(a_0)&=\op(r_n),\\
 H_n(\widehat a_n)-H_n(a_0)&=\op(r_n^2).
\end{align*}
The bound on the inverse clock \eqref{ad:lowerclock}, applied to the
difference of the two score ratios, then yields
\[
 \frac{M_n(\widehat a_n)}{H_n(\widehat a_n)}
       -\frac{M_n(a_0)}{H_n(a_0)}=\op(r_n^{-1}).
\]
Both slope terms in \eqref{eq:reduction} are uniformly
$\Op(r_n^{-2})$, proving \eqref{ad:oracleequiv}.

\textit{Step 2: Residual normalization at the estimated offset.}
Use $\widetilde Q_n(a)$ from \eqref{eq:squared-score}.
Equation~\eqref{eq:fixed-score-consistency} and $Q_n(a_0)=\Op(r_n^2)$,
together with $\E \widetilde Q_n(1)=\E Q_n(1)\leq C\E H_n(1)=O(r_n^2)$, give
\begin{equation}
 \frac{\widetilde Q_n(a_0)-Q_n(a_0)}{r_n^2}\to_p0,
 \qquad \widetilde Q_n(1)=\Op(r_n^2).
 \label{ad:squaredreward}
\end{equation}

On $I$,
\[
 \sup_{a\in I}|\widetilde Q_n'(a)|\leq C_I \widetilde Q_n(1),\qquad
 \sup_{a\in I}|Q_n'(a)|\leq C_I H_n(1).
\]
It follows that $\widetilde Q_n(\widehat a_n)-\widetilde Q_n(a_0)=\op(r_n^2)$,
and the same assertion holds for $Q_n$.
Write $d_t(a)=\{\widehat m_n(a)-1\}x_t+
                   \widehat\mu_n(a)-\mu$. The squared residual perturbation
in \eqref{eq:residual-perturbation} satisfies, uniformly on $I$,
\begin{align*}
 \mathcal E_n(a)=\sum_t w_a(x_t)^2d_t(a)^2
 &\leq C_I\left\{n\sup_{a\in I}|\widehat m_n(a)-1|^2
       +H_n(1)\sup_{a\in I}|\widehat\mu_n(a)-\mu|^2\right\}\\
 &=\Op(1).
\end{align*}
Cauchy--Schwarz therefore gives
\[
 \sup_{a\in I}|\widehat Q_n(a)-\widetilde Q_n(a)|
 \leq2\sqrt{\sup_{a\in I}\widetilde Q_n(a)\sup_{a\in I}\mathcal E_n(a)}
       +\sup_{a\in I}\mathcal E_n(a)
 =\Op(r_n)=\op(r_n^2).
\]
Together with \eqref{ad:squaredreward} and positivity of the limiting
clock, this proves
$\widehat Q_n(\widehat a_n)/Q_n(a_0)\to_p1$.
The joint limits \eqref{eq:strict-joint} and \eqref{eq:boundary-joint},
equivalence with the fixed-offset estimator, and Slutsky's theorem now prove the two cases of
\eqref{ad:feasiblepivot}.

\textit{Step 3: The constrained estimator.}
For the constrained estimator the score ratio is exact, and
\[
 \sup_{a\in I}|\widehat\mu_{n,c}(a)-\widehat\mu_n(a)|
 \leq\sup_{a\in I}|\widehat m_n(a)-1|\,
       \frac{n}{\inf_{a\in I}H_n(a)}
 =\Op(r_n^{-2}).
\]
Thus the same equivalence with the fixed-offset estimator holds.
For constrained residuals, replace $d_t(a)$ by
$d_c(a)=\widehat\mu_{n,c}(a)-\mu$. Since
$\sup_{a\in I}|d_c(a)|=\Op(r_n^{-1})$, the squared residual
perturbation is again uniformly $\Op(1)$. The same residual argument
gives the corresponding constrained pivot in each regime.
\end{proof}

\subsection{Closed form of the variance regression}
\label{ad:optimal-proofs}

Set $w_t=w_1(X_{t-1})$ and $y_t=w_te_t^2$, where $e_t=e_t(1)$
as in Section~\ref{sec:adaptive}. The weighted
regression of Section~\ref{sec:adaptive} is equivalent to the ordinary
regression of $y_t$ on an intercept and $w_t$: its intercept is
$\widehat{\sigma^2}_n$, and its fitted value at $w=1$ is $\widehat b_n$.
Let $\widehat\gamma_n$ be the slope in this transformed regression,
which estimates $b-\sigma^2$. Writing $\bar y_n$ and $\bar w_n$ for the
sample means,
\begin{equation}
 \begin{aligned}
 \widehat\gamma_n
 &=\frac{\sum_{t=1}^n(w_t-\bar w_n)(y_t-\bar y_n)}
         {\sum_{t=1}^n(w_t-\bar w_n)^2},\\
 \widehat{\sigma^2}_n&=\bar y_n-\widehat\gamma_n\bar w_n,
 \qquad
 \widehat b_n=\bar y_n+(1-\bar w_n)\widehat\gamma_n.
 \end{aligned}
 \label{ad:variancepilot}
\end{equation}
Here $\bar y_n=\widehat{\sigma^2}_{n,\mathrm{av}}$ from
\eqref{sim:eq:luinterval}, and $\bar w_n=H_n(1)/n$.

\subsection{Estimated offsets in the transient regime}
\label{ad:transient-proofs}

The offset rule \eqref{ad:offsetpilot} is discontinuous when the
estimated immigration variance crosses zero. To apply the continuous
mapping theorem to its limit, we need that limit to equal zero with
probability zero. The next lemma proves the stronger assertion that
it has no atoms.

\begin{lemma}[Absence of atoms in the transient variance limit]
\label{ad:variance-no-atoms}
Under Assumptions~\ref{ass:model} and~\ref{ad:fourthmoments}, suppose
$\tau>1$. The limits
\[
 H_{\infty,2}=\sum_{t\geq1}\frac{1}{(1+X_{t-1})^2},\qquad
 L_\infty^{\mathrm{var}}=\lim_{n\to\infty}\sum_{t=1}^n
 \frac{W_t^2-V(X_{t-1})}{(1+X_{t-1})^2}
\]
exist almost surely, with $0<H_{\infty,2}<\infty$ and
$|L_\infty^{\mathrm{var}}|<\infty$. Define
$B_\infty=b+L_\infty^{\mathrm{var}}/H_{\infty,2}$. Then
\begin{equation}
 \PP(B_\infty=b_0)=0\qquad\text{for every }b_0\in\R.
 \label{ad:transientnoatom}
\end{equation}
\end{lemma}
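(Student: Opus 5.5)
The plan has two parts: first prove existence and finiteness of the limits, then rule out atoms. For the atoms, I would condition on a late, large state of the chain and use the spread of a single transition from that state.

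\emph{Existence.} By \citet[Lemma~2.4]{WeiWinnicki1990}, transience gives $\sum_{t\geq1}(1+X_{t-1})^{-q}<\infty$ almost surely for every $q>1$. With $q=2$ this shows $H_{\infty,2}<\infty$, and $H_{\infty,2}\geq(1+X_0)^{-2}>0$. The summands of $L^{\mathrm{var}}$ are martingale differences $\xi_t=\{W_t^2-V(X_{t-1})\}(1+X_{t-1})^{-2}$. Under Assumption~\ref{ad:fourthmoments}, the Rosenthal bound behind \eqref{eq:innovation-moment}, taken with $p=4$, gives $\E(W_t^4\mid\F_{t-1})\leq C(1+X_{t-1})^2$. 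Hence $\E(\xi_t^2\mid\F_{t-1})\leq C(1+X_{t-1})^{-2}$, which is almost surely summable. The conditional-variance form of the martingale convergence theorem then gives almost sure convergence of $\sum_t\xi_t$ to a finite $L_\infty^{\mathrm{var}}$.

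\emph{Reduction to an additive functional.} Since $H_{\infty,2}>0$, the event $\{B_\infty=b_0\}$ equals $\{\Phi=0\}$, where
\[
 \Phi=\sum_{t\geq1}f(X_{t-1},X_t),\qquad
 f(x,y)=\frac{(y-x-\mu)^2-\sigma^2x-b_0}{(1+x)^2}.
\]
This series converges almost surely by the same argument as above. It therefore suffices to show that $a(x):=\sup_{c\in\R}\PP_x(\Phi=c)$ satisfies $\E\,a(X_k)\to0$ as $k\to\infty$. Indeed, write $\Phi=A_k+\Phi_k$, where $A_k=\sum_{t\leq k}f(X_{t-1},X_t)$ is $\F_k$-measurable and $\Phi_k$ is the same functional of the shifted path. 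The Markov property gives $\PP(\Phi=0)=\E\,\PP_{X_k}(\Phi_k=-A_k)\leq\E\,a(X_k)$. Because $X_k\to\infty$ almost surely under transience and $a\leq1$, bounded convergence reduces the lemma to proving $a(x)\to0$ as $x\to\infty$.

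\emph{Smoothing by one intermediate state (main obstacle).} Start from $x$. Condition on the future $\mathcal G=\sigma(X_2,X_3,\ldots)$, writing $z=X_2$. Given $\mathcal G$, the term $\sum_{t\geq3}f(X_{t-1},X_t)$ is fixed. By the Markov property, $X_1$ has conditional law $\PP(X_1=y\mid\mathcal G)\propto P(x,y)P(y,z)$, where $P(\cdot,\cdot)$ denotes the one-step transition probabilities. Since $\Phi=h_z(X_1)+{}$(a $\mathcal G$-measurable term), with $h_z(y)=f(x,y)+f(y,z)$, we get
\[
 \PP_x(\Phi=c)\leq\E_x\Bigl[\sup_{c'}\PP\bigl(h_z(X_1)=c'\bigm|X_2=z\bigr)\Bigr].
\]
Clearing denominators, $h_z(y)=c'$ is a nonzero polynomial equation in $y$ of bounded degree (at most $4$), because the coefficient of $y^2$ in the numerator of $f(x,y)$ cannot cancel. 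So it has at most $4$ integer solutions, and the inner supremum is at most $4\max_y\PP(X_1=y\mid X_2=z)$.

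What remains is to show $\E_x\max_y\PP(X_1=y\mid X_2)\to0$ as $x\to\infty$, and I expect this to be the hardest step. I would first use the Kolmogorov--Rogozin concentration inequality to bound $\max_yP(x,y)\leq C x^{-1/2}$, since $X_1$ given $X_0=x$ is a sum of $x$ iid nondegenerate variables. I would then argue that the posterior is not much more concentrated than the prior on a set of $z$ of probability close to one. For $y$ in the bulk $|y-x|\leq K\sqrt x$, the ratio $P(y,z)/\PP_x(X_2=z)$ should be bounded, since $P(y,z)$ is a transition from a state of order $x$ and is flat over $y$-windows of size $\sqrt x$. The mass outside the bulk is handled by Chebyshev. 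If this local flatness proves delicate for lattice offspring laws, a fallback is to smooth through several intermediate states $X_1,\ldots,X_j$ and use only their sum. Combining these bounds gives $a(x)\to0$, hence \eqref{ad:transientnoatom}.
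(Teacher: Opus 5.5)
Your proposal is correct, and it reaches \eqref{ad:transientnoatom} by a genuinely different route from the paper.

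\textbf{The paper's route.} The paper conditions on the embedded jump chain $\Xi$, which makes the holding counts $K_j$ conditionally independent geometric variables. It then writes the functional as a jump part, fixed given $\Xi$, plus $\sum_j\phi_{b_0}(\Xi_j)K_j$. Here the diagonal value $\phi_{b_0}(x)=f(x,x)$ is eventually nonzero, and Kolmogorov--Rogozin is applied to infinitely many independent terms. That requires three inputs:
\begin{itemize}
\item a two-sided lattice local limit bound $\vartheta_x\asymp(1+x)^{-1/2}$ on holding probabilities;
\item the path-wise divergence $\sum_j\vartheta_{\Xi_j}=\infty$, deduced from $H_\infty=\infty$;
\item $\sum_t(1+X_t)^{-3/2}<\infty$, so that the holding-time series converges.
\end{itemize}

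\textbf{Your route.} You restart at a late, large state via the Markov property. All the randomness then comes from one bridge variable, $X_1$ given $(X_0,X_2)=(x,z)$, together with the off-diagonal algebra. After clearing denominators, $f(x,y)+f(y,z)=c'$ is a quartic in $y$ with leading coefficient one, coming from $(y-x-\mu)^2(1+y)^2$, so every level set has at most four points. This uses only $X_k\to\infty$ and concentration at a single large state. It needs no path-wise information about holding probabilities. Your martingale argument for $L_\infty^{\mathrm{var}}$ also replaces the citation of Winnicki's Lemma~2.11(c).

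\textbf{Closing the step you flag as hardest.} This step needs no local limit theorem. Note that
\[
 \E_x\max_y\PP(X_1=y\mid X_2)=\sum_z\max_yP(x,y)P(y,z).
\]
By Chebyshev, the $z$ with $|z-x|>M\sqrt x$ contribute at most $O(M^{-2})$. On the window, bound the maximum by the $\ell^2$ norm in $y$ and apply Cauchy--Schwarz over the at most $2M\sqrt x+1$ values of $z$:
\[
 \sum_{|z-x|\leq M\sqrt x}\max_yP(x,y)P(y,z)
 \leq(2M\sqrt x+1)^{1/2}
 \Bigl(\sum_yP(x,y)^2\max_zP(y,z)\Bigr)^{1/2}.
\]
The bracket is $O(x^{-1})$, using:
\begin{itemize}
\item the Kolmogorov--Rogozin bounds $\max_yP(x,y)\leq Cx^{-1/2}$, and $\max_zP(y,z)\leq Cy^{-1/2}$ for $y\geq x/2$;
\item the Chebyshev bound $\PP_x(X_1<x/2)=O(x^{-1})$.
\end{itemize}
So the window contributes $O(M^{1/2}x^{-1/4})$. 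Let $x\to\infty$, then $M\to\infty$.

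Your ratio argument also works, but it needs more. Making the bound on $P(y,z)/\PP_x(X_2=z)$ rigorous requires a lower local estimate $\PP_x(X_2=z)\geq c_Mx^{-1/2}$ on the reachable residue classes modulo the offspring span. That is where lattice analysis like the paper's Step~1 would re-enter.
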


\begin{proof}
The existence and stated bounds for $H_{\infty,2}$ and
$L_\infty^{\mathrm{var}}$ follow from
\citet[Lemma~2.4]{WeiWinnicki1990} and
\citet[Lemma~2.11(c)]{Winnicki1991}, respectively.

\textit{Step 1: Holding probabilities along the transient path.}
The occupation results of
\citet[Corollary~2.11, Theorem~2.12, and Lemma~2.13]{WeiWinnicki1989}
give, almost surely,
\[
 X_t\to\infty,\qquad
 H_\infty:=\sum_{t\geq0}\frac{1}{1+X_t}=\infty,\qquad
 \sum_{t\geq0}(1+X_t)^{-3/2}<\infty.
\]
Let $P(x,y)$ be the transition kernel and $\vartheta_x=P(x,x)$.
The nondegenerate offspring law has mean one, so $\PP(Z=0)>0$.
Its maximal lattice span is therefore
$d=\gcd\{k\geq1:\PP(Z=k)>0\}$, and $Z\in d\mathbb Z$.
Write
$\mathcal R=\{s\in\{0,\ldots,d-1\}:\PP(\epsilon\equiv s\pmod d)>0\}$.
Every state observed after time zero has residue in $\mathcal R$.
For independent copies $Z_1,\ldots,Z_x$ of $Z$, the lattice local
central limit theorem \citep{SzewczakWeber2023} gives
\[
 \sup_k\PP\!\left(\sum_{i=1}^xZ_i=k\right)\leq Cx^{-1/2},
 \qquad
 \sqrt{x}\,\PP\!\left(\sum_{i=1}^xZ_i=x-j\right)
 \longrightarrow\frac{d}{\sigma\sqrt{2\pi}}
\]
for each fixed integer $j\geq0$ in the second limit along $x\equiv j\pmod d$.
Convolution with the immigration law gives the same upper bound for
$\vartheta_x$. For each $s\in\mathcal R$, choose one immigration value $j_s$
with positive probability and residue $s$. Retaining that term in the
convolution gives the lower bound. Since $\mathcal R$ is finite,
there are constants $c_\vartheta,C_\vartheta>0$ such that
\begin{equation}
 \frac{c_\vartheta}{\sqrt{1+x}}\leq \vartheta_x\leq\frac{C_\vartheta}{\sqrt{1+x}}
 \quad\text{for all sufficiently large }x\text{ with }x\bmod d\in\mathcal R.
 \label{ad:holdingprobabilities}
\end{equation}
The upper bound extends to every $x\geq0$ after enlarging $C_\vartheta$.
Moreover, $\bar\vartheta:=\sup_x \vartheta_x<1$: the upper bound gives $\vartheta_x\to0$,
and $\Var(X_1\mid X_0=x)=\sigma^2x+b>0$ excludes $\vartheta_x=1$ at
each finite state.

\textit{Step 2: Conditioning on the jump chain.}
Define the successive departure times and the embedded jump chain by
\[
 T_0=0,\qquad T_{j+1}=\inf\{t>T_j:X_t\ne X_{T_j}\},\qquad
 \Xi_j=X_{T_j},\qquad K_j=T_{j+1}-T_j-1.
\]
These times are finite almost surely. Conditional on
$\mathcal G=\sigma(\Xi_j:j\geq0)$, the holding counts $K_j$ are
independent and
\[
 \PP(K_j=k\mid\mathcal G)=(1-\vartheta_{\Xi_j})\vartheta_{\Xi_j}^{k},
 \qquad k=0,1,\ldots.
\]
Indeed, for $y\ne x$, the factorization
$\vartheta_x^kP(x,y)=[(1-\vartheta_x)\vartheta_x^k][P(x,y)/(1-\vartheta_x)]$ and the Markov
property separate each holding count from the next departure state.
The occupation results imply $\Xi_j\to\infty$ and
$\sum_j(1+\Xi_j)^{-3/2}<\infty$ almost surely. Also,
\[
 \E(H_\infty\mid\mathcal G)
 =\sum_{j\geq0}\frac{1}{(1-\vartheta_{\Xi_j})(1+\Xi_j)}
 \leq\frac{1}{1-\bar\vartheta}\sum_{j\geq0}\frac{1}{1+\Xi_j}.
\]
Since $H_\infty=\infty$ almost surely, the sum on the right must
diverge almost surely. The lower bound in
\eqref{ad:holdingprobabilities} therefore yields
$\sum_j\vartheta_{\Xi_j}=\infty$ almost surely; only the initial jump-chain
state can have residue outside $\mathcal R$.

Fix $b_0\in\R$ and put
\[
 \Phi_{b_0}(x,y)=\frac{(y-x-\mu)^2-\sigma^2x-b_0}{(1+x)^2},\qquad
 \phi_{b_0}(x)=\Phi_{b_0}(x,x)=\frac{\mu^2-\sigma^2x-b_0}{(1+x)^2}.
\]
The convergent path sum satisfies
\begin{equation}
 D_{b_0}:=\lim_{n\to\infty}\sum_{t=1}^n\Phi_{b_0}(X_{t-1},X_t)
 =L_\infty^{\mathrm{var}}+(b-b_0)H_{\infty,2}
 =(B_\infty-b_0)H_{\infty,2}.
 \label{ad:atomfunctional}
\end{equation}
Here $\phi_{b_0}(x)$ is nonzero for all sufficiently large $x$, since
$\sigma^2>0$, and $|\phi_{b_0}(x)|\leq C_{b_0}(1+x)^{-1}$. Consequently,
\[
 \E\!\left(\left.\sum_{j\geq0}|\phi_{b_0}(\Xi_j)|K_j\right|\mathcal G\right)
 =\sum_{j\geq0}|\phi_{b_0}(\Xi_j)|\frac{\vartheta_{\Xi_j}}{1-\vartheta_{\Xi_j}}
 \leq C_{b_0}'\sum_{j\geq0}(1+\Xi_j)^{-3/2}<\infty
\]
almost surely. Thus the holding-time series converges absolutely,
conditionally almost surely. Evaluating the path sum at completed
departures gives
\begin{equation}
 D_{b_0}=D_{b_0}^{\mathrm{jump}}(\Xi)+\sum_{j\geq0}\phi_{b_0}(\Xi_j)K_j,\qquad
 D_{b_0}^{\mathrm{jump}}(\Xi)=\sum_{j\geq0}\Phi_{b_0}(\Xi_j,\Xi_{j+1}).
 \label{ad:holdingdecomposition}
\end{equation}
The departure series converges because both $D_{b_0}$ and the holding-time
series converge. Its value is deterministic conditional on $\mathcal G$.

\textit{Step 3: Eliminating point masses.}
For independent real variables $\mathcal V_1,\ldots,\mathcal V_N$, the
Kolmogorov--Rogozin inequality
\citep[Theorem~1.1]{Juskevicius2024} states that
\[
 \mathcal Q_\eta\!\left(\sum_{j=1}^N\mathcal V_j\right)
 \leq C\left\{\sum_{j=1}^N[1-\mathcal Q_\eta(\mathcal V_j)]\right\}^{-1/2},
 \qquad
 \mathcal Q_\eta(\mathcal V)=\sup_u\PP(\mathcal V\in(u,u+\eta]),\quad \eta>0.
\]
Fix a jump chain satisfying the preceding almost-sure properties, and
let $I_N=\{0\leq j\leq N:\phi_{b_0}(\Xi_j)\ne0\}$.
Choose $\eta$ smaller than every $|\phi_{b_0}(\Xi_j)|$ with $j\in I_N$.
The conditional geometric law gives
$\mathcal Q_\eta(\phi_{b_0}(\Xi_j)K_j\mid\mathcal G)=1-\vartheta_{\Xi_j}$.
Hence the maximal point mass of the corresponding finite sum is at
most $C(\sum_{j\in I_N}\vartheta_{\Xi_j})^{-1/2}$. Adding the independent
remainder of the absolutely convergent series cannot increase this
maximal point mass. Since only finitely many indices are omitted and
$\sum_j\vartheta_{\Xi_j}=\infty$, letting $N\to\infty$ in
\eqref{ad:holdingdecomposition} gives
$\PP(D_{b_0}=0\mid\mathcal G)=0$ almost surely.
Equation~\eqref{ad:atomfunctional} and $H_{\infty,2}>0$ now prove
\eqref{ad:transientnoatom}.
\end{proof}

\begin{proof}[Proof of Proposition~\ref{ad:transientoffset}]
\textit{Step 1: The variance pilot and the offset limit.}
Retain $x_t$ and $w_t$ from above, abbreviate
$H_n=H_n(1)$ and $M_n=M_n(1)$, and set
\[
 H_{n,2}=\sum_{t=1}^nw_t^2,\qquad
 M_{n,2}=\sum_{t=1}^nw_t^2W_t,\qquad
 L_n^{\mathrm{var}}=\sum_{t=1}^nw_t^2\{W_t^2-V(x_t)\}.
\]
Lemma~\ref{ad:variance-no-atoms} gives the almost-sure limits of
$H_{n,2}$ and $L_n^{\mathrm{var}}$.
By \citet[Lemma~2.10(c)]{Winnicki1991}, $M_{n,2}$ also converges almost
surely to a finite limit.

For the initial fit at offset one, write
$d_m^0=\widehat m_n(1)-1$ and
$d_\mu^0=\widehat\mu_n(1)-\mu$.
\citet[Lemma~2.4 and Theorem~2.5]{WeiWinnicki1990} give
\[
 H_n=\Op(\log n),\qquad
 d_m^0=\Op(n^{-1}),\qquad
 d_\mu^0=\Op((\log n)^{-1/2}).
\]
The intercept normal equation then gives
$M_n=H_nd_\mu^0+d_m^0(n-H_n)=\Op(\sqrt{\log n})$.
Since $e_t=W_t-d_m^0x_t-d_\mu^0$ and $w_t^2x_t=w_t-w_t^2$,
\begin{align*}
 \Delta_n^{\mathrm{res}}&:=\sum_{t=1}^nw_t^2(e_t^2-W_t^2)\\
 &=-2d_m^0(M_n-M_{n,2})-2d_\mu^0M_{n,2}
       +\sum_{t=1}^nw_t^2(d_m^0x_t+d_\mu^0)^2=\op(1).
\end{align*}
Here the last sum is bounded by
$2n(d_m^0)^2+2H_{n,2}(d_\mu^0)^2=\op(1)$.
The variance-regression normal equation gives
\begin{equation}
 (\widehat b_n-b)H_{n,2}
 =L_n^{\mathrm{var}}+\Delta_n^{\mathrm{res}}-(\widehat{\sigma^2}_n-\sigma^2)
                   (H_n-H_{n,2}).
 \label{ad:transientvariancenormal}
\end{equation}
The regression design is nonsingular with probability tending to one,
since its determinant divided by $n$ is
$H_{n,2}-H_n^2/n\to_p H_{\infty,2}>0$.
By \citet[Theorem~3.10]{Winnicki1991},
$\widehat{\sigma^2}_n-\sigma^2=\Op(n^{-1/2})$, so the last term
in \eqref{ad:transientvariancenormal} is $\op(1)$.
Dividing by $H_{n,2}$ proves
\begin{equation}
 \widehat b_n\to_p B_\infty.
 \label{ad:transientvariancelimit}
\end{equation}

The map defining \eqref{ad:offsetpilot} is continuous at
$(B_\infty,\sigma^2)$ whenever $B_\infty\ne0$.
Lemma~\ref{ad:variance-no-atoms} gives $\PP(B_\infty=0)=0$,
so the continuous mapping theorem yields
\begin{equation}
 \widehat a_n\to_p A_\infty:=
 \begin{cases}
 B_\infty/\sigma^2,&B_\infty>0,\\
 1,&B_\infty\leq0,
 \end{cases}
 \qquad 0<A_\infty<\infty\quad\text{almost surely}.
 \label{ad:transientoffsetlimit}
\end{equation}
In particular,
\begin{equation}
 \widehat a_n+\widehat a_n^{-1}=\Op(1).
 \label{ad:transientcompact}
\end{equation}

\textit{Step 2: Uniform comparison of scores and clocks.}
Put $c=\mu-\sigma^2/2>0$ and use $\widetilde Q_n(a)$ from
\eqref{eq:squared-score}.
The same results of Wei and Winnicki, together with the transient
residual argument in Appendix~\ref{app:residual-studentization}, give
\begin{equation}
 \frac{H_n(1)}{\log n}\to_p c^{-1},\qquad
 M_n(1)=\Op(\sqrt{\log n}),\qquad
 \frac{\widetilde Q_n(1)}{\log n}\to_p\frac{\sigma^2}{c}.
 \label{ad:transientbaseline}
\end{equation}
For a fixed compact interval $I\subset(0,\infty)$, uniformly in
$a\in I$ and $x\geq0$,
\[
 |w_a(x)-w_1(x)|\leq C_I(1+x)^{-2},\qquad
 |w_a(x)^2-w_1(x)^2|\leq C_I(1+x)^{-3}.
\]
The summability of $(1+x_t)^{-q}$ for every $q>1$, and the bounds
\[
 \E(|W_t|\mid\F_{t-1})\leq C(1+x_t)^{1/2},\qquad
 \E(W_t^2\mid\F_{t-1})\leq C(1+x_t),
\]
show by conditional summability that
\[
 \sum_{t\geq1}\frac{|W_t|}{(1+x_t)^2}<\infty,\qquad
 \sum_{t\geq1}\frac{W_t^2}{(1+x_t)^3}<\infty
 \quad\text{almost surely}.
\]
Consequently, the differences between $H_n(a)$ and $H_n(1)$,
$M_n(a)$ and $M_n(1)$, and $\widetilde Q_n(a)$ and $\widetilde Q_n(1)$ are bounded
almost surely, uniformly over $n$ and $a\in I$.
The same is true of $\sum_tw_a(x_t)^2$.
By \eqref{ad:transientcompact}, a compact $I$ can be chosen to contain
$\widehat a_n$ with arbitrarily high limiting probability. Thus
\begin{align}
 H_n(\widehat a_n)-H_n(1)&=\Op(1),&
 M_n(\widehat a_n)-M_n(1)&=\Op(1),\label{ad:transientscorecompare}\\
 \widetilde Q_n(\widehat a_n)-\widetilde Q_n(1)&=\Op(1),&
 \sum_tw_{\widehat a_n}(x_t)^2&=\Op(1).
 \label{ad:transientquadraticcompare}
\end{align}

\textit{Step 3: Equivalence of the drift estimators.}
Substituting \eqref{eq:critical-functional-inputs} into
\eqref{eq:slope-normal}, together with
\eqref{ad:transientbaseline}--\eqref{ad:transientscorecompare},
shows that the design at $\widehat a_n$ is nonsingular with
probability tending to one and that
$\widehat m_n(\widehat a_n)-1=\Op(n^{-1})$.
The two score ratios differ by
\[
 \frac{M_n(\widehat a_n)}{H_n(\widehat a_n)}
 -\frac{M_n(1)}{H_n(1)}=\Op((\log n)^{-1}).
\]
Both slope contributions in \eqref{eq:reduction} are
$\Op((\log n)^{-1})$. This gives the first limit below; combining
it with \eqref{eq:transient} gives the second:
\begin{align}
 \sqrt{\log n}\{\widehat\mu_n(\widehat a_n)
                   -\widehat\mu_n(1)\}&\to_p0,
 \label{ad:transientequivalence}\\
 \sqrt{\log n}\{\widehat\mu_n(\widehat a_n)-\mu\}
 &\Rightarrow
 N\!\left(0,\sigma^2\left(\mu-\frac{\sigma^2}{2}\right)\right).
 \label{ad:transientnormal}
\end{align}

\textit{Step 4: Residual studentization.}
Put $d_m=\widehat m_n(\widehat a_n)-1=\Op(n^{-1})$ and
$d_\mu=\widehat\mu_n(\widehat a_n)-\mu=\Op((\log n)^{-1/2})$.
By \eqref{ad:transientquadraticcompare},
\[
 \mathcal E_n(\widehat a_n)=\sum_tw_{\widehat a_n}(x_t)^2(d_mx_t+d_\mu)^2
 \leq2nd_m^2+2d_\mu^2\sum_tw_{\widehat a_n}(x_t)^2
 =\Op((\log n)^{-1}).
\]
Cauchy--Schwarz gives
\[
 |\widehat Q_n(\widehat a_n)-\widetilde Q_n(\widehat a_n)|
 \leq2\sqrt{\widetilde Q_n(\widehat a_n)\mathcal E_n(\widehat a_n)}
       +\mathcal E_n(\widehat a_n)=\Op(1).
\]
Thus $\widehat Q_n(\widehat a_n)/\log n\to_p\sigma^2/c$.
Together with \eqref{ad:transientbaseline}--\eqref{ad:transientscorecompare}
and \eqref{ad:transientnormal}, this proves \eqref{ad:transientstudentized}.
The constrained estimator differs from the joint estimator by
$\Op((\log n)^{-1})$; setting $d_m=0$ above proves its residual pivot.
\end{proof}

\end{document}